\newtheorem{theorem}{Theorem}
\newtheorem{lemma}{Lemma} 
\newtheorem*{lemma*}{Lemma} 
\newtheorem*{proposition*}{Proposition} 
\newtheorem*{theorem*}{Theorem} 
\newtheorem{proposition}[theorem]{Proposition}
\newtheorem{corollary}{Corollary}
\newtheorem{definition}{Definition}
\theoremstyle{remark}
\newtheorem{example}{Example}
\newcommand{\ignore}[1]{}
\newcommand{\cpt}{\ensuremath{\operatorname{cpt}}}
\newcommand{\negskel}{\ensuremath{\operatorname{neg}}}
\newcommand{\tuple}[1]{\ensuremath{\langle #1 \rangle}}
\thicklines \setlength{\unitlength}{1pt}
\newsavebox{\vartwo}
\newsavebox{\varone}
\thicklines \setlength{\unitlength}{1pt}
\newsavebox{\vartwobig}
\newsavebox{\varonebig}
\date{}
\begin{document}

\title{Variable and Value Elimination in Binary Constraint Satisfaction via
Forbidden Patterns\thanks{A preliminary version of part of this work appeared in
\emph{Proceedings of the 23rd International Joint Conference on Artificial
Intelligence (IJCAI)}, 2013. Martin Cooper and Guillaume Escamocher were
supported by ANR Project ANR-10-BLAN-0210. Stanislav \v{Z}ivn\'y was supported
by a Royal Society University Research Fellowship. David Cohen, Martin Cooper
and  Stanislav \v{Z}ivn\'y were supported by EPSRC grant  EP/L021226/1.}}

\author{
David A. Cohen\\
Department of Computer Science\\
Royal Holloway, University of London\\
UK\\
\texttt{dave@cs.rhul.ac.uk}
\and
Martin C. Cooper\\
IRIT, University of Toulouse III\\
31062 Toulouse\\
France\\
\texttt{cooper@irit.fr}
\and
Guillaume Escamocher\\
Insight Centre for Data Analytics\\
University College Cork\\
Ireland\\
\texttt{guillaume.escamocher@insight-centre.org}
\and
Stanislav \v{Z}ivn\'y\\
Department of Computer Science\\
University of Oxford\\
UK\\
\texttt{standa@cs.ox.ac.uk}
}

\maketitle

\begin{abstract}

Variable or value elimination in a constraint satisfaction problem
(CSP) can be used in preprocessing or during search to reduce search
space size. A variable elimination rule (value elimination rule)
allows the polynomial-time identification of certain variables
(domain elements) whose elimination, without the introduction of
extra compensatory constraints, does not affect the satisfiability of
an instance. We show that there are essentially just four variable
elimination rules and three value elimination rules defined by
forbidding generic sub-instances, known as irreducible existential
patterns, in arc-consistent CSP instances. One of the variable
elimination rules is the already-known Broken Triangle Property,
whereas the other three are novel. The three value elimination rules
can all be seen as strict generalisations of neighbourhood
substitution.

\end{abstract}


\section{Introduction}

Constraint satisfaction has proved to be a useful modelling tool in a
variety of contexts, such as scheduling, timetabling, planning,
bio-informatics and computer
vision~\cite{Dechter03:book,Rossi06:handbook,Lecoutre2009}. Dedicated
solvers for constraint satisfaction are at the heart of the
programming paradigm known as constraint programming. Theoretical
advances on CSPs can thus potentially lead to the improvement of
generic combinatorial problem solvers.

In the CSP model we have a number of variables, each of which can
take values from its particular finite domain. Certain sets of the
variables are constrained in that their simultaneous assignments of
values is limited. The generic problem in which these sets of
variables, known as the constraint scopes, are all of cardinality at
most two, is known as binary constraint satisfaction. We are required
to assign values to all variables so that every constraint is
satisfied. Complete solution algorithms for constraint satisfaction
are not polynomial time unless P=NP, since the graph colouring
problem, which is NP-complete, can be reduced to binary constraint
satisfaction~\cite{Dechter03:book}. Hence we need to find ways to
reduce the search space.

Search algorithms for constraint problems usually proceed by
transforming the instance into a set of subproblems, for example, by
selecting a variable and assigning to it successively each value from
its domain.  This naive backtracking approach is recursive and
explores the search tree of partial assignments in a depth first
manner. Even though the backtracking algorithm can take exponential
time it is often effective in practice thanks to intelligent pruning
techniques.

There are many ways to improve naive backtracking by pruning the
search space in ways that cannot remove solutions.  This is done by
avoiding searching exhaustively in all generated subproblems when
certain kinds of discovered obstruction to solution exists.  Such
techniques include Back-marking, Back-jumping, Conflict-Directed
Back-jumping ~\cite{Prosser93:hybrid,DBLP:journals/jair/ChenB01}. As
well as these look-back techniques it is also possible to look ahead
by propagating the consequences of early decisions or of the
discovered structure.  Of these look-ahead techniques the most common
is to maintain the local consistency property called generalised
arc-consistency (GAC)~\cite{Bessiere:AC}.  This technique identifies
certain values for variables that cannot possibly form part of a
solution.

Of course, savings can also be made if we are able to eliminate
variables from a sub-problem.  Since backtracking is of exponential
time complexity, the elimination of variables and values (domain
elements) to reduce instance size can in the best case reduce search
time by an exponential factor. To maintain the soundness of search we
require that such eliminations do not change the satisfiability of
the instance. Invariance of satisfiability, which we study in the
present paper, is a weaker property than the invariance of the set of
solutions guaranteed by consistency techniques such as GAC. However,
detection of non-satisfiability is the essential role of look-ahead
techniques, since this allows pruning during search. Thus
satisfiability-preserving reduction techniques (which do not
necessarily preserve solutions) may prove useful even when the aim is
to discover one or all solutions. In fact, we show that all the
techniques presented in this paper, although they do not preserve
solutions, allow a solution to the original instance to be
reconstructed very efficiently.

\subsection{Simplification by variable and value elimination}
\label{subsec:vvelim}
We consider an instance $I$ of the CSP viewed
as a decision problem. Suppose that $x$ is a variable of $I$ and
that, whenever there is some valid assignment to all variables except
$x$, there is a solution to the whole instance; in this case, we can
safely remove variable $x$ from $I$. One of the questions we address
in this paper is how to identify such variables?

Variable elimination has been considered before in the literature. It
is well known that in an arc-consistent binary CSP instance, a
variable $x$ which is constrained by only one other variable $y$ can be
eliminated; by the definition of arc consistency, each assignment to
$y$ is compatible with some assignment to $x$. It has been observed
that a more general property, called the (local) Broken Triangle
Property (lBTP)~\cite{Cooper10:BTP}, if it holds at some variable,
allows us to eliminate that variable. One way of stating the lBTP is
that there is no pair of compatible assignments to two other
variables $y,z$ which have opposite compatibilities with two
assignments to $x$. The closure of a binary CSP instance under the
elimination of all variables that satisfy the lBTP is unique and can
be found in $O(ncd^{3})$ time, where $n$ is the number of variables,
$c$ the number of constraints and $d$ the maximum domain size, which
may well prove effective when compared to the exponential cost of
backtracking.
The more general
local min-of-max extendable property (lMME) allows us to eliminate
more variables than the lBTP, but requires the identification of a
particular domain order. Unfortunately, this domain order is NP-hard
to discover~\cite{Cooper10:BTP} for unbounded domain size, and so the
lMME is less likely to be effective in practice.

An alternative to simple variable elimination is used in Bucket
Elimination~\cite{Larrosa03:boosting}.  In this algorithm a variable
$v$ is not simply eliminated.  Instead it is replaced by a constraint
on its neighbourhood (the set of variables constrained by $v$).  This
new constraint precisely captures those combinations of assignments
to the neighbourhood of $v$ which can be extended to a consistent
assignment to $v$. Such an approach may generate high-order
constraints, which are exponentially hard to process and to store.
The arity can be bounded by the induced treewidth of the instance,
but this still limits the applicability of Bucket Elimination. In the
present paper we restrict our attention to the identification of
variable elimination strategies which do not require the addition of
compensatory constraints.

The elimination of domain elements is an essential component of
constraint solvers via generalised arc consistency (GAC) operations.
GAC eliminates domain elements that cannot be part of any solution,
thus conserving all solutions.
An alternative approach is the family of elimination rules based on
substitution: if all solutions in which variable $v$ is assigned
value $b$ remain solutions when the value of variable $v$ is changed
to another value $a$, then the value $b$ can be eliminated from the
domain of variable $v$ while conserving at least one solution (if the
instance is satisfiable). The most well-known polynomial-time
detectable substitution operation is neighbourhood
substitution~\cite{Freuder91:interchangeable}. The value elimination
rules described in this paper go beyond the paradigms of consistency
and substitution; we only require that the instance obtained after
elimination of a value from a domain has the same satisfiability as
the original instance.

We study rules for simplifying binary CSP instances based on properties
of the instance at the microstructure level. The term microstructure was first
given a formal definition by J\'{e}gou~\cite{Jegou93:microstructure}: 
if $I$ is a binary CSP instance, then its
\emph{microstructure} is a graph $\tuple{A,E}$ where $A$ is the set of possible 
variable-value assignments and $E$ is the set of pairs of compatible
variable-value assignments. Solutions to $I$ are in one-to-one
correspondence with the $n$-cliques of the microstructure of $I$ and
with the size-$n$ independent sets of the microstructure complement
of $I$. The \emph{chromatic number} of a graph is the smallest number of
colours required to colour its vertices so that no two adjacent
vertices have the same colour. A graph $G$ is \emph{perfect} if for
every induced subgraph $H$ of $G$, the chromatic number of $H$ is
equal to the size of the largest clique contained in $H$. Since a
maximum clique in a perfect graph can be found in polynomial
time~\cite{Grotschel1981}, the class of binary CSP instances with a
perfect microstructure is tractable~\cite{Salamon2008:perfect}.
Perfect graphs can also be recognized in polynomial
time~\cite{Cornuejols03:perfect}. An instance of the minimum-cost homomorphism problem
(MinHom) is a CSP instance in which weights are associated with 
each variable-value assignment and the aim is to find a solution which
minimises the sum of the weights. Takhanov~\cite{Takhanov10:dichotomy} gave a dichotomy
for tractable conservative constraint languages for MinHom which
uses the fact that an instance of binary MinHom can be solved
in polynomial time if its microstructure is perfect.
El Mouelhi et al.~\cite{DBLP:conf/cpaior/MouelhiJTZ13} 
make the observation that if the microstructure has a bounded
number of maximal cliques then the instance will be solved in
polynomial time by classical algorithms such as Forward Checking
or Really Full Lookahead and hence by CSP solvers.

Simple rules for variable or value elimination based on properties of
the microstructure are used by Beigel and Eppstein~\cite{DBLP:conf/focs/BeigelE95} in their
algorithms with low worst-case time bounds for such NP-complete problems
as 3-COLOURING and 3SAT. Such simplification operations are an essential first step
before the use of decompositions into subproblems with smaller domains.
A similar approach allows Angelsmark and Thapper~\cite{DBLP:conf/flairs/AngelsmarkT05}
to reduce the problem of finding a minimum weighted independent set
in the microstructure complement to the problem of counting the number
of solutions to a 2SAT instance.
Thus, the variable and value elimination rules we present in this paper may find 
not only practical applications in solvers but also theoretical applications.

\subsection{Our contribution}
In this paper we characterise those local conditions under which we
can eliminate variables or values in binary CSPs while preserving
satisfiability of the instance, without the need to add compensating
constraints. By local conditions we mean configurations of variables,
values and constraints which \textit{do not} occur.  That is, we will
identify (local) obstructions to variable or value elimination.  We
will call such constructions variable elimination or value
elimination patterns.

Surprisingly we find that there are precisely four (three)
essentially different local patterns whose absence permits variable
(value) elimination. Searching for these local patterns takes
polynomial time and need only be done during the pre-processing
stage, before search. Any discovered obstructions to elimination can
be effectively monitored during subsequent search using techniques
analogous to watched literals~\cite{Gent06:watched}. Whenever a
variable (value) no longer participates in any obstruction patterns
it can safely be eliminated.

We show that after a sequence of variable and value eliminations it
is always possible to reconstruct a solution to the original instance
from a solution to the reduced instance in low-order polynomial time.

\section{Definitions}
When certain kinds of local obstructions are not present in a binary
CSP instance, variable or value elimination is possible. Such
obstructions are called quantified patterns. A pattern can be seen as
a generalisation of the concept of a constraint satisfaction instance
that leaves the consistency of some assignments to pairs of variables
undefined.

\begin{definition}
A \emph{pattern} is a four-tuple \tuple{X, D, A, \cpt} where:
\begin{itemize}
\item $X$ is a finite set of \emph{variables};
\item $D$ is a finite set of \emph{values};
\item $A \subseteq X \times D$ is the set of possible \emph{assignments};
The \emph{domain} of $v \in X$ is its non-empty set $\mathcal{D}(v)$
of possible  values: $\mathcal{D}(v) = \{a \in D \mid \tuple{v,a} \in
A\}$; and
\item $\cpt$ is a partial \emph{compatibility function} from the set
of unordered pairs of assignments $\{\{\tuple{v,a},\tuple{w,b}\} \mid
v \neq w\}$ to $\{{\tt TRUE},{\tt FALSE}\}$; if
$\cpt(\tuple{v,a},\tuple{w,b})$ $=$ {\tt TRUE} (resp., {\tt FALSE})
we say that \tuple{v,a} and \tuple{w,b} are \emph{compatible} (resp.,
\emph{incompatible}).
\end{itemize}

A \emph{quantified pattern} is a pattern $P$ with a distinguished
variable, $\overline{v}(P)$ and a subset of existential values $e(P)
\subseteq \mathcal{D}(\overline{v}(P))$.

A \emph{flat quantified pattern} is a quantified pattern for which
$e(P)$ is empty. An \emph{existential pattern} is a quantified
pattern $P$ for which $e(P)$ is non-empty. An existential pattern $P$
may also have a distinguished value $\overline{val}(P) \in e(P)$.
\end{definition}

When the context variable $v$ is clear we use the value $a$ to denote
the assignment \tuple{v,a} to $v$.  We will often simplify notation
by writing $\cpt(p,q)$ for $\cpt(\{p,q\})$. We will also use the
terminology of graph theory, since a pattern can be viewed as a
labelled graph: if $\cpt(p,q) =$ {\tt TRUE} (resp., {\tt FALSE}),
then we say that there is a \emph{compatibility} (resp.,
\emph{incompatibility}) \emph{edge} between $p$ and $q$.

We will use a simple figurative drawing for patterns.  Each variable
will be drawn as an oval containing dots for each of its possible
assignments. Pairs in the domain of the function \cpt\ will be
represented by lines between values: solid lines for compatibility
and dashed lines for incompatibility.  The distinguished variable
($\overline{v}(P)$) and any existential values in $e(P)$ will be
indicated by an $\exists$ symbol. Examples of patterns are shown in
Figure~\ref{fig:Examplepatterns} and Figure~\ref{fig:VEpatterns}.

We are never interested in the names of variables nor the names of
the domain values in patterns. So we define the following
equivalence.
\begin{definition}
Two patterns $P$ and $Q$ are \emph{equivalent} if they are
isomorphic, {\it i.e.} if they are identical except for possible
injective renamings of variables and assignments which preserve
$\mathcal{D}$, \cpt, $\overline{v}$, $e$ and $\overline{val}$.
\end{definition}

A pattern can be viewed as a CSP instance in which not all
compatibilities are defined. We can thus refine patterns to give a
definition of a (binary) CSP instance.

\begin{definition}
A \emph{binary CSP instance} $P$ is a pattern \tuple{X, D, A, \cpt}
where \cpt\ is a total function, {\it i.e.} the domain of \cpt\ is
precisely $\{\{\tuple{v,a},\tuple{w,b}\} \mid v \neq w$, $a \in
\mathcal{D}(v)$, $b \in \mathcal{D}(w)\}$.
\begin{itemize}
\item The \emph{relation} $R_{v,w} \subseteq \mathcal{D}(v) \times \mathcal{D}(w)$
on $\tuple{v,w}$ is $\{\tuple{a,b}\mid \cpt(\tuple{v,a},\tuple{w,b})
= {\tt TRUE}\}$.

\item A \emph{partial solution} to $P$ on $Y \subseteq X$ is a
mapping $s: Y \to D$ where, for all $v \neq w \in Y$ we have
$\tuple{s(v), s(w)} \in R_{v,w}$.

\item A \emph{solution} to $P$ is a partial solution on $X$.
\end{itemize}
\end{definition}

For notational simplicity we have assumed that there is exactly one binary
constraint between each pair of variables. In particular, this means
that the absence of a constraint between variables $v,w$ is modelled
by a complete relation $R_{v,w} = \mathcal{D}(v) \times
\mathcal{D}(w)$ allowing every possible pair of assignments to $v$
and $w$. We say that there is a \emph{non-trivial} constraint on
variables $v,w$ if $R_{v,w} \neq \mathcal{D}(v) \times
\mathcal{D}(w)$.

In practice, when solving CSP instances we prune the domains of
variables in such a way as to maintain all solutions.
\begin{definition}
Let $P = \tuple{X,D,A,\cpt}$ be a CSP instance. An assignment
$\tuple{v,a} \in A$ to variable $v$ is called \emph{arc consistent}
if, for all  variables $w \neq v$ in $X$ there is some assignment
$\tuple{w,b} \in A$ compatible  with \tuple{v,a}.

The CSP instance \tuple{X,D,A,\cpt} is called \emph{arc consistent}
if every assignment in $A$ is arc consistent.
\end{definition}

Assignments that are not arc-consistent cannot be part of a solution
so can safely be removed. There are optimal $O(cd^2)$ algorithms for
establishing arc consistency which repeatedly remove such
values~\cite{Bessiere:AC}, where $c$ is the number of non-trivial
constraints and $d$ the maximum domain size. Hence, for the remainder
of this paper we will assume that all CSP instances are
arc-consistent.

In this paper we are concerned with variable elimination
characterised by forbidden patterns. We now define what this means.

\begin{definition} \label{def:varelim}
We say that a variable $x$ can be \emph{eliminated} in the CSP
instance \tuple{X, D, A, \cpt} if, whenever there is a partial
solution on $X \setminus \{x\}$ there is a solution.
\end{definition}

In order to use (the absence of) patterns for variable elimination we
need to define what we mean when we say that a quantified pattern
occurs at variable $x$ of a CSP instance.  We define occurrence in
terms of reductions on patterns. The definitions of occurrence and
reduction between quantified patterns extend definitions previously
given for non-quantified patterns~\cite{Cooper12:patterns}.

\begin{definition} \label{def:reductions}
Let $P =\tuple{X,D,A,\cpt}$ be any pattern.
\begin{itemize}
\item We say that a pattern $P' =\tuple{X',D',A',\cpt'}$ is a
\emph{sub-pattern} of $P$ if $X' \subseteq X, A' \subseteq A$ and
$\forall p,q \in A'$, either $\cpt'(p,q) = \cpt(p,q)$ or $\cpt'(p,q)$ is undefined.

If, furthermore, $P'$ is quantified then we require that $P$ is
quantified and that $\overline{v}(P') = \overline{v}(P)$ and $e(P')
\subseteq e(P)$. If $P'$ has a distinguished value then we require
that $P$ also has a distinguished value and that $\overline{val}(P')
= \overline{val}(P)$.

\item Values $a,b \in \mathcal{D}(v)$ are \emph{mergeable} in a pattern
if there is no assignment $p \in A$ for which $\cpt(\tuple{v,a},p)$,
$\cpt(\tuple{v,b},p)$ are both defined and $\cpt(\tuple{v,a},p) \neq
\cpt(\tuple{v,b},p)$.
In a quantified pattern, for $a$ to be merged into $b$, we also
require that $a \in e(P)$ only if $b \in e(P)$.

When $a,b \in \mathcal{D}(v)$ are mergeable we define the merge
reduction $\tuple{X, D,  A\setminus\{\tuple{v,a}\},\cpt'}$, in which
$a$ is merged into $b$, by the following compatibility function:
\begin{equation*}
\cpt'(p,q) =
\begin{cases}
\cpt(\tuple{v,a},q) & \text{if  } p = \tuple{v,b} \text{  and  } 
\text{$\cpt(p,q)$ undefined,}\\
\cpt(p,q) & \text{otherwise.}
\end{cases}
\end{equation*}

\item A \emph{dangling assignment} $p$ of $P$ is any assignment for which
there is at most one assignment $q$ for which $\cpt(p,q)$ is defined,
and furthermore (if defined) $\cpt(p,q)= {\tt TRUE}$.
If $P$ is quantified, then we also require that
$p \notin \overline{v}(P) \times e(P)$. For any
dangling assignment $p$, we define the dangling reduction
$\tuple{X,D,A',\cpt\restriction_{A' \times A'}}$ where $A'=A \setminus \{p\}$.

\item A \emph{reduction} of a pattern
$P$ is a pattern obtained from $P$ by a sequence of merge and
dangling reductions. An \emph{irreducible pattern} is one on which no
merge or dangling reductions can be performed.
\end{itemize}
\end{definition}

To illustrate the notions introduced in Definition~\ref{def:reductions}, 
consider the patterns in Figure~\ref{fig:Examplepatterns}.
Pattern $P_1$ is a sub-pattern of $P_2$ which is itself a sub-pattern of $P_3$.
In pattern $P_2$, the values $a,b \in \mathcal{D}(x)$ are mergeable: merging
$a$ into $b$ produces the pattern $P_4$. In the pattern $P_3$, the values 
 $a,b \in \mathcal{D}(x)$ are not mergeable since 
$\cpt(\tuple{x,a},\tuple{z,d})$ and $\cpt(\tuple{x,b},\tuple{z,d})$ are both
defined but are not equal.
In pattern $P_2$, $\tuple{x,a}$ is a dangling assignment: applying the dangling reduction to
this assignment in $P_2$ produces $P_1$. Let $P_2'$ be identical to $P_2$ except that
$P_2'$ is a quantified pattern with $\overline{v}(P_2') = \{x\}$ and $e(P_2') = \{a\}$.
Then $P_2$ is a sub-pattern of $P_2'$, but $P_2'$ is not a sub-pattern of $P_2$.
In the domain of $x$ in $P_2'$, $b$ can be merged into $a$ but $a$ cannot be merged
into $b$ since $a \in e(P_2')$ but $b \notin e(P_2')$. Furthermore, the assignment
$\tuple{x,a}$ is not a dangling assignment in $P_2'$ since $a$ is an existential value
for $\overline{v}(P_2') = x$.

\setlength{\unitlength}{1pt}
\begin{figure}
\centering

\begin{picture}(300,240)(0,0)

\put(0,130){
\begin{picture}(140,120)(0,-10)
\put(10,40){\usebox{\varone}} \put(50,0){\usebox{\varone}}
\put(90,40){\usebox{\varone}}
\dashline{5}(20,60)(100,60) \dashline{5}(20,60)(60,20)
\put(100,40){\makebox(0,0){$x$}}
 \put(20,40){\makebox(0,0){$y$}}  \put(60,0){\makebox(0,0){$z$}}
\put(60,80){\makebox(0,0){$P_1$}}   \put(115,60){\makebox(0,0){$b$}}
  \put(5,60){\makebox(0,0){$c$}}
 \put(45,20){\makebox(0,0){$d$}}
\end{picture}}

\put(150,130){
\begin{picture}(140,120)(0,-10)
\put(10,40){\usebox{\varone}} \put(50,0){\usebox{\varone}}
\put(90,40){\usebox{\vartwo}} \put(60,20){\line(4,3){40}}
\dashline{5}(20,60)(60,20)
\dashline{5}(20,60)(100,70) \put(100,35){\makebox(0,0){$x$}}
 \put(20,40){\makebox(0,0){$y$}}  \put(60,0){\makebox(0,0){$z$}}
\put(60,80){\makebox(0,0){$P_2$}}   \put(115,50){\makebox(0,0){$a$}}
 \put(115,70){\makebox(0,0){$b$}}  \put(5,60){\makebox(0,0){$c$}}
 \put(45,20){\makebox(0,0){$d$}}
\end{picture}}

\put(0,0){
\begin{picture}(140,120)(0,-10)
\put(10,40){\usebox{\varone}} \put(50,0){\usebox{\varone}}
\put(90,40){\usebox{\vartwo}} \put(60,20){\line(4,3){40}}
\dashline{5}(20,60)(60,20) \dashline{5}(100,70)(60,20)
\dashline{5}(20,60)(100,70) \put(100,35){\makebox(0,0){$x$}}
 \put(20,40){\makebox(0,0){$y$}}  \put(60,0){\makebox(0,0){$z$}}
\put(60,80){\makebox(0,0){$P_3$}}   \put(115,50){\makebox(0,0){$a$}}
 \put(115,70){\makebox(0,0){$b$}}  \put(5,60){\makebox(0,0){$c$}}
 \put(45,20){\makebox(0,0){$d$}}
\end{picture}}

\put(150,0){
\begin{picture}(140,120)(0,-10)
 \put(60,20){\line(1,1){40}}
\put(10,40){\usebox{\varone}} \put(50,0){\usebox{\varone}}
\put(90,40){\usebox{\varone}}
\dashline{5}(20,60)(100,60) \dashline{5}(20,60)(60,20)
\put(100,40){\makebox(0,0){$x$}}
 \put(20,40){\makebox(0,0){$y$}}  \put(60,0){\makebox(0,0){$z$}}
\put(60,80){\makebox(0,0){$P_4$}}   \put(115,60){\makebox(0,0){$b$}}
  \put(5,60){\makebox(0,0){$c$}}
 \put(45,20){\makebox(0,0){$d$}}
\end{picture}}

\end{picture}

\caption{Examples illustrating the notions of sub-pattern, merging and dangling assignment.}

\label{fig:Examplepatterns}

\end{figure}
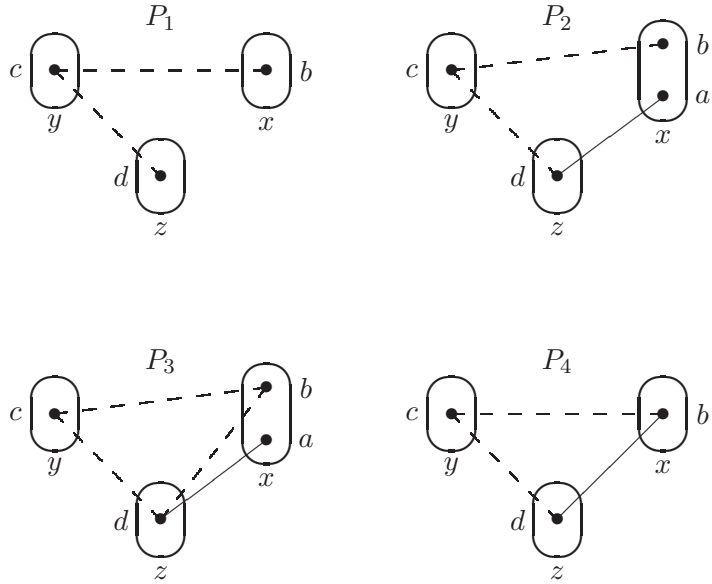
\setlength{\unitlength}{1pt}

Now we want to define when a quantified pattern occurs at a variable
in a CSP instance, in order to characterise those patterns whose
\emph{non-occurrence} allows this particular variable to be
eliminated. We define the slightly more general notion of occurrence
of a pattern in another pattern. Recall that a CSP instance
corresponds to the special case of a pattern whose compatibility
function is total. Essentially we want to say that pattern $P$ occurs
in pattern $Q$ if $P$ is homomorphic to a sub-pattern of $Q$ via an
injective renaming of variables and a (possibly non-injective)
renaming of assignments~\cite{Cohen12:pivot}. However, we find it
simpler to define occurrence using the notions of sub-pattern,
reduction and equivalence. We first make the observation that
dangling assignments in a pattern provide no useful information since
we assume that all CSP instances are arc consistent, which explains
why dangling assignments can be eliminated from patterns.

We can then define occurrence in terms of reduced patterns.

\begin{definition}\label{def:occurs}
We say that a pattern $P$ \emph{occurs} in a pattern $Q$
(and that $Q$ \emph{contains} $P$) if some reduction of $P$ is
equivalent to a sub-pattern of $Q$.

If $Q$ is a CSP instance, then the quantified pattern $P$ 
\emph{occurs at variable $x$ of
$Q$} if some reduction of $P$ is equivalent to a sub-pattern of $Q$
and $x$ is the variable of the sub-pattern of $Q$ corresponding to
$\overline{v}(P)$.

We say that the quantified pattern $P$ 
\emph{occurs at variable $x$ of $Q$ with value
mapping} $m:e(P) \rightarrow \mathcal{D}(x)$ if the values of
variable $x$ corresponding to each $a \in e(P)$ are given by the
mapping $m$.
\end{definition}

A variable elimination pattern is defined in terms of occurrence of a
pattern in a CSP instance.

\begin{definition} \label{def:VE}
A quantified pattern is a \emph{variable elimination pattern
(var-elim pattern)} if, whenever the pattern does not occur at a
variable $x$ in an arc-consistent CSP instance $I$ for at least one
injective value mapping, $x$ can be eliminated in $I$ (in the sense
of Definition~\ref{def:varelim}).

A non-quantified pattern (i.e. a pattern without a distinguished
variable) is a var-elim pattern if, whenever the pattern does not
occur in an arc-consistent CSP instance, \emph{any} variable can be
eliminated in $I$.
\end{definition}

The notion of non-quantified var-elim patterns is necessary for some
of our proofs, but for practical applications we are interested in
finding quantified (and, in particular, existential) var-elim
patterns. Existential patterns may allow more variables to be
eliminated than flat quantified patterns. For example, as we will
show later, the patterns snake and $\exists$snake shown in
Figure~\ref{fig:VEpatterns} are both var-elim patterns, but the
latter allows more variables to be eliminated since we only require
that it does not occur on a single value in the domain of the
variable to be eliminated.

\begin{example}
Suppose that we can assign value 0 to a subset $S$ of the
variables of an instance, without restricting the assignments to any other
variables. Furthermore suppose that, within $S$, 0 is only compatible
with 0. The var-elim pattern $\exists$invsubBTP, shown in
Figure~\ref{fig:VEpatterns}, allows us to eliminate all variables in
$S$, without having to explicitly search for $S$. This is because the
pattern does not occur for the mapping $a \mapsto 0$. The flat
variant (invsubBTP) would not allow these eliminations.
\end{example}

We conclude this section with the simple observation that var-elim
patterns define tractable classes.  It takes polynomial time to
establish arc consistency and to detect (by exhaustive search) the
non-occurrence of a var-elim pattern. Hence it takes polynomial time
to identify arc-consistent CSP instances for which all variables can
be eliminated one by one by a var-elim pattern $P$. Such instances
are solvable in a greedy fashion.

Hence we are able to significantly extend the list of known tractable
classes defined by forbidden patterns since among known tractable
patterns, namely BTP~\cite{Cooper10:BTP}, 2-constraint
patterns~\cite{Cooper12:patterns}, pivots~\cite{Cohen12:pivot} and
JWP~\cite{Cooper12:tractable}, \textit{only} BTP (and its sub-patterns) allow
variable elimination.

Indeed, a general hybrid tractable class can be defined: the set of
binary CSP instances which fall in some known tractable class after
we have performed all variable (and value) eliminations defined by
the rules given in this paper.

\section{Variable elimination by forbidden patterns}
\label{sec:VEpatterns}

In this paper we characterise irreducible var-elim patterns. There
are essentially just four (together with their irreducible
sub-patterns): the patterns BTP, $\exists$subBTP,
$\exists$invsubBTP and $\exists$snake, shown in
Figure~\ref{fig:VEpatterns}. We begin by showing that each of these four patterns
allows variable elimination. Forbidding BTP is equivalent to the
already-known local Broken Triangle Property
(lBTP)~\cite{Cooper10:BTP} mentioned in Section~\ref{subsec:vvelim}.

\setlength{\unitlength}{1pt}
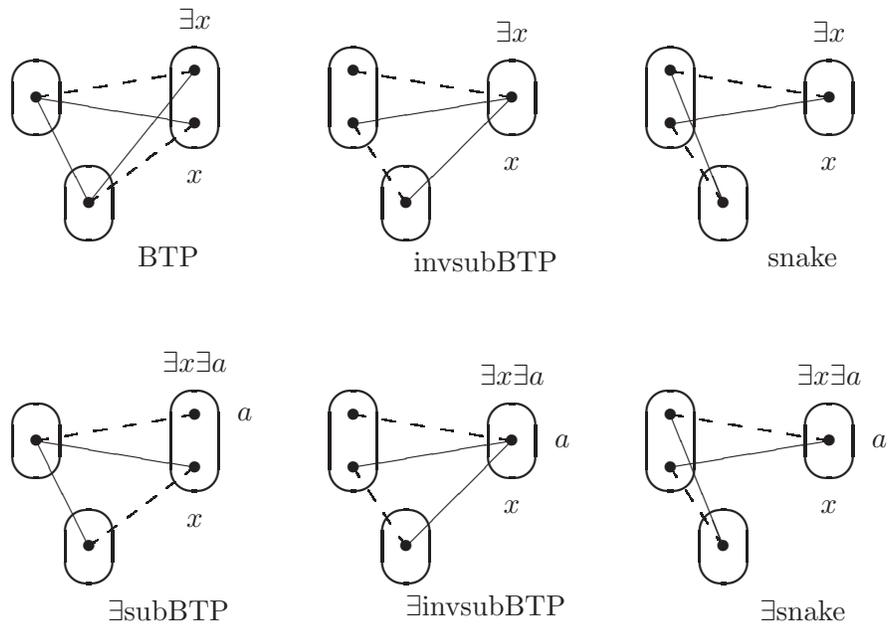
\begin{figure}
\centering

\begin{picture}(338,240)(2,0)

\put(0,130){
\begin{picture}(100,110)(0,0)
\put(0,50){\usebox{\varone}} \put(20,10){\usebox{\varone}}
\put(60,50){\usebox{\vartwo}} \put(10,70){\line(6,-1){60}}
\put(10,70){\line(1,-2){20}} \dashline{5}(30,30)(70,60)
\dashline{5}(10,70)(70,80) \put(70,40){\makebox(0,0){$x$}}
\put(70,100){\makebox(0,0){$\exists x$}}
\put(60,10){\makebox(0,0){BTP}} \put(30,30){\line(4,5){40}}
\end{picture}}

\put(120,130){
\begin{picture}(100,110)(0,0)
\put(0,50){\usebox{\vartwo}} \put(20,10){\usebox{\varone}}
\put(60,50){\usebox{\varone}} \put(10,60){\line(6,1){60}}
\dashline{5}(30,30)(10,60) \dashline{5}(10,80)(70,70)
\put(70,45){\makebox(0,0){$x$}} \put(70,95){\makebox(0,0){$\exists
x$}} \put(60,8){\makebox(0,0){invsubBTP}} \put(30,30){\line(1,1){40}}
\end{picture}}

\put(240,130){
\begin{picture}(100,110)(0,0)
\put(0,50){\usebox{\vartwo}} \put(20,10){\usebox{\varone}}
\put(60,50){\usebox{\varone}} \put(10,60){\line(6,1){60}}
\put(10,80){\line(2,-5){20}} \dashline{5}(30,30)(10,60)
\dashline{5}(10,80)(70,70) \put(70,45){\makebox(0,0){$x$}}
\put(70,95){\makebox(0,0){$\exists x$}}
\put(60,10){\makebox(0,0){snake}}
\end{picture}}

\put(0,0){
\begin{picture}(100,110)(0,0)
\put(0,50){\usebox{\varone}} \put(20,10){\usebox{\varone}}
\put(60,50){\usebox{\vartwo}} \put(10,70){\line(6,-1){60}}
\put(10,70){\line(1,-2){20}} \dashline{5}(30,30)(70,60)
\dashline{5}(10,70)(70,80) \put(70,40){\makebox(0,0){$x$}}
\put(89,80){\makebox(0,0){$a$}} \put(70,100){\makebox(0,0){$\exists x
\exists a$}} \put(60,6){\makebox(0,0){$\exists$subBTP}}
\end{picture}}

\put(120,0){
\begin{picture}(100,110)(0,0)
\put(0,50){\usebox{\vartwo}} \put(20,10){\usebox{\varone}}
\put(60,50){\usebox{\varone}} \put(10,60){\line(6,1){60}}
\dashline{5}(30,30)(10,60) \dashline{5}(10,80)(70,70)
\put(70,45){\makebox(0,0){$x$}} \put(89,70){\makebox(0,0){$a$}}
\put(70,95){\makebox(0,0){$\exists x \exists a$}}
\put(60,10){\makebox(0,-4){$\exists$invsubBTP}}
\put(30,30){\line(1,1){40}}
\end{picture}}

\put(240,0){
\begin{picture}(100,110)(0,0)
\put(0,50){\usebox{\vartwo}} \put(20,10){\usebox{\varone}}
\put(60,50){\usebox{\varone}} \put(10,60){\line(6,1){60}}
\put(10,80){\line(2,-5){20}} \dashline{5}(30,30)(10,60)
\dashline{5}(10,80)(70,70) \put(70,45){\makebox(0,0){$x$}}
\put(89,70){\makebox(0,0){$a$}} \put(70,95){\makebox(0,0){$\exists x
\exists a$}} \put(60,6){\makebox(0,0){$\exists$snake}}
\end{picture}}

\end{picture}

\caption{Variable elimination patterns.}

\label{fig:VEpatterns}

\end{figure}
\setlength{\unitlength}{1pt}

\begin{theorem} \label{thm:allVEpatterns}
The patterns BTP, $\exists$subBTP, $\exists$invsubBTP and
$\exists$snake are var-elim patterns.
\end{theorem}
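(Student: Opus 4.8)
The plan is to prove the four cases separately, but all four share a common structure: assume the given pattern does not occur at $x$ (for at least one injective value mapping, in the existential cases), take a partial solution $s$ on $X \setminus \{x\}$, and show that $s$ can be extended to $x$. Since $I$ is arc consistent, for each neighbour $w$ of $x$ the assignment $\tuple{w,s(w)}$ is compatible with \emph{some} value in $\mathcal{D}(x)$; write $S_w \subseteq \mathcal{D}(x)$ for the (nonempty) set of values of $x$ compatible with $\tuple{w,s(w)}$. Extending $s$ to $x$ means finding a value in $\bigcap_w S_w$, so the whole argument is about showing this intersection is nonempty using the forbidden pattern as the only hypothesis. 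For BTP this is the already-known lBTP argument: if $a \in S_w$ and $b \in S_{w'}$ with $a \notin S_{w'}$, $b \notin S_w$, then $\tuple{w,s(w)}$, $\tuple{w',s(w')}$, together with $a,b \in \mathcal{D}(x)$, form a broken triangle, i.e. an occurrence of BTP at $x$ — contradiction. Hence the sets $S_w$ are totally ordered by inclusion (any two are nested), so the smallest one is contained in all others and is nonempty; pick any value in it.

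For the three existential patterns the argument is sharper because we get to fix a value. I would take the distinguished value of the pattern to be the value $a^*\in\mathcal D(x)$ witnessing non-occurrence under the chosen injective mapping, and try to show $a^* \in S_w$ for every neighbour $w$, i.e. $a^*$ itself extends $s$. Suppose not: there is a neighbour $w$ with $a^* \notin S_w$, so $\tuple{x,a^*}$ is incompatible with $\tuple{w,s(w)}$. By arc consistency pick $b \in S_w$ (so $\tuple{x,b}$ compatible with $\tuple{w,s(w)}$), and then pick a neighbour $w'$ (possibly needing $w' \neq w$, using arc consistency again) realising the remaining compatibility/incompatibility edges of the pattern under scrutiny — $\tuple{x,a^*}$ compatible with $\tuple{w',s(w')}$ in the subBTP-type patterns, the extra compatibility edge from $b$ in the snake patterns, and so on. In each case the resulting configuration on $\{x,w,w'\}$ with the two $x$-values $a^*,b$, after deleting any dangling assignments and merging, reduces exactly to the named pattern with $\overline{val}$ mapped to $a^*$, contradicting non-occurrence. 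So $a^*$ extends $s$.

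The bookkeeping to watch is the interplay between reductions and occurrence (Definition~\ref{def:occurs}): one must check that the concrete configuration found in $I$ really \emph{contains} the abstract pattern, which is slightly delicate when the pattern's two non-$x$ assignments could collapse onto a single variable of $I$, or when an assignment of $I$ plays the role of a dangling vertex. The clean way to handle this is to note that any "extra" edges present in $I$ but undefined in the pattern are harmless (sub-pattern allows $\cpt'$ undefined), and that if two of the pattern's variables would map to the same variable of $I$ the relevant assignments are mergeable, so the image is still an occurrence of a reduction of the pattern. I expect the main obstacle to be precisely this case analysis of degeneracies — ensuring that in each of the three existential patterns the witness neighbours $w,w'$ can always be chosen (via arc consistency) so that the configuration is a genuine occurrence — rather than any deep combinatorics; the core idea in every case is just "arc consistency gives you a compatible value, and the forbidden pattern forbids the obstruction to that value being $a^*$ (or to the $S_w$'s being nested)."
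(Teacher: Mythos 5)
Your treatment of BTP (the nested-support-sets argument) is the standard lBTP proof and is fine. The problem is with the three existential patterns, where your stated goal --- ``show that $a^*$ itself extends $s$,'' i.e.\ that $s \cup \{x \mapsto a^*\}$ is already a solution --- is simply not what non-occurrence of these patterns gives you, and in fact it is false. Concretely: let $\mathcal{D}(x)=\{0\}$, $\mathcal{D}(y)=\{0\}$, $\mathcal{D}(z)=\{0,1\}$, with $\tuple{z,0}$ incompatible with $\tuple{x,0}$, $\tuple{z,1}$ compatible with $\tuple{x,0}$, and $\tuple{y,0}$ compatible with everything. Neither $\exists$snake nor $\exists$invsubBTP occurs at $x$ with $a\mapsto 0$ (the required incompatibility between $\tuple{z,1}$ and $\tuple{y,0}$ is absent), yet the partial solution $s(y)=s(z)=0$ is not extended by $a^*=0$. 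The point you are missing is that these patterns are \emph{repair} guarantees, not conflict prohibitions: non-occurrence does not forbid $\tuple{x,a^*}$ from clashing with some $\tuple{w,s(w)}$; it guarantees that the clash can be repaired. For $\exists$invsubBTP and $\exists$snake the repair keeps $x=a^*$ but \emph{changes the partial solution}: for each $z$ in the conflicting set $\overline{Y}$ one replaces $s(z)$ by an arc-consistency support $t(z)$ of $\tuple{x,a^*}$, and non-occurrence of the pattern (whose two-valued variable is the neighbour $z$, carrying both $s(z)$ and $t(z)$ --- not $x$, which has a single value in these two patterns) is exactly what forces $t(z)$ to be compatible with the retained $s(y)$'s and with the other $t(z')$'s. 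For $\exists$subBTP the repair instead changes the value of $x$: taking any $y\in\overline{Y}$ and a support $b\in\mathcal{D}(x)$ of $\tuple{y,s(y)}$, non-occurrence on $\{\tuple{y,s(y)},\tuple{z,s(z)},\tuple{x,a^*},\tuple{x,b}\}$ forces $b$ to be compatible with every $\tuple{z,s(z)}$, so $b$ (not $a^*$) extends $s$.

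Because of this, your proposed contradiction configuration ``on $\{x,w,w'\}$ with the two $x$-values $a^*,b$'' has the wrong shape for $\exists$invsubBTP and $\exists$snake (compare Figure~\ref{fig:VEpatterns}: there the domain of size two belongs to a variable other than $x$), and for $\exists$subBTP the configuration you would assemble is not an occurrence of the pattern because the pattern's dashed edge between the third variable and the second $x$-value need not be present. The worries you flag about degeneracies and merging are not where the difficulty lies; the missing idea is the case split into $Y$ and $\overline{Y}$ and the explicit construction of a \emph{modified} solution $s'$ using the supports $t(z)$.
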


\begin{proof}
Since it is known that BTP is a var-elim
pattern~\cite{Cooper10:BTP}, we only need to prove the result for the
three existential patterns: $\exists$subBTP, $\exists$invsubBTP and
$\exists$snake.

Every two-variable arc-consistent CSP instance allows either variable
to be eliminated. So we only have to prove that these patterns allow
variable elimination in CSP instances with at least three variables.

We first set up some general machinery which will be used in each of the three cases.
Consider an arc-consistent CSP instance $I=\tuple{X,D,A,\cpt}$ and
let $s$ be a partial solution on $X \setminus \{x\}$.

Fix some assignment \tuple{x,d}, and let:
\begin{eqnarray*}
Y  &=  \left\{y \in X \setminus \{x\} \mid \cpt(\tuple{y,s(y)},\tuple{x,d}) =
 {\tt TRUE}\right\},\\
\overline{Y}  &=  \left\{z \in X \setminus \{x\} \mid
\cpt(\tuple{z,s(z)},\tuple{x,d}) = {\tt FALSE}\right\}.
\end{eqnarray*}

For all $y,z \in X \setminus \{x\}$, since $s$ is a partial solution, 
$\cpt(\tuple{y,s(y)},\tuple{z,s(z)}) =$ {\tt TRUE}. Thus, if $X = Y
\cup \{x\}$ then we can extend $s$ to a solution to $I$ by choosing
value $d$ for variable $x$.  So, in this case $x$ could be
eliminated. So we assume from now on that $\overline{Y} \neq
\emptyset$.

By arc consistency, for all $z \in \overline{Y}$, there is some
$\tuple{z,t(z)} \in A$ such that $\cpt(\tuple{z,t(z)},\tuple{x,d}) =$
{\tt TRUE}.

We now prove the result for each pattern in turn.

Suppose that $\exists$subBTP does not occur at $x$ in $I$ for the
mapping $a \mapsto d$. Consider any $y \in \overline{Y}$. By arc
consistency, $\exists b \in \mathcal{D}(x)$ such that
$\cpt(\tuple{y,s(y)},\tuple{x,b}) =$ {\tt TRUE}. Since the pattern
$\exists$subBTP does not occur, and in particular on the set of
assignments $\{\tuple{y,s(y)}$, $\tuple{z,s(z)}$, $\tuple{x,d}$,
$\tuple{x,b}\}$, we can deduce that, for every variable $z \in X$
different from both $x$ and $y$, $\cpt(\tuple{z,s(z)},\tuple{x,b}) =$
{\tt TRUE}. Hence, we can extend $s$ to a solution to $I$ by choosing
$s(x) =b$.  So, in any case $x$ can be eliminated and $\exists$subBTP
is indeed a var-elim pattern.

Now instead, suppose $\exists$invsubBTP does not occur at $x$ in $I$
for the mapping $a \mapsto d$. Since the pattern $\exists$invsubBTP
does not occur, if both $y$ and $z$ belong to $\overline{Y}$ then
$\cpt(\tuple{y,t(y)},\tuple{z,t(z)}) =$ {\tt TRUE}, otherwise the
pattern would occur on the assignments $\{\tuple{y,s(y)}$,
$\tuple{y,t(y)}$, $\tuple{z,t(z)}$, $\tuple{x,d}\}$. Also, if $y \in
Y$, $z \in \overline{Y}$, then $\cpt(\tuple{y,s(y)},\tuple{z,t(z)})
=$ {\tt TRUE}, otherwise the pattern would occur on
$\{\tuple{z,s(z)}$, $\tuple{z,t(z)}$, $\tuple{y,s(y)}$,
$\tuple{x,d}\}$.

So, in this case we have a solution $s'$ to $I$, where
\begin{equation*}
s'(v)=
\begin{cases}
d & \text{if $v = x$,}\\
s(v) & \text{if $v \in Y$,}\\
t(v) & \text{otherwise.}
\end{cases}
\end{equation*}
So $\exists$invsubBTP is indeed a var-elim pattern.

For the final pattern, suppose that $\exists$snake does not occur at
$x$ in $I$ for the mapping $a \mapsto d$. If $y \in Y$, $z \in
\overline{Y}$, since the pattern $\exists$snake does not occur, we
can deduce that $\cpt(\tuple{y,s(y)},\tuple{z,t(z)}) =$ {\tt TRUE},
otherwise the pattern would occur on the assignments
$\{\tuple{z,s(z)}$, $\tuple{z,t(z)}$, $\tuple{y,s(y)}$,
$\tuple{x,d}\}$. If both $y$ and $z$ both belong to $\overline{Y}$,
then we can deduce first that $cpt(\tuple{y,s(y)},\tuple{z,t(z)}) =$
{\tt TRUE} (as in the previous case) and then, as a consequence, that
$cpt(\tuple{y,t(y)},\tuple{z,t(z)}) =$ {\tt TRUE} (otherwise the
pattern would occur on $\{\tuple{y,s(y)}$, $\tuple{y,t(y)}$,
$\tuple{z,t(z)}$, $\tuple{x,d}\}$).

So, again in this case we have a solution $s'$ to $I$, where $s'$ is
defined as above. So $\exists$snake is also a var-elim pattern.
\end{proof}

\section{Characterisation of quantified var-elim patterns}

Our aim is to precisely characterise all irreducible patterns which
allow variable elimination in an arc-consistent binary CSP instance.
We begin by identifying many patterns, including all those shown in
Figure~\ref{fig:nonVEpatterns}, which are not variable elimination
patterns.

\setlength{\unitlength}{1pt}
\begin{figure}
\centering

\begin{picture}(338,419)(0,-5)

\put(0,0){
\begin{picture}(100,100)(0,-10)
\put(0,40){\usebox{\varone}} \put(20,0){\usebox{\varone}}
\put(60,40){\usebox{\varone}} \dashline{5}(30,20)(70,60)
\dashline{5}(10,60)(70,60) \put(70,37){\makebox(0,0){$x$}}
\put(70,83){\makebox(0,0){$\exists x$}}
\put(60,-5){\makebox(0,0){Pivot(sym)}}
\end{picture}}

\put(120,0){
\begin{picture}(100,100)(0,-10)
\put(0,40){\usebox{\varone}} \put(20,0){\usebox{\varone}}
\put(60,40){\usebox{\varone}} \dashline{5}(30,20)(10,60)
\dashline{5}(10,60)(70,60) \put(70,37){\makebox(0,0){$x$}}
\put(70,83){\makebox(0,0){$\exists x$}}
\put(60,-5){\makebox(0,0){Pivot(asym)}}
\end{picture}}

\put(240,0){
\begin{picture}(100,110)(0,0)
\put(0,50){\usebox{\vartwo}} \put(60,50){\usebox{\vartwo}}
\put(30,10){\usebox{\vartwo}} \dashline{5}(10,80)(70,80)
\dashline{5}(10,60)(40,40) \dashline{5}(70,60)(40,20)
\put(70,3){\makebox(0,0){Cycle(3)}}
\end{picture}}

\put(0,120){
\begin{picture}(100,110)(0,0)
\put(0,50){\usebox{\vartwo}} \put(60,50){\usebox{\varone}}
\put(30,10){\usebox{\vartwo}} \put(10,60){\line(6,1){60}}
\put(10,60){\line(3,-2){30}} \put(10,80){\line(1,-2){30}}
\put(40,20){\line(3,5){30}} \dashline{5}(10,80)(40,40)
\put(70,47){\makebox(0,0){$x$}} \put(70,93){\makebox(0,0){$\exists
x$}} \put(80,8){\makebox(0,0){Kite(sym)}}
\end{picture}}

\put(120,120){
\begin{picture}(100,110)(0,0)
\put(0,50){\usebox{\vartwo}} \put(60,50){\usebox{\vartwo}}
\put(30,10){\usebox{\varone}} \put(10,60){\line(3,1){60}}
\put(10,60){\line(1,-1){30}} \put(10,80){\line(3,-1){60}}
\put(40,30){\line(1,1){30}} \dashline{5}(10,80)(70,80)
\put(70,42){\makebox(0,0){$x$}} \put(70,98){\makebox(0,0){$\exists
x$}} \put(77,8){\makebox(0,0){Kite(asym)}}
\end{picture}}

\put(240,120){
\begin{picture}(100,110)(0,0)
\put(0,50){\usebox{\vartwo}} \put(20,10){\usebox{\varone}}
\put(60,50){\usebox{\varone}} \put(10,80){\line(2,-5){20}}
\dashline{5}(30,30)(10,60) \dashline{5}(10,80)(70,70)
\put(70,47){\makebox(0,0){$x$}} \put(70,93){\makebox(0,0){$\exists
x$}} \put(60,10){\makebox(0,0){rotsubBTP}}
\put(30,30){\line(1,1){40}}
\end{picture}}

\put(0,240){
\begin{picture}(100,115)(0,-20)
\put(0,10){\usebox{\varone}} \put(60,10){\usebox{\vartwo}}
\put(10,30){\line(6,1){60}} \dashline{5}(10,30)(70,20)
\put(70,2){\makebox(0,0){$x$}} \put(88,40){\makebox(0,0){$a$}}
\put(70,58){\makebox(0,0){$\exists x \exists a$}}
\put(40,-15){\makebox(0,0){V($+ -$)}}
\end{picture}}

\put(100,240){
\begin{picture}(110,115)(-20,0)
\put(0,50){\usebox{\varone}} \put(60,50){\usebox{\varone}}
\put(30,10){\usebox{\varone}} \dashline{5}(10,70)(70,70)
\put(10,70){\line(3,-4){30}} \put(40,30){\line(3,4){30}}
\put(70,47){\makebox(0,0){$x$}} \put(88,70){\makebox(0,0){$a$}}
\put(70,93){\makebox(0,0){$\exists x \exists a$}}
\put(45,5){\makebox(0,0){Triangle(asym)}}
\end{picture}}

\put(240,240){\begin{picture}(100,100)(0,0)
\put(0,50){\usebox{\varone}} \put(60,50){\usebox{\varone}}
\put(30,10){\usebox{\varone}} \put(10,70){\line(1,0){60}}
\put(10,70){\line(3,-4){30}} \put(40,30){\line(3,4){30}}
\put(70,5){\makebox(0,0){Triangle}}
\end{picture}}

\put(0,340){
\begin{picture}(100,80)(0,-15)
\put(0,20){\usebox{\varone}} \put(30,20){\usebox{\vartwo}}
\put(60,20){\usebox{\varone}} \put(40,30){\line(3,1){30}}
\put(40,50){\line(3,-1){30}} \put(40,30){\line(-3,1){30}}
\dashline{5}(40,50)(10,40) \put(40,5){\makebox(0,0){Diamond}}
\end{picture}}

\put(120,340){
\begin{picture}(110,80)(0,20)
\put(0,50){\usebox{\vartwo}} \put(60,50){\usebox{\vartwo}}
\dashline{5}(10,80)(70,60) \put(10,80){\line(1,0){60}}
\put(10,60){\line(3,1){60}} \put(10,60){\line(1,0){60}}
\put(40,40){\makebox(0,0){Z}}
\end{picture}}

\put(240,340){\begin{picture}(100,80)(0,15)
\put(0,50){\usebox{\vartwo}} \put(30,10){\usebox{\varone}}
\put(60,50){\usebox{\vartwo}} \dashline{5}(10,80)(70,80)
\dashline{5}(40,30)(10,60) \put(10,80){\line(3,-1){60}}
\put(10,60){\line(3,1){60}} \put(40,30){\line(1,1){30}}
\put(65,25){\makebox(0,0){XL}}
\end{picture}}

\end{picture}

\caption{Patterns which do not allow variable elimination.}

\label{fig:nonVEpatterns}

\end{figure}
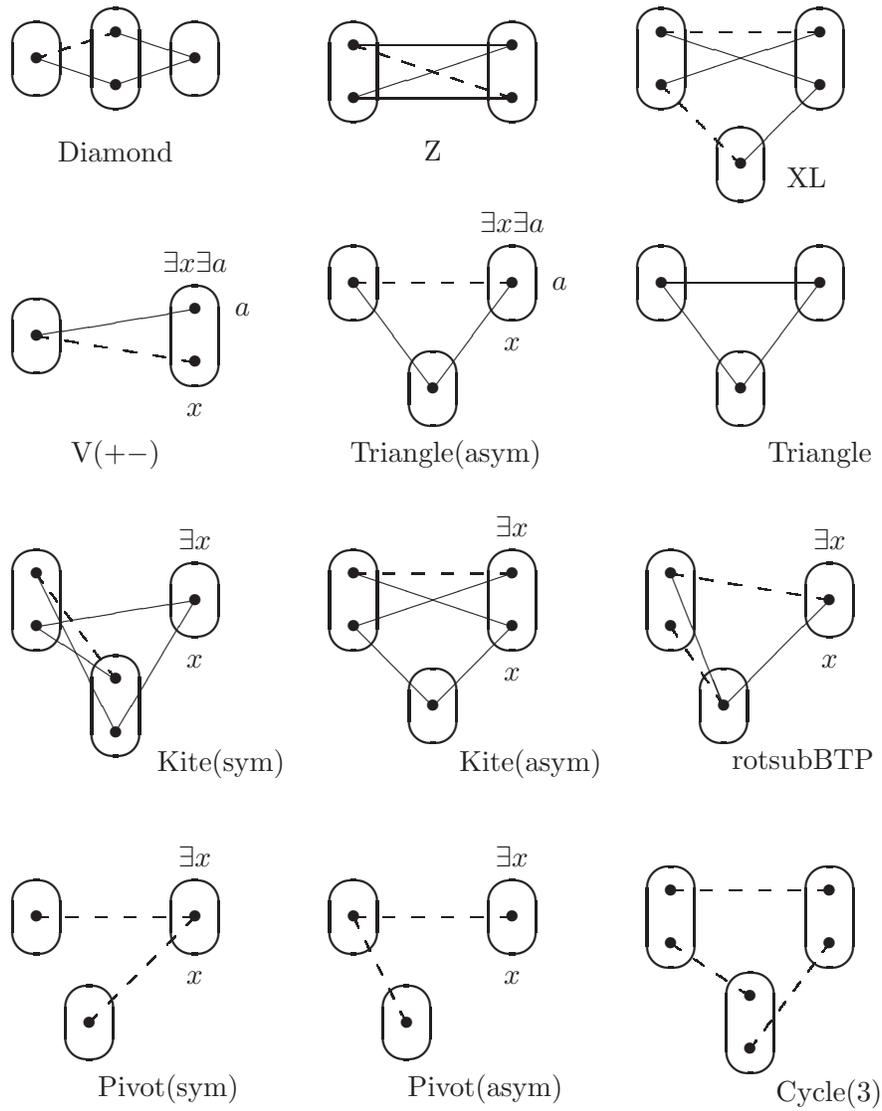
\setlength{\unitlength}{1pt}

\begin{lemma} \label{lem:nonVEpatterns}
None of the following patterns allow variable elimination in
arc-consistent binary CSP instances: any pattern on strictly more
than three variables, any pattern with three non-mergeable values for
the same variable, any pattern with two non-mergeable incompatibility
edges in the same constraint, {\nobreak Diamond}, Z, XL, V($+ -$),
Triangle(asym), Triangle, Kite(sym), Kite(asym), \ {\nobreak
rotsubBTP}, Pivot(asym), Pivot(sym), Cycle(3).
\end{lemma}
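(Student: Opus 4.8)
The plan is to prove each non-elimination claim by exhibiting a small arc-consistent counterexample instance in which the pattern does not occur at the candidate variable (for any injective value mapping in the existential case), yet that variable genuinely cannot be eliminated --- i.e. there is a partial solution on $X \setminus \{x\}$ that extends to no solution. For the grouped families (more than three variables, three non-mergeable values on one variable, two non-mergeable incompatibility edges in one constraint), I would first argue abstractly that any pattern with such a feature cannot be a var-elim pattern: if a pattern $P$ uses four variables, it cannot occur in any three-variable instance, so non-occurrence is vacuous there, and one exhibits a single fixed three-variable arc-consistent instance $I_0$ with a non-eliminable variable $x$ (e.g. a $3$-colouring-style triangle of variables with domain $\{0,1\}$, all edges disequality, which is unsatisfiable but becomes satisfiable once any one variable is removed); since $P$ does not occur in $I_0$ at all (too many variables / too many distinct values / two incompatibilities that $I_0$ doesn't have), $P$ fails the definition of var-elim pattern. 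The analogous trick handles "three non-mergeable values" (use an $I_0$ whose variables all have domain size $2$) and "two non-mergeable incompatibility edges in the same constraint" (use an $I_0$ in which every constraint has at most one incompatibility, e.g. a cycle of equality/disequality constraints).

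For each named pattern I would do the reverse: take the pattern itself, read off the "forced" structure it describes, and build the smallest instance whose microstructure omits exactly that structure while still forcing non-eliminability. Concretely, I would reuse the machinery of the proof of Theorem~\ref{thm:allVEpatterns}: a variable $x$ is eliminable essentially when, starting from a partial solution $s$ on $X \setminus \{x\}$ and a candidate value $d$ for $x$, one can always repair the set $\overline{Y}$ of variables conflicting with $\tuple{x,d}$. So a counterexample is an instance with a partial solution $s$, a value $d$, a nonempty $\overline{Y}$, and \emph{no} consistent choice of repairs --- arranged so that none of the "obstruction triangles" of the given pattern is present. For the symmetric/asymmetric Pivot and Kite patterns, for Z, XL, Diamond, rotsubBTP, V($+-$), the two Triangle patterns, and Cycle(3), I would give one explicit $3$-variable (occasionally $3$-variable, $2$-or-$3$-value) instance per pattern, typically a near-miss of a broken triangle: compatibilities chosen so that $x$ has two values $d, b$, variable $y$ is compatible with $b$ but not $d$, variable $z$ is compatible with $d$ but not $b$, and $\tuple{y,s(y)}, \tuple{z,s(z)}$ are compatible --- this is the canonical non-eliminable configuration (it is exactly a BTP occurrence), and for each listed pattern one checks it is \emph{not} a BTP-variant, so forbidding that pattern does not rule out this instance. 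Where a pattern is existential, I must additionally verify non-occurrence for \emph{every} injective value mapping $e(P) \to \mathcal{D}(x)$, which just means checking the at most two candidate target values in the counterexample.

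The bulk of the work --- and the main obstacle --- is the case-by-case verification of non-occurrence: for each of the roughly dozen named patterns I must confirm that no reduction of the pattern (via merges and dangling-assignment removals, respecting $\overline{v}$, $e$, $\overline{val}$) embeds into the counterexample instance. This is where care is needed, because a pattern can occur in a reduced form that is not obvious from the picture; in particular two values of $x$ in the pattern may be mergeable, or a peripheral assignment may be dangling, so the "effective" forbidden structure is smaller than it looks. The safe strategy is, for each pattern, to first compute its irreducible form (the statement of the lemma already restricts attention to irreducible patterns, so in fact each listed pattern is irreducible and I only need to check embeddings of the pattern itself), then enumerate the finitely many injective variable-renamings into the $3$-variable counterexample and the finitely many compatible value-renamings, and confirm at least one required compatibility edge is contradicted by the instance's total $\cpt$. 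I expect each such check to be short, but there are many of them, and a single overlooked embedding would sink the claim; I would therefore organise the counterexamples so that as many patterns as possible are refuted by the \emph{same} instance (the disequality triangle handles all the "too big / too many values / too many incompatibilities" cases and Cycle(3); a single broken-triangle-with-an-extra-edge instance should handle most of the Pivot/Kite/Z/XL/Triangle/rotsubBTP/V family), minimising the number of distinct verifications.
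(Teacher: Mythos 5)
Your overall strategy is the paper's: for each pattern exhibit an arc-consistent instance with a partial solution on $X\setminus\{x\}$, no solution, and no occurrence of the pattern at $x$. But the entire content of this lemma lies in the concrete constructions, and where your sketch becomes concrete it goes wrong. First, the ``canonical non-eliminable configuration'' you propose (a broken triangle with singleton domains at $y,z$) is not arc-consistent: the value $d$ of $x$ has no support at $y$. Repairing arc consistency forces larger domains or more variables, and this is precisely why the paper's counterexamples are not $3$-variable near-misses but carefully built instances on $4$ variables (for V($+-$), Triangle(asym), Kite(sym), Kite(asym), Pivot(sym)), $6$ variables (for Pivot(asym) and Cycle(3)) and even $7$ variables (for rotsubBTP). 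Your hedge that one small instance ``should handle most of'' the Pivot/Kite/rotsubBTP family is exactly the part of the proof that cannot be waved through; for these patterns no small generic instance is known to work, and producing and verifying the larger ones is the bulk of the argument.

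Second, two of your specific groupings fail. The disequality triangle $I^{2COL}_3$ does not refute Cycle(3): Cycle(3) is \emph{not} irreducible (in each domain the two values have disjoint sets of defined edges, hence are mergeable), and its full merge reduction is a triangle of three pairwise incompatible assignments, which occurs in $I^{2COL}_3$ on the assignment $(0,0,0)$. This is exactly the ``occurrence via reduction'' trap you flag and then dismiss by asserting that every listed pattern is irreducible; the paper instead uses a $6$-variable 2SAT instance whose constraint graph is triangle-free. Likewise, neither the disequality triangle nor a ``cycle of equality/disequality constraints'' handles the family with two non-mergeable incompatibility edges in one constraint: Boolean equality and disequality constraints each forbid \emph{two} tuples, and in $I^{2COL}_3$ the two forbidden tuples of each constraint are non-mergeable, so the forbidden configuration occurs. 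What is needed is an unsatisfiable-but-almost-satisfiable instance in which every constraint forbids exactly one tuple, i.e.\ a 2SAT-clause instance such as the paper's $I^{SAT}_6$. Until these constructions are supplied and their non-occurrence checks carried out (including over reductions), the proof is not complete.
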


\begin{proof}
For each pattern we exhibit a binary arc-consistent CSP instance
that:
\begin{itemize}
\item has a partial solution on the set of all the variables except a specified variable $x$;
\item has no solution;
\item does not contain the given pattern $P$
at variable $x$ (if $P$ is a quantified pattern) or does not contain
$P$ at any variable (if $P$ is a non-quantified pattern).
\end{itemize}

By definition, any such instance is enough to prove that a pattern is
not a var-elim pattern.

\begin{itemize}
\item For any pattern $P$ which is either Diamond, Z, XL,
or Triangle, or has at least four variables, or has three
non-mergeable values for the same variable.

Let $I^{2COL}_3$ be the CSP instance (corresponding to 2-colouring on
3 variables) with three Boolean variables, where the constraint
between any two variables forces them to take different values.

This instance has partial solutions on any two variables, but has no
solution, and does not contain $P$.

\item For V($+ -$) and Triangle(asym).

Let $I^{\exists}_4$ be the instance on four variables $x_1,x_2,x_3$
and $x$, where the domains of $x_1,x_2$ and $x_3$ are all $\{0,1,2\}$
and the domain of $x$ is $\{0,1,2,3\}$. Each pair of variables in
$\{x_1,x_2,x_3\}$ must take values in $\{\tuple{0,0}, \tuple{1,2},
\tuple{2,1}\}$.  There are three further constraints: for $i=1,2,3$,
we have that $(x_i>0) \vee (x=i)$.

$I^{\exists}_4$ has a partial solution on $\{x_1,x_2,x_3\}$ but has
no solution. $I^{\exists}_4$ contains neither V($+ -$) nor
Triangle(asym) at variable $x$ for the value mapping $m(a) = 0$.

\item
For Kite(sym).

Let $I_4$ be the CSP instance on four variables $x_1,x_2,x_3,x$ where
$x_1, x_2$ and $x_3$ are Boolean and $\mathcal{D}(x) = \{1,2,3\}$,
with the following constraints: $x_1 \vee x_2$, $x_1 \vee x_3$, $x_2
\vee x_3$, $x_i \Leftrightarrow (x = i)$ ($i=1,2,3$).

$I_4$ has a partial solution on $\{x_1,x_2,x_3\}$, has no solution,
and does not contain Kite(sym) at variable $x$.

\item
For Kite(asym).

Let $I^{ZOA}_4$ be the CSP instance on the four variables
$x_1,x_2,x_3,x$ each with domain $\{1,2,3\}$, with the following
constraints: $x_1=x_2$, $x_1=x_3$, $x_2=x_3$, $(x_1=1) \vee (x=1)$,
$(x_2=2) \vee (x=2)$, $(x_3=3) \vee (x=3)$.

\item
For rotsubBTP.

Define the three binary relations:
\begin{eqnarray*}
R &=
\{\tuple{0,0}, \tuple{1,2}, \tuple{2,1}\},\\
R_0 &= \{\tuple{0,0}, \tuple{1,1}, \tuple{2,1}\},\\
R_1 &= \{\tuple{0,1}, \tuple{1,0}, \tuple{2,0}\}.
\end{eqnarray*}

Let $I_7$ be the CSP instance on the seven variables
$x_1,\ldots,x_6,x$ where $\mathcal{D}(x_i) = \{0,1,2\}$, for
$i=1,\dots,6$, and $\mathcal{D}(x) = \{0,1\}$, with the following
constraints:

For $(1 \leq i < j \leq 3)$ and $(4 \leq i < j \leq 6)$,
\tuple{x_i,x_j} must take values in $R$.

For $(1 \leq i \leq 3)$, \tuple{x_i,x} must take values in $R_0$.

For $(4 \leq i \leq 6)$, \tuple{x_i,x} must take values in $R_1$ .

\item For the pattern Pivot(sym).

Let $I^{SAT}_4$ be the 2SAT instance on four Boolean variables
$x_1,x_2,x_3,x$ with the following constraints: $x_1 \equiv x_2$,
$x_1 \equiv x_3$, $x_2 \vee x_3$, $\overline{x_2} \vee x$,
$\overline{x_3} \vee \overline{x}$.

\item
For
Cycle(3) or Pivot(asym), or any pattern with two non-mergeable
incompatibility edges in the same constraint.

Let $I^{SAT}_6$ be the 2SAT instance on six Boolean variables
$x_1,x_2,x_3,x_4,x_5,x$ with the following constraints:
$\overline{x_1} \vee \overline{x_2}$, $\overline{x_1} \vee
\overline{x_4}$, $x_1 \vee \overline{x_3}$, $x_1 \vee
\overline{x_5}$, $x_2 \vee \overline{x}$, $x_4 \vee x$, $x_3 \vee
\overline{x}$, $x_5 \vee x$.
\end{itemize}
\end{proof}

The following lemma is then key to proving that we have identified
all possible irreducible quantified var-elim patterns.

\begin{lemma} \label{lem:contained}
The only flat quantified irreducible patterns that do not contain any
of the patterns listed in Lemma~\ref{lem:nonVEpatterns} are contained
in BTP, invsubBTP or snake (shown in Figure~\ref{fig:VEpatterns}).
\end{lemma}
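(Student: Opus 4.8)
The plan is to proceed by an exhaustive but organised case analysis over the combinatorial structure of a flat quantified irreducible pattern $P$ that avoids all the patterns in Lemma~\ref{lem:nonVEpatterns}. First I would extract the crude structural bounds that these forbidden patterns impose: since $P$ has no more than three variables, since no variable may have three pairwise non-mergeable values, and since no single constraint may contain two non-mergeable incompatibility edges, the ``size'' of $P$ is bounded. In an irreducible pattern there are no dangling assignments and no mergeable pairs of values, so every value participates in at least two defined \cpt-edges with conflicting information at that value with respect to the would-be merge partner; combined with the bound on non-mergeable values per variable, this forces each variable to hold at most two values (after accounting for the distinguished variable $x=\overline{v}(P)$, which may need two values even if they would otherwise merge, since $x$ is existentially marked — though here $e(P)=\emptyset$, so in the flat case the distinguished status only matters through sub-pattern bookkeeping). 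So $P$ lives on at most three variables with at most two values each, and at most one incompatibility edge per constraint-pair-direction that is ``essential''.

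Next I would organise the analysis by the number of variables. The one- and two-variable cases are immediate: an irreducible pattern on one variable is trivial, and on two variables the absence of Diamond, $Z$, V($+-$), Triangle-type patterns together with irreducibility forces it into a sub-pattern of one of the target patterns (indeed into something very small). The substantial case is three variables. Here I would let the three variables be $x$ (the distinguished one) and $y,z$, and catalogue the edges: within each of the three pairs $\{x,y\}$, $\{x,z\}$, $\{y,z\}$ there are at most $2\times 2 = 4$ possible \cpt-edges, of which at most one can be a ``non-removable'' incompatibility edge in a given constraint. I would first pin down the $\{y,z\}$ constraint and the two constraints incident to $x$, using arc-consistency-flavoured irreducibility (no danglers) to guarantee enough edges are present, and then rule out configurations by exhibiting an occurrence of Diamond, $Z$, XL, Cycle(3), a Kite, a Pivot, rotsubBTP, V($+-$), Triangle(asym) or Triangle. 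What survives should be exactly: (a) a ``broken triangle'' shape with two compatibility edges from one value of $x$ to $y$ and $z$, two incompatibility edges from the other value of $x$, and a compatibility edge $y$–$z$ — this is BTP; (b) the shape with two values on $y$ (say), one compatibility and one incompatibility edge into $x$, an incompatibility edge crossing to $z$, and a compatibility edge — invsubBTP; and (c) its one-extra-edge variant, snake.

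The key bookkeeping step — and the main obstacle — is showing that every surviving three-variable configuration is not merely ``morally'' one of BTP, invsubBTP, snake but is literally \emph{contained in} one of them in the precise sense of Definition~\ref{def:occurs}, i.e. some reduction of $P$ is equivalent to a sub-pattern of the target. This requires care because $P$ may have fewer edges than the named pattern (so we need $P$ to be a sub-pattern, with the target supplying extra \cpt-values), or $P$ may have an edge the target lacks, in which case we must check that this extra edge can be absorbed by a merge reduction of $P$ or else that it creates one of the forbidden patterns. I would handle this by, for each surviving edge-set, either directly exhibiting the injective variable renaming and value renaming into BTP/invsubBTP/snake, or performing an explicit merge reduction of $P$ first and then exhibiting the embedding. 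The delicate sub-cases are those where a constraint between $x$ and a neighbour has both a compatibility and an incompatibility edge but the two values on $x$'s side (or the neighbour's side) turn out to be mergeable after all — here I must either perform that merge (reducing to a two-variable case or a smaller three-variable case already handled) or show non-mergeability forces extra edges that trigger a forbidden pattern. I expect the proof to be a somewhat long finite check, but each individual step is a short inspection of a four- or five-assignment sub-configuration against the gallery in Figure~\ref{fig:nonVEpatterns}.

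Finally, I would assemble these cases into the statement: any flat quantified irreducible $P$ avoiding Lemma~\ref{lem:nonVEpatterns}'s list has at most three variables (else it contains a $\geq 4$-variable pattern), at most two values per variable (else three non-mergeable values), at most one essential incompatibility edge per constraint (else two non-mergeable incompatibility edges in one constraint), and — running through the finitely many resulting edge-patterns and discarding every one that embeds a Diamond, $Z$, XL, Cycle(3), Kite, Pivot, rotsubBTP, V($+-$), Triangle, or Triangle(asym) — every remaining pattern is contained in BTP, invsubBTP, or snake. This is exactly the claim of Lemma~\ref{lem:contained}.
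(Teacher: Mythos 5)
Your proposal takes essentially the same route as the paper: derive from Lemma~\ref{lem:nonVEpatterns} together with irreducibility that $P$ has at most three variables, at most two values per variable and at most one (non-mergeable) incompatibility edge per constraint, then exhaustively check the finitely many remaining configurations against the forbidden-pattern gallery and verify that each survivor embeds into BTP, invsubBTP or snake. The only organisational difference is that the paper structures the three-variable enumeration by first classifying the negative skeleton (the sub-pattern of incompatibility edges, which is forced to be $I_1$ or $I_2$ of Figure~\ref{fig:skeletons}) and then uses non-mergeability to force compatibility edges and to show that any second value of $v_1$ or $v_2$ would be dangling, hence absent; your sketch leaves this finite check unexecuted, but the plan and the tools you identify for it are the same as the paper's.
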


\begin{proof}
Consider a flat quantified irreducible pattern $P=\tuple{X,D,A,\cpt}$
that does not contain any of the patterns listed in
Lemma~\ref{lem:nonVEpatterns}. Thus $P$ has at most three variables,
each with domain size at most two.

We consider first the case of a 2-variable pattern $P$. By
Lemma~\ref{lem:nonVEpatterns}, $P$ does not have two non-mergeable
incompatibility edges and does not contain Z. Since $P$ is
irreducible and hence does not have any dangling assignment, we can
deduce by exhausting over all possibilities that $P$ does not have
any compatibility edge and a single incompatibility edge. Hence $P$
is contained in BTP. We can therefore assume that $P$ has exactly
three variables.

Now consider the negative sub-pattern $P^{-} =
\tuple{X,D,A,\negskel}$ where the compatibility function \negskel\ is
\cpt\ with its domain reduced to the incompatible pairs of
assignments of $P$.

Any irreducible pattern on three variables that does not contain an
incompatible pair of assignments must contain Triangle. Moreover, if
any assignment is incompatible with two other assignments then $P$
must contain either Pivot(sym) or Pivot(asym), or have two
non-mergeable incompatible edges in the same constraint. Now, since
$P$ does not contain Cycle(3), it follows that $P^{-}$ is $I_1$ or
$I_2$, as shown in Figure~\ref{fig:skeletons}.

\thicklines
\setlength{\unitlength}{1pt}
\begin{figure}
\centering

\begin{picture}(300,90)(5,0)

\put(0,0){
\begin{picture}(200,90)(-15,-13)
\put(0,20){\usebox{\varonebig}} \put(60,20){\usebox{\varonebig}}
\dashline{5}(10,40)(70,40) \put(40,-5){\makebox(0,0){$I_1$}}
\put(-6,40){\makebox(0,0){$a$}} \put(86,40){\makebox(0,0){$a'$}}
\put(10,18){\makebox(0,0){$v_0$}} \put(70,18){\makebox(0,0){$v_1$}}
\end{picture}}

\put(150,0){
\begin{picture}(95,90)(0,2)
\put(0,50){\usebox{\vartwobig}} \put(30,10){\usebox{\varonebig}}
\put(60,50){\usebox{\varonebig}} \dashline{5}(10,80)(70,70)
\dashline{5}(40,30)(10,60) \put(74,10){\makebox(0,0){$I_2$}}
\put(-6,80){\makebox(0,0){$a$}} \put(-6,60){\makebox(0,0){$b$}}
\put(86,70){\makebox(0,0){$c$}} \put(56,30){\makebox(0,0){$e$}}
\put(10,43){\makebox(0,0){$v_0$}} \put(70,48){\makebox(0,0){$v_1$}}
\put(40,8){\makebox(0,0){$v_2$}}
\end{picture}}

\end{picture}

\caption{The possible negative skeletons of var-elim patterns.}

\label{fig:skeletons}

\end{figure}
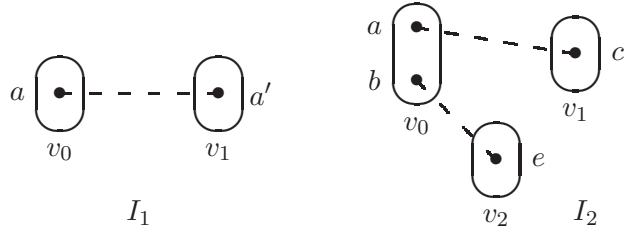

We first consider the latter case. Without loss of generality, we
assume that $b$ is compatible with $c$, to avoid $a$ and $b$ being
mergeable.

Since the domains have at most two elements, we begin by assuming
that $\mathcal{D}(v_1) = \{c,d\}$ and $\mathcal{D}(v_2) = \{e,f\}$.
In this case $a$ and $d$ must be compatible to avoid $d$ and $c$
being mergeable. Also $b$ and $f$ must be compatible to stop $e$ and
$f$ being mergeable. Now $d$ and $b$ cannot be compatible since
otherwise Z occurs in $P$. Moreover, $d$ and $e$ cannot be compatible
since otherwise XL occurs in $P$. Furthermore, $d$ and $f$ cannot be
compatible since, whichever variable is chosen for $\overline{v}(P)$,
either Kite(sym) or Kyte(asym) occurs in $P$. It follows that $d$ can
be removed as it is a dangling assignment.

Now we begin again. As before, to avoid $e$ and $f$ being mergeable
or Diamond occurring in $P$, we have that $f$ is compatible with $b$
and not compatible with $a$. To avoid Triangle occurring in $P$, $f$
cannot be compatible with $c$, which means that $f$ can be removed
since it is a dangling assignment.

So, we have $\mathcal{D}(v_1) =\{c\}$ and $\mathcal{D}(v_2) = \{e\}$.
Suppose that there is a compatibility edge between $c$ and $e$. If
the distinguished variable $\overline{v}(P)$ is $v_0$ then, whether
or not there is a compatibility edge between $a$ and $e$, the pattern
is contained in BTP. If $\overline{v}(P) = v_1$ and there is no
compatibility edge between $a$ and $e$, then the pattern is contained
in invsubBTP. If $\overline{v}(P) = v_1$ and there is a compatibility
edge between $a$ and $e$, then the pattern contains rotsubBTP. If
$\overline{v}(P) = v_2$, then the pattern contains rotsubBTP. Since
we have covered all cases in which there is a compatibility edge
between $c$ and $e$, we assume that there is no edge between $c$ and
$e$.

Whether or not there is an incompatibility edge between $a$ and $e$,
the pattern is contained in BTP if $\overline{v}(P) = v_0$, and the
pattern is contained in snake if $\overline{v}(P)$ is either $v_1$ or
$v_2$.

The final case to consider is when $P$ is a 3-variable pattern with
$P^{-} = I_1$. Any two assignments for the third variable $v_2$ could
be merged, so we can assume its domain is a singleton which we denote
by $\{a''\}$. Since $P$ is irreducible, does not contain Diamond, Z, 
Triangle, Kite(sym) or Kite(asym), we can deduce that the only
compatible pairs of assignments include $a''$. In fact, both
$\{a,a''\}$ and $\{a',a''\}$ must be compatible since $P$ is
irreducible. But then $P$ is contained in BTP if $\overline{v}(P)$ is
either $v_0$ or $v_1$, and is contained in invsubBTP if
$\overline{v}(P) = v_2$.
\end{proof}

We need the following technical lemma which shortens several proofs.

\begin{lemma}\label{lemma:trans}
If a pattern $P$ occurs in a var-elim pattern $Q$ with $|e(Q)| \leq
1$, then $P$ is also a var-elim pattern.
\end{lemma}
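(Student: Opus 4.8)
The plan is to argue contrapositively at the level of occurrence: show that if $P$ occurs at a variable $x$ of an arc-consistent instance $I$ (for every injective value mapping when $P$ is quantified, or at some variable when $P$ is non-quantified), then $Q$ also occurs at $x$ of $I$ in the relevant sense, so that the hypothesis that $Q$ is a var-elim pattern forces $x$ (or some variable) to be eliminable. The crucial input is that $P$ \emph{occurs in} $Q$ in the sense of Definition~\ref{def:occurs}, i.e. some reduction $P'$ of $P$ is equivalent to a sub-pattern of $Q$. The key observation to exploit is that occurrence is transitive once one is careful about the quantified data ($\overline{v}$, $e$, $\overline{val}$): if $P'$ is a sub-pattern of $Q$ and $Q$ occurs at $x$ of $I$, then (since a sub-pattern of a reduction of $Q$ that is equivalent to a sub-pattern of $I$ restricts to a sub-pattern of that same sub-pattern of $I$) some reduction of $P$ is equivalent to a sub-pattern of $I$, with $\overline{v}(P)$ mapped to $x$.

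The main subtlety — and the reason the hypothesis $|e(Q)|\le 1$ appears — concerns the value mappings. When $Q$ occurs at $x$ in $I$ for an injective value mapping $m_Q : e(Q)\to\mathcal{D}(x)$, we want to conclude that $P$ occurs at $x$ for the induced mapping $m_P : e(P)\to\mathcal{D}(x)$. Since $P'$ (a reduction of $P$) is a sub-pattern of $Q$, we have $e(P')\subseteq e(Q)$, so $|e(P')|\le 1$; the merges used to pass from $P$ to $P'$ only merge a value into another value that is existential-or-not in a way compatible with $e$, so the mapping on $e(P)$ factors through $e(P')\subseteq e(Q)$ and is therefore essentially determined by the single value in $e(Q)$ (or is trivial). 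Hence injectivity of $m_Q$ gives what we need: $P$ occurs at $x$ for at least one injective value mapping whenever $Q$ does. If $P$ is non-quantified the argument is the same but simpler, since there is no value-mapping bookkeeping and $\overline{v}$ can be chosen freely. I would also handle the bookkeeping for $\overline{val}$ the same way: if $Q$ has a distinguished value then $|e(Q)|\le1$ pins it down, and if $P$ carries a distinguished value then $\overline{val}(P')=\overline{val}(P)$ lies in $e(P')\subseteq e(Q)$, so it is carried along consistently.

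Putting these together: let $I$ be an arc-consistent CSP instance in which $P$ does \emph{not} occur at $x$ for any injective value mapping (resp., does not occur at any variable). By the transitivity argument above, $Q$ does not occur at $x$ for any injective value mapping either (resp., does not occur at any variable) — for if it did, $P$ would too. Since $Q$ is a var-elim pattern, $x$ (resp., some variable) can be eliminated in $I$. As $I$ was arbitrary, $P$ is a var-elim pattern. The step I expect to be the main obstacle is verifying carefully that occurrence composes correctly through reductions — specifically that a reduction of $P$ being a sub-pattern of $Q$, composed with a reduction of $Q$ being equivalent to a sub-pattern of $I$, yields a reduction of $P$ equivalent to a sub-pattern of $I$ — since reductions of $Q$ (merges and dangling removals) could in principle interact badly with the embedding of $P'$; one must check that the image of $P'$ inside $Q$ survives these reductions (dangling assignments of $Q$ that lie in the image of $P'$ are themselves dangling or near-dangling in $P'$ and can be matched by a corresponding dangling reduction of $P$, and merges of $Q$ pull back to merges of $P$), which is exactly where the restriction $|e(Q)|\le1$ keeps the existential data from obstructing the pullback.
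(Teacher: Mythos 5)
Your overall route is the paper's: transitivity of occurrence, taken contrapositively, followed by an appeal to the var-elim property of $Q$. But two things need repair. First, your opening sentence states the implication backwards: you propose to show that if $P$ occurs at $x$ in $I$ then $Q$ occurs at $x$ in $I$, ``so that the hypothesis that $Q$ is a var-elim pattern forces $x$ to be eliminable.'' Neither half is right as stated: transitivity gives ``$Q$ occurs in $I$ $\Rightarrow$ $P$ occurs in $I$'' (the contained pattern appears wherever the containing one does), and the var-elim property of $Q$ is triggered by \emph{non}-occurrence of $Q$, not by occurrence. Your second and third paragraphs do use the correct direction (``for if it did, $P$ would too''), so this is a slip rather than a structural error, but the stated plan is not the argument you actually carry out.

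Second, and more substantively, the hypothesis $|e(Q)|\leq 1$ does its real work in a place you do not use it. Definition~\ref{def:VE} asks for non-occurrence for \emph{at least one} injective value mapping, whereas your final paragraph assumes $P$ does not occur ``for any'' injective mapping (a universal hypothesis, hence a weaker conclusion than the definition requires) and then passes from ``$Q$ does not occur at $x$ for any injective mapping'' to ``$x$ can be eliminated.'' That last step is unjustified unless at least one injective mapping $m'\colon e(Q)\rightarrow\mathcal{D}(x)$ \emph{exists}; if none exists the universal statement is vacuous and the var-elim property of $Q$ yields nothing. Guaranteeing this existence is precisely what $|e(Q)|\leq 1$ buys (domains are non-empty, so an injective $m'$ always exists), and it is the reason the paper gives for the hypothesis: with $|e(Q)|>1$ and $|\mathcal{D}(x)|=1$ the lemma fails. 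Your use of $|e(Q)|\leq 1$ to control how value mappings factor through $e(P')\subseteq e(Q)$ is a legitimate secondary point, but the proof must be run with the existential quantifier of Definition~\ref{def:VE} --- from ``$P$ does not occur at $x$ for some injective $m$'' derive ``$Q$ does not occur at $x$ for some injective $m'$'' --- and must record the existence observation to close.
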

\begin{proof}
Suppose that $P$ occurs in the var-elim pattern $Q$ and that $|e(Q)|
\leq 1$. By transitivity of the occurrence relation, if $Q$ occurs in
a binary CSP instance $I$ (at variable $x$), then so does $P$. It
follows that if (there is an injective mapping $m: e(P) \rightarrow
\mathcal{D}(x)$ for which) $P$ does not occur (at variable $x$) in an
arc consistent binary CSP instance $I$, then (there is an injective
mapping $m': e(Q) \rightarrow \mathcal{D}(x)$ for which) $Q$ does not
occur (at variable $x$) and hence variable elimination is possible.
\end{proof}

The condition $|e(Q)| \leq 1$ is required in the statement of
Lemma~\ref{lemma:trans}, since for an instance in which
$\mathcal{D}(x)$ is a singleton, if $|e(P)| \leq 1$ and $|e(Q)|>1$
there may be an injective mapping $m: e(P) \rightarrow
\mathcal{D}(x)$ for which $P$ does not occur at $x$ but there can
clearly be no injective mapping $m': e(Q) \rightarrow
\mathcal{D}(x)$.

According to Definition~\ref{def:occurs}, a flat quantified pattern
$P$ is a sub-pattern of any existential version $Q$ of $P$ (and hence
$P$ occurs in $Q$). We state this special case of
Lemma~\ref{lemma:trans} as a corollary.

\begin{corollary}\label{cor:flatten}
Let $Q$ be an existential var-elim pattern with $|e(Q)| = 1$. If $P$
is the flattened version of pattern $Q$, corresponding to $e(P) =
\emptyset$, then $P$ is also a var-elim pattern.
\end{corollary}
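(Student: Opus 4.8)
The plan is to obtain this as an immediate consequence of Lemma~\ref{lemma:trans}, exploiting the observation recorded immediately before the corollary statement. So the only real work is to verify, cleanly, that the flattened pattern $P$ occurs in $Q$ in the precise sense of Definition~\ref{def:occurs}, and then to check that the hypotheses of Lemma~\ref{lemma:trans} apply.

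First I would confirm that $P$ is a sub-pattern of $Q$ according to Definition~\ref{def:reductions}. By construction $P$ and $Q$ have the same variable set $X$, the same assignment set $A$, the same compatibility function \cpt, and the same distinguished variable $\overline{v}$; they differ only in that $e(P) = \emptyset$ while $e(Q)$ is a singleton. The quantified-sub-pattern clause requires exactly $\overline{v}(P) = \overline{v}(Q)$ and $e(P) \subseteq e(Q)$, both of which hold trivially since $e(P) = \emptyset$. The one point deserving a sentence of care is the clause concerning distinguished values: $Q$ may carry a distinguished value $\overline{val}(Q) \in e(Q)$, but the requirement linking distinguished values is only triggered when the \emph{sub-pattern} $P'$ itself has one, and $P$ has none because $e(P) = \emptyset$. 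Hence $P$ is a sub-pattern of $Q$, and taking the trivial (empty) reduction of $P$, Definition~\ref{def:occurs} gives that $P$ occurs in $Q$.

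Then I would invoke Lemma~\ref{lemma:trans}: since $P$ occurs in the var-elim pattern $Q$ and $|e(Q)| = 1 \leq 1$, the lemma yields directly that $P$ is a var-elim pattern, which is the assertion of the corollary. I do not anticipate any genuine obstacle here — the entire substantive content is already packaged inside Lemma~\ref{lemma:trans}, and the proof amounts to little more than unwinding the definitions of sub-pattern and occurrence to see that flattening an existential pattern is a special case of the occurrence relation. The only thing to get right is the bookkeeping of $\overline{v}$, $e$ and $\overline{val}$ in Definition~\ref{def:reductions}, which the paragraph above handles.
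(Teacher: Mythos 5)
Your proposal is correct and matches the paper's reasoning exactly: the paper likewise notes just before the corollary that the flat pattern $P$ is a sub-pattern of (hence occurs in) its existential version $Q$, and then derives the corollary as the special case of Lemma~\ref{lemma:trans} with $|e(Q)|=1$. Your extra care with the $\overline{v}$, $e$ and $\overline{val}$ clauses of Definition~\ref{def:reductions} is sound and only makes explicit what the paper leaves implicit.
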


The following lemma deals with the case of existential patterns $P$
with $|e(P)| > 1$.

\begin{lemma} \label{lem:e(P)>1}
No irreducible existential pattern $P$ with $|e(P)| > 1$ is a
var-elim pattern.
\end{lemma}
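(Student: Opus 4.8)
The plan is to show that an irreducible existential pattern $P$ with $|e(P)|>1$ must, because of irreducibility, have at least two distinct existential values on $\overline{v}(P)$ that are non-mergeable, and that this forces $P$ to contain one of the non-var-elim patterns catalogued in Lemma~\ref{lem:nonVEpatterns} (or to be one of them), or else to be small enough that a direct counterexample instance can be exhibited. The key structural observation is this: if $a,a' \in e(P)$ are distinct existential values of $\overline{v}(P)=v_0$, then they cannot be merged (merging an existential value into a non-existential one is forbidden, and even merging two existential values removes a distinguished value), so by the hypothesis that $P$ does not contain ``any pattern with three non-mergeable values for the same variable'' we may assume $\mathcal{D}(v_0)=\{a,a'\}$ with $a,a'$ non-mergeable. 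Non-mergeability of $a,a'$ means there is some assignment $p$ on another variable with $\cpt(\tuple{v_0,a},p)\neq\cpt(\tuple{v_0,a'},p)$, i.e.\ a ``V''-like configuration (one compatibility edge and one incompatibility edge meeting at $p$, with the two endpoints being the two existential values).

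First I would reduce to at most three variables and domain size at most two, exactly as at the start of the proof of Lemma~\ref{lem:contained}, since $P$ contains none of the patterns of Lemma~\ref{lem:nonVEpatterns}. Next I would use the two distinguished existential values $a,a'$ on $v_0$ together with the witness $p$ of their non-mergeability. Since $|e(P)|>1$ forces a distinguished value $\overline{val}(P)$ only when $P$ additionally has one, I would treat both sub-cases: if there is no distinguished value, the pattern is symmetric in $a$ and $a'$; if there is one, say $\overline{val}(P)=a$, the pattern V($+-$), which is already in Lemma~\ref{lem:nonVEpatterns} as a two-existential-value pattern, appears precisely as the configuration $\{\tuple{v_0,a},\tuple{v_0,a'},p\}$ with the mapping sending the two existential values to themselves — so $P$ contains V($+-$) and is therefore not a var-elim pattern. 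The remaining work is to handle the case where $P$ has no distinguished value $\overline{val}(P)$: here I would argue that any such $P$ on three variables with two non-mergeable existential values on $v_0$, not containing Triangle(asym), V($+-$), the Pivot patterns, Diamond, Z, XL, Kite(sym), Kite(asym), etc., must in fact be so constrained that it is contained in one of a handful of explicit patterns for which a counterexample instance is easy to write down (or is itself isomorphic to a sub-pattern of one of these). Alternatively — and this is cleaner — I would observe that whenever $|e(P)|>1$ the pattern $P$ contains, on the set $\{\tuple{v_0,a},\tuple{v_0,a'}\}$ together with the non-mergeability witness $p$, either two non-mergeable assignments on $v_0$ that both lie in $e(P)$, which already matches a pattern in the list, or there is no edge from $p$ to one of $a,a'$ and we can peel $p$ off as dangling, contradicting irreducibility unless the witness lies elsewhere; iterating, the only irreducible possibilities collapse to V($+-$) or Triangle(asym).

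I expect the main obstacle to be the bookkeeping in the no-distinguished-value case: one must be careful that V($+-$) and Triangle(asym) are stated in Lemma~\ref{lem:nonVEpatterns} with a specific existential structure ($\exists x \exists a$), and to invoke ``$P$ contains V($+-$)'' one needs the occurrence to respect $e$, i.e.\ the two existential values of V($+-$) must map to (a subset of) $e(P)$ under an injective value mapping. Since $P$ has $|e(P)|>1$ this is exactly what the two non-mergeable existential values on $v_0$ provide, so the occurrence goes through — but spelling out that the non-mergeability witness together with $a,a'$ realises one of these two patterns, and not merely some reduct of it, requires checking the few ways the witness edge to $p$ can be completed (and ruling out that $p$ becomes dangling, using irreducibility). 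Once that is in place, Lemma~\ref{lem:nonVEpatterns} does the rest: $P$ contains a non-var-elim pattern, hence $P$ is not a var-elim pattern. I would also remark that Lemma~\ref{lemma:trans} is \emph{not} directly applicable here because its hypothesis $|e(Q)|\leq 1$ fails, which is precisely why the present lemma needs its own argument rather than following from the earlier machinery.
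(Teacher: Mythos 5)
Your opening move coincides with the paper's: irreducibility forces two distinct values $a_1,a_2\in e(P)$ on $\overline{v}(P)$ to be non-mergeable, which yields a witness assignment $p$ with $\cpt(\tuple{\overline{v}(P),a_1},p)\neq\cpt(\tuple{\overline{v}(P),a_2},p)$. From there, however, the proposal has a genuine gap. Your concluding inference --- ``$P$ contains V($+-$) or Triangle(asym), hence by Lemma~\ref{lem:nonVEpatterns} $P$ is not a var-elim pattern'' --- is precisely the transitivity step that Lemma~\ref{lemma:trans} supplies only under the hypothesis $|e(Q)|\le 1$, and you yourself observe in your final sentence that this hypothesis fails here. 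Observing that the lemma does not apply does not repair the argument: the witness instances of Lemma~\ref{lem:nonVEpatterns} (e.g.\ $I^{\exists}_4$, with $|\mathcal{D}(x)|=4$) certify non-eliminability of $x$ against $P$ only if there exists an injective mapping $m\colon e(P)\to\mathcal{D}(x)$ for which $P$ fails to occur; when $|e(P)|$ exceeds the domain size of $x$ there is no injective mapping at all, so the hypothesis of Definition~\ref{def:VE} is never triggered and those fixed instances prove nothing about $P$. Two smaller slips compound this: V($+-$) has a single existential value $a$, not two; and your appeal to ``the hypothesis that $P$ does not contain any pattern with three non-mergeable values'' imports an assumption from Lemma~\ref{lem:contained} that is not available here --- the present lemma must cover \emph{every} irreducible $P$ with $|e(P)|>1$, including those that do contain patterns from Lemma~\ref{lem:nonVEpatterns}, and for those the missing transitivity step is again what you would need.

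The paper sidesteps containment entirely. It constructs a single parametrized instance $I_4^k$ with $k=|e(P)|+3$: three mutually constrained variables $x_1,x_2,x_3$ admitting a partial solution but no solution of the whole instance, and a variable $x$ whose domain contains $|e(P)|$ values $\{4,\dots,k\}$ all having identical compatibilities with every assignment to the other variables. Mapping $e(P)$ injectively into these clone values makes any occurrence of $P$ at $x$ impossible, because an occurrence would force $m(a_1)$ and $m(a_2)$ to differ in compatibility with the image of the witness $p$, which indistinguishable values cannot do; yet $x$ cannot be eliminated. If you want to rescue your containment route you would have to pad the domain of $x$ in each witness instance with $|e(P)|$ such indistinguishable values --- at which point you have essentially reconstructed the paper's direct construction, so the detour through V($+-$) and Triangle(asym) buys nothing.
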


\begin{proof}
Let $a_1,a_2$ be two distinct assignments in $e(P)$. Since $P$ is
irreducible, $a_1$ and $a_2$ are not mergeable; so there is an
assignment $b$ such that $\tuple{b,a_1}$ is a compatibility edge and
$\tuple{b,a_2}$ is an incompatibility edge (or vice versa) in $P$.

Consider the instance $I_4^k$ (where $k = |e(P)| + 3$) on four
variables $x_1, x_2, x_3, x$ with domains $\mathcal{D}(x_1) =
\mathcal{D}(x_2) = \mathcal{D}(x_3) = \{0,1,2\}$, $\mathcal{D}(x) =
\{1,\ldots,k\}$ and the following constraints: $x_1 = 2 - x_2$, $x_1
= 2 - x_3$, $x_2 = 2 - x_3$, $(x_i \neq 1) \vee (x = i)$ ($i=1,2,3$).
$I_4^k$ has a partial solution (1,1,1) on variables $x_1,x_2,x_3$ but
has no solution. Furthermore, for any (arbitrary choice of)
injective mapping $m: e(P)
\rightarrow \mathcal{D}(x)$ which maps $e(P)$ to a subset of
$\{4,\ldots,k\}$, $P$ does not occur on $x$ since the values
$m(a_1),m(a_2) \in \{4,\ldots,k\}$ have the same compatibilities with
all assignments to other variables in $I_4^k$.

Therefore there are no irreducible var-elim patterns $P$ with $|e(P)|
> 1$.
\end{proof}

The following theorem is a direct consequence of
Theorem~\ref{thm:allVEpatterns} and Corollary~\ref{cor:flatten}
together with Lemma~\ref{lem:nonVEpatterns},
Lemma~\ref{lem:contained} and Lemma~\ref{lemma:trans}.

\begin{theorem}  \label{thm:flat}
The irreducible flat quantified patterns allowing variable
elimination in arc-consistent binary CSP instances are BTP, invsubBTP
or snake (and their irreducible sub-patterns).
\end{theorem}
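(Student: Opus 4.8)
The plan is to assemble the pieces already in place, since this theorem is the confluence of the earlier lemmas. For the direction that BTP, invsubBTP, snake and their irreducible sub-patterns \emph{are} var-elim patterns: BTP is a var-elim pattern by Theorem~\ref{thm:allVEpatterns}; and since $\exists$invsubBTP and $\exists$snake are var-elim patterns by the same theorem and each has exactly one existential value, Corollary~\ref{cor:flatten} shows that their flattened versions invsubBTP and snake are var-elim patterns too. Each of BTP, invsubBTP and snake is a \emph{flat} quantified pattern, so $|e(\cdot)| = 0 \leq 1$ for all three; hence by Lemma~\ref{lemma:trans} every pattern that occurs in one of them — in particular every irreducible sub-pattern of one of them — is also a var-elim pattern. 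One checks directly that BTP, invsubBTP and snake are themselves irreducible (none has a dangling assignment or a pair of mergeable values), so they genuinely appear on the list.

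For the converse, let $P$ be an arbitrary irreducible flat quantified var-elim pattern. First I would show that $P$ contains none of the patterns listed in Lemma~\ref{lem:nonVEpatterns}: if some such $R$ occurred in $P$, then, as $P$ is flat and so $|e(P)| = 0 \leq 1$, Lemma~\ref{lemma:trans} applied with $Q = P$ would force $R$ to be a var-elim pattern, contradicting Lemma~\ref{lem:nonVEpatterns}. Consequently $P$ has at most three variables, at most two values per variable, no two non-mergeable incompatibility edges in a single constraint, and contains none of Diamond, Z, XL, Triangle, Triangle(asym), V($+ -$), Kite(sym), Kite(asym), rotsubBTP, Pivot(sym), Pivot(asym), Cycle(3). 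Lemma~\ref{lem:contained} now applies directly and tells us that $P$ is contained in (i.e.\ occurs in) BTP, invsubBTP or snake. Since $P$ is irreducible its only reduction is $P$ itself, so ``$P$ occurs in $Q$'' means ``$P$ is equivalent to a sub-pattern of $Q$''; and since equivalence preserves irreducibility, $P$ is equivalent to an irreducible sub-pattern of BTP, invsubBTP or snake (possibly the whole pattern). Combining the two directions gives precisely the claimed characterisation.

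There is no genuine obstacle here: the theorem is a bookkeeping consequence of the earlier results, the substantive work having been done in Lemma~\ref{lem:nonVEpatterns} (exhibiting a refuting instance for each excluded pattern) and in the exhaustive case analysis of Lemma~\ref{lem:contained}. The only points that need care are getting the direction of Lemma~\ref{lemma:trans} right — it is used to transfer ``non-var-elim'' \emph{upwards} from a contained sub-pattern to the containing pattern, which is exactly its contrapositive content, and this is why the hypothesis $|e(\cdot)| \leq 1$ is needed and why it holds automatically for flat quantified patterns — and verifying that BTP, invsubBTP and snake are themselves irreducible so that the list is not silently incomplete.
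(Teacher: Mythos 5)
Your proposal is correct and follows exactly the route the paper intends: the paper states this theorem as a direct consequence of Theorem~\ref{thm:allVEpatterns}, Corollary~\ref{cor:flatten}, Lemma~\ref{lem:nonVEpatterns}, Lemma~\ref{lem:contained} and Lemma~\ref{lemma:trans}, and your write-up assembles those pieces in the intended way (including the correct contrapositive use of Lemma~\ref{lemma:trans}, which applies since flat patterns have $|e(P)|=0\leq 1$). No gaps.
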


We are now able to provide the characterisation for existential
patterns after a little extra work.

\begin{theorem}  \label{thm:existential}
The only irreducible existential patterns which allow variable
elimination in arc-consistent binary CSP instances are
$\exists$subBTP, $\exists$invsubBTP, $\exists$snake (and their
irreducible sub-patterns).
\end{theorem}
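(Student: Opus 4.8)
The plan is to establish the two inclusions separately. One direction is immediate: $\exists$subBTP, $\exists$invsubBTP and $\exists$snake are var-elim patterns by Theorem~\ref{thm:allVEpatterns}, and since each has exactly one existential value, Lemma~\ref{lemma:trans} shows that any pattern occurring in one of them --- in particular every one of their irreducible sub-patterns --- is also a var-elim pattern.

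For the converse, let $P$ be an arbitrary irreducible existential var-elim pattern. Since $P$ is existential, $|e(P)| \geq 1$, and Lemma~\ref{lem:e(P)>1} excludes $|e(P)| > 1$; hence $|e(P)| = 1$, say $e(P) = \{a\}$ with $a \in \mathcal{D}(\overline{v}(P))$. Moreover $P$ cannot contain any of the patterns listed in Lemma~\ref{lem:nonVEpatterns}: if a non-VE pattern $R$ occurred in $P$ then, by transitivity of the occurrence relation, the instance used in Lemma~\ref{lem:nonVEpatterns} to show that $R$ is not a var-elim pattern would equally show that $P$ is not one. In particular $P$ has at most three variables, each with domain of size at most two.

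The main idea is to reduce to the flat classification, Theorem~\ref{thm:flat}. Let $P_0$ be the flattened version of $P$ (same variables, assignments, \cpt\ and distinguished variable, but $e(P_0) = \emptyset$); by Corollary~\ref{cor:flatten}, $P_0$ is a var-elim pattern, although possibly not irreducible. I claim $P_0$ is nonetheless very close to irreducible. Since $P$ is irreducible, no merge reduction is available on $P$; but a merge reduction merging $c$ into $d$ that is available on $P_0$ is blocked on $P$ only if $c = a$, and in that case the merge reduction merging $d$ into $a$ would still be available on $P$ --- contradicting irreducibility. Hence $P_0$ has no mergeable pair at all. Similarly, the only assignment of $P_0$ that can be dangling is $\tuple{\overline{v}(P),a}$, since any other dangling assignment could be removed in $P$ as well. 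Consequently $P_0$ is either already irreducible, or becomes irreducible after removing the dangling assignment $\tuple{\overline{v}(P),a}$ (which then carries at most one, necessarily compatibility, edge) together with any cascade of further dangling removals this exposes --- a bounded process, since $P$ has at most three variables, and one which never empties the domain of $\overline{v}(P)$ (if $\mathcal{D}(\overline{v}(P)) = \{a\}$ the first removal is simply not performed). In every case the irreducible form of $P_0$ is a flat \emph{quantified} var-elim pattern, so by Theorem~\ref{thm:flat} it is equivalent to BTP, invsubBTP, snake, or an irreducible sub-pattern of one of these.

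It then remains to reconstruct $P$ from this flat pattern and check that $P$ is contained in $\exists$subBTP, $\exists$invsubBTP or $\exists$snake. When $P_0$ is itself irreducible, $P$ is obtained from one of BTP, invsubBTP, snake (or an irreducible sub-pattern) by declaring one value $a$ of the distinguished variable to be existential, and possibly distinguished; a finite check --- using that the two values of the distinguished variable of BTP are interchangeable, and that the distinguished variable of invsubBTP and of snake is a singleton --- shows that $P$ is then contained in $\exists$subBTP, $\exists$invsubBTP, or $\exists$snake, respectively (and in a sub-pattern of one of these when we started from a sub-pattern). When instead $\tuple{\overline{v}(P),a}$ is dangling in $P_0$, the pattern $P$ is an existential assignment carrying at most one compatibility edge, attached to a small flat pattern contained in BTP, invsubBTP or snake; the same analysis of where the distinguished variable and existential value sit --- recalling, for instance, that an isolated existential value together with a single incompatibility edge already occurs inside $\exists$snake --- again places $P$ inside one of the three target patterns. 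Crucially, no legal choice here can force $P$ to contain V($+ -$), Triangle(asym) or rotsubBTP, which would contradict $P$ being a var-elim pattern; V($+ -$) and Triangle(asym) play, for the existential value, the role that rotsubBTP plays in the proof of Lemma~\ref{lem:contained}. I expect the main obstacle to be exactly this last piece of bookkeeping --- running through, for each outcome of Theorem~\ref{thm:flat} and each admissible position of the single existential value, which of $\exists$subBTP, $\exists$invsubBTP, $\exists$snake contains the reconstructed $P$ --- with the degenerate situations (where $P_0$ is reducible, or $\mathcal{D}(\overline{v}(P))$ is a singleton) being the most delicate, since there the clean ``flatten and invoke Theorem~\ref{thm:flat}'' route needs extra care.
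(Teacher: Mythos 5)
Your proposal is correct and follows essentially the same route as the paper: flatten $P$, invoke Corollary~\ref{cor:flatten} and Theorem~\ref{thm:flat} to place the flattened pattern inside BTP, invsubBTP or snake, dispose of the latter two cases via their existential versions, and use V($+-$) and Triangle(asym) together with Lemma~\ref{lemma:trans} to rule out all existential placements on BTP other than $\exists$subBTP. You are in fact more careful than the paper about why the flattened pattern is (essentially) irreducible and about the one assignment that can become dangling after flattening; like the paper, you leave the final finite case analysis as asserted bookkeeping rather than carrying it out.
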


\begin{proof}
By Lemma~\ref{lem:e(P)>1} we only need to consider patterns $P$ with
$|e(P)| = 1$.

We know from Theorem~\ref{thm:allVEpatterns} that $\exists$subBTP,
$\exists$invsubBTP, $\exists$snake are var-elim patterns.

Theorem~\ref{thm:flat} and  Corollary~\ref{cor:flatten} show that
when we flatten an existential var-elim pattern then the resulting
flat quantified pattern is contained in BTP, invsubBTP or snake.

In the case of invsubBTP and snake, the existential versions of these
patterns are var-elim patterns and so there is nothing left to prove.
So we only need to consider quantified patterns which flatten into
sub-patterns of BTP.

Let $\exists$BTP denote the existential version $Q$ of BTP such that
$|e(Q)|=1$. By symmetry, $\exists$BTP is unique. The only remaining
case is when $P$ is an irreducible sub-pattern of $\exists$BTP with
$|e(P)|=1$. By a straightforward exhaustive case analysis, we find
that, in this case, either $P$ contains V($+ -$) or Triangle(asym) or
$P$ is a sub-pattern of $\exists$subBTP. The result then follows by
Lemma~\ref{lem:nonVEpatterns} and Lemma~\ref{lemma:trans}.
\end{proof}

Combining Theorem~\ref{thm:flat} and Theorem~\ref{thm:existential},
we obtain the characterisation of irreducible quantified var-elim
patterns.

\begin{theorem} \label{thm:dichotomy}
The only irreducible quantified patterns which allow variable
elimination in arc-consistent binary CSP instances are $BTP$,
$\exists$subBTP, $\exists$invsubBTP, $\exists$snake (and their
irreducible sub-patterns).
\end{theorem}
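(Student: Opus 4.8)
The plan is to obtain Theorem~\ref{thm:dichotomy} as essentially a bookkeeping corollary of the two preceding theorems, since every irreducible quantified pattern is either flat (empty $e(P)$) or existential (non-empty $e(P)$), and these two cases are handled by Theorem~\ref{thm:flat} and Theorem~\ref{thm:existential} respectively. So the first step is simply to split on whether $e(P) = \emptyset$.

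In the flat case, Theorem~\ref{thm:flat} already tells us that an irreducible flat quantified pattern allowing variable elimination must be $BTP$, invsubBTP, snake, or an irreducible sub-pattern of one of these. Among these, $BTP$ appears in the list of the theorem we are proving; we still owe an argument that invsubBTP and snake (and their sub-patterns) are themselves sub-patterns of, respectively, $\exists$invsubBTP and $\exists$snake. This is immediate from the remark (made just before Corollary~\ref{cor:flatten}) that a flat quantified pattern is always a sub-pattern of any of its existential versions: invsubBTP is the flattened version of $\exists$invsubBTP, and likewise for snake. Hence every irreducible flat var-elim pattern is an irreducible sub-pattern of one of $BTP$, $\exists$invsubBTP, $\exists$snake, and in particular appears in the claimed list (reading ``and their irreducible sub-patterns'' generously). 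One subtlety to double-check here: a sub-pattern of invsubBTP that is \emph{itself} irreducible is automatically an irreducible sub-pattern of $\exists$invsubBTP, because adding the existential annotation does not create new merge or dangling reductions (it only removes them), so irreducibility is preserved under passing to the existential version.

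In the existential case, i.e.\ $|e(P)| \geq 1$, Lemma~\ref{lem:e(P)>1} disposes of $|e(P)| > 1$ entirely (no such pattern is a var-elim pattern), and Theorem~\ref{thm:existential} states that the only irreducible existential patterns allowing variable elimination are $\exists$subBTP, $\exists$invsubBTP, $\exists$snake and their irreducible sub-patterns. Combining the two cases, the complete list of irreducible quantified var-elim patterns is $BTP$, $\exists$subBTP, $\exists$invsubBTP, $\exists$snake, together with their irreducible sub-patterns, which is exactly the statement. Conversely, all four named patterns are indeed var-elim patterns by Theorem~\ref{thm:allVEpatterns}, and their irreducible sub-patterns are var-elim patterns by Lemma~\ref{lemma:trans} (each of the four has $|e| \leq 1$, so the hypothesis of that lemma applies); this gives the ``only the following are, and these all are'' reading of the characterisation.

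I do not expect any genuine obstacle here: the theorem is stated as ``a direct consequence'', and the real content lies in Theorems~\ref{thm:flat} and~\ref{thm:existential} and the lemmas feeding them. The only place where one must be slightly careful is the interface between the flat list $\{BTP, \text{invsubBTP}, \text{snake}\}$ produced by Theorem~\ref{thm:flat} and the existential list $\{BTP, \exists\text{subBTP}, \exists\text{invsubBTP}, \exists\text{snake}\}$ in the final statement --- one needs to note that invsubBTP and snake are subsumed (as sub-patterns) by their existential counterparts, so that they need not be listed separately, whereas $BTP$ is listed because its existential version $\exists$BTP is \emph{not} a var-elim pattern (as shown inside the proof of Theorem~\ref{thm:existential}). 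Making this absorption explicit is the one sentence of actual argument the proof requires.
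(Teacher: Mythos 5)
Your proposal is correct and follows essentially the same route as the paper, which simply declares the theorem a direct consequence of Theorem~\ref{thm:flat} and Theorem~\ref{thm:existential} (together with Theorem~\ref{thm:allVEpatterns} and Lemma~\ref{lemma:trans} for the converse direction). The one point you elaborate beyond the paper --- that invsubBTP and snake from the flat list are absorbed as irreducible sub-patterns of $\exists$invsubBTP and $\exists$snake, while $BTP$ must remain listed because $\exists$BTP is not a var-elim pattern --- is a correct and welcome clarification of a step the paper leaves implicit.
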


It is easy to see that variable elimination cannot destroy arc
consistency. Hence there is no need to re-establish arc consistency
after variable eliminations. Furthermore, the result of applying our
var-elim rules until convergence is unique; variable eliminations may
lead to new variable eliminations but cannot introduce patterns and
hence cannot invalidate applications of our var-elim rules.

\section{Value elimination patterns}

We now consider when forbidding a pattern can allow the elimination
of values from domains rather than the elimination of variables.
Value-elimination is at the heart of the simplification operations
employed by constraint solvers during preprocessing or during search.
In current solvers such eliminations are based almost exclusively on
consistency operations: a value is eliminated from the domain of a
variable if this assignment can be shown to be inconsistent (in the
sense that it cannot be part of any solution). Another
value-elimination operation which can be applied is neighbourhood
substitutability which allows the elimination of certain assignments
which are unnecessary for determining the satisfiability of the
instance. Neighbourhood substitutability  can be described by means
of the pattern shown in Figure~\ref{fig:ns}. If in a binary CSP
instance $I$, there are two assignments $a,b$ for the same variable
$x$ such that this pattern does not occur (meaning that $a$ is
consistent with all assignments with which $b$ is consistent), then
the assignment $b$ can be eliminated. This is because in any solution
containing $b$, simply replacing $b$ by $a$ produces another
solution.

It is worth noting that even when all solutions are required,
neighbourhood substitutability can still be applied since all
solutions to the original instance can be recovered from the set of
solutions to the reduced instance in time which is linear in the
total number of solutions and polynomial in the size of the
instance~\cite{Cooper97:neighbourhood}.

\setlength{\unitlength}{1pt} 
\begin{figure}
\centering

\begin{picture}(102,70)(2,40)

\put(0,50){\usebox{\varone}} \put(60,50){\usebox{\vartwo}}
\put(10,70){\line(6,-1){60}} \dashline{5}(10,70)(70,80)
\put(70,40){\makebox(0,0){$x$}} \put(89,80){\makebox(0,0){$a$}}
\put(84,55){\framebox(10,10){$b$}}
\put(40,100){\makebox(0,0){$\exists x \exists b \exists a$}}
\end{picture}

\caption{A value elimination pattern corresponding to neighbourhood
substitution.}

\label{fig:ns}

\end{figure}
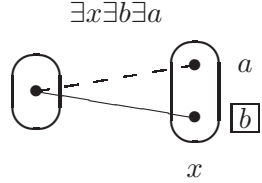
\setlength{\unitlength}{1pt}

\begin{definition}
We say that a value $b \in \mathcal{D}(x)$ \emph{can be eliminated}
from an instance $I$ if the instance $I'$ in which the assignment $b$
has been deleted from $\mathcal{D}(x)$ is satisfiable if and only if
$I$ is satisfiable.
\end{definition}

\begin{definition} \label{def:value-elimination}
An existential pattern $P$ with a distinguished value
$\overline{val}(P)$ is a \emph{value elimination pattern (val-elim
pattern)} if in all arc-consistent instances $I$, whenever the
pattern does not occur at a variable $x$ in $I$ for at least one
injective value mapping $m$, the value $m(\overline{val}(P))$ can be
eliminated from $\mathcal{D}(x)$ in $I$.
\end{definition}

An obvious question is which patterns allow value elimination while
preserving satisfiability? The following theorem gives three
existential patterns which provide strict generalisations of
neighbourhood substitutability since in each case the pattern of
Figure~\ref{fig:ns} is a sub-pattern. In each of the patterns $P$ in
Figure~\ref{fig:ns} and Figure~\ref{fig:value-elimination-patterns},
the value that can be eliminated $\overline{val}(P)$ is the value $b$
surrounded by a small box.

\begin{theorem} \label{thm:all-val-elim -patterns}
The existential patterns shown in
Figure~\ref{fig:value-elimination-patterns}, namely $\exists$2snake,
$\exists$2invsubBTP and $\exists$2triangle, are each val-elim patterns.
\end{theorem}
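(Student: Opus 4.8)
The plan is to follow exactly the pattern of the proof of Theorem~\ref{thm:allVEpatterns}, but adapted to value elimination rather than variable elimination. Given an arc-consistent instance $I$ in which $\exists$2snake (resp. $\exists$2invsubBTP, $\exists$2triangle) does not occur at $x$ for an injective value mapping $m$ with $m(\overline{val}) = b$ and $m(a) = d$ for the other existential value $a$, we want to show that $I$ is satisfiable if and only if $I'$, the instance with $b$ deleted from $\mathcal{D}(x)$, is satisfiable. One direction is trivial: any solution to $I'$ is a solution to $I$. For the converse, we take a solution $s$ to $I$; if $s(x) \neq b$ we are done, so assume $s(x) = b$. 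We must produce a solution $s'$ to $I$ with $s'(x) = d$ (which is then a solution to $I'$).

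The first step is to set up the same partition of $X \setminus \{x\}$ as in Theorem~\ref{thm:allVEpatterns}, but now relative to the fixed solution $s$ with $s(x)=b$: let $Y = \{y : \cpt(\tuple{y,s(y)},\tuple{x,d}) = {\tt TRUE}\}$ and $\overline{Y} = \{z : \cpt(\tuple{z,s(z)},\tuple{x,d}) = {\tt FALSE}\}$. If $\overline{Y} = \emptyset$ we simply change $s(x)$ to $d$ and are done. Otherwise, by arc consistency pick $\tuple{z,t(z)}$ compatible with $\tuple{x,d}$ for each $z \in \overline{Y}$. Now the key observation is that the assignment $\tuple{x,b}$ together with the solution values $s(y)$ and $s(z)$ already gives us the left-hand ``neighbourhood substitution'' part of each of the patterns in Figure~\ref{fig:value-elimination-patterns} — that is, $\tuple{x,b}$ is compatible with $s(z)$ (because $s$ is a solution) while $\tuple{x,d}$ is not. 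The non-occurrence of the pattern then forces the compatibilities we need, exactly as in the proof of Theorem~\ref{thm:allVEpatterns}: for $\exists$2snake we deduce $\cpt(\tuple{y,s(y)},\tuple{z,t(z)}) = {\tt TRUE}$ for $y \in Y$, $z \in \overline{Y}$, and $\cpt(\tuple{y,t(y)},\tuple{z,t(z)}) = {\tt TRUE}$ for $y,z \in \overline{Y}$; for $\exists$2invsubBTP we deduce the analogous compatibilities that make the ``patched'' assignment a partial solution; and for $\exists$2triangle likewise. In each case we define
\begin{equation*}
s'(v) =
\begin{cases}
d & \text{if $v = x$,}\\
s(v) & \text{if $v \in Y$,}\\
t(v) & \text{otherwise,}
\end{cases}
\end{equation*}
and check that $s'$ is a solution to $I$ with $s'(x) = d \neq b$, hence a solution to $I'$.

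The main obstacle — and the reason this is a theorem rather than a one-line corollary of Theorem~\ref{thm:allVEpatterns} — is that the value mapping must be injective and must send both $a$ and $\overline{val}$ somewhere, so we cannot simply reuse an arbitrary assignment $\tuple{x,d}$: the pattern has two existential values on $x$ (the $b$ in the box and the $a$), and we must verify that the relevant four-assignment configuration forbidden by the pattern is precisely $\{\tuple{y,s(y)}, \tuple{z,s(z)}, \tuple{x,b}, \tuple{x,d}\}$ (or, for the $\overline{Y}\times\overline{Y}$ case, $\{\tuple{y,s(y)}, \tuple{y,t(y)}, \tuple{z,t(z)}, \tuple{x,b}\}$ or $\tuple{x,d}$ as appropriate) with $b \neq d$, so that injectivity of the mapping is genuinely respected. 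Thus the care needed is in confirming, for each of the three patterns, that the incompatibility and compatibility edges present in the configuration forced by $s$, $t$ and arc consistency match the pattern's edge set under the mapping $a \mapsto d$, $\overline{val}\mapsto b$; once this bookkeeping is done the deduction of the missing {\tt TRUE} edges and the verification that $s'$ is a solution are routine, essentially identical to the three cases already handled in the proof of Theorem~\ref{thm:allVEpatterns}.
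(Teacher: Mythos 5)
Your proposal is correct and follows essentially the same route as the paper's proof: fix a solution $s$ with $s(x)=b$, partition the remaining variables into $Y$ and $\overline{Y}$ according to compatibility with $\tuple{x,a}$, obtain repair values $t(z)$ by arc consistency, and use non-occurrence of the pattern (whose ``extra'' edges to $b$ are supplied by $s$ being a full solution) to force the compatibilities needed for the patched assignment $s'$. The only points the paper spells out that you leave implicit are the degenerate case of instances with at most two variables and the fact that for $\exists$2triangle the deduction actually yields $\overline{Y}=\emptyset$, so the repair is simply $s$ with $b$ replaced by $a$.
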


\setlength{\unitlength}{1pt} 
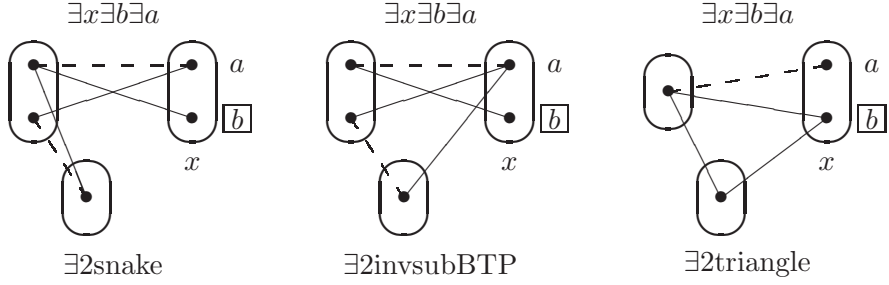
\begin{figure}
\centering

\begin{picture}(342,120)(2,0)

\put(0,0){
\begin{picture}(102,120)(0,0)
\put(0,50){\usebox{\vartwo}} \put(20,10){\usebox{\varone}}
\put(60,50){\usebox{\vartwo}}  \put(10,60){\line(3,1){60}}
\put(10,80){\line(3,-1){60}} \put(10,80){\line(2,-5){20}}
\dashline{5}(10,80)(70,80) \dashline{5}(30,30)(10,60)
\put(70,43){\makebox(0,0){$x$}} \put(87,80){\makebox(0,0){$a$}}
\put(82,55){\framebox(10,10){$b$}}
\put(40,100){\makebox(0,0){$\exists x \exists b \exists a $}}
\put(40,5){\makebox(0,0){$\exists$2snake}}
\end{picture}}

\put(120,0){
\begin{picture}(102,120)(0,0)
\put(0,50){\usebox{\vartwo}} \put(20,10){\usebox{\varone}}
\put(60,50){\usebox{\vartwo}}  \put(10,60){\line(3,1){60}}
\put(10,80){\line(3,-1){60}} \put(30,30){\line(4,5){40}}
\dashline{5}(10,80)(70,80) \dashline{5}(30,30)(10,60)
\put(70,43){\makebox(0,0){$x$}} \put(87,80){\makebox(0,0){$a$}}
\put(82,55){\framebox(10,10){$b$}}
\put(40,100){\makebox(0,0){$\exists x \exists b \exists a $}}
\put(40,5){\makebox(0,0){$\exists$2invsubBTP}}
\end{picture}}

\put(240,0){
\begin{picture}(102,120)(0,0)
\put(0,50){\usebox{\varone}} \put(20,10){\usebox{\varone}}
\put(60,50){\usebox{\vartwo}}  \put(10,70){\line(6,-1){60}}
\put(10,70){\line(1,-2){20}} \put(30,30){\line(4,3){40}}
\dashline{5}(10,70)(70,80) \put(70,43){\makebox(0,0){$x$}}
\put(87,80){\makebox(0,0){$a$}} \put(82,55){\framebox(10,10){$b$}}
\put(40,100){\makebox(0,0){$\exists x \exists b \exists a $}}
\put(40,5){\makebox(0,0){$\exists$2triangle}}
\end{picture}}

\end{picture}

\caption{Three val-elim patterns.}

\label{fig:value-elimination-patterns}

\end{figure}
\setlength{\unitlength}{1pt}

\begin{proof}
We first show that the result holds for instances $I$ with at most two
variables. Let $x$ be a variable of $I$. For
$|\mathcal{D}(x)|
> 1$, there is clearly an injective mapping $m: e(P) \rightarrow
\mathcal{D}(x)$ for which none of the patterns $P$ shown in
Figure~\ref{fig:value-elimination-patterns} occur since they all have
three variables. But, we can always eliminate all but one value in
$\mathcal{D}(x)$ without destroying satisfiability, since by arc
consistency the remaining value is necessarily part of a solution.
For $|\mathcal{D}(x)| \leq 1$, there can be no injective mapping $m:
e(P) \rightarrow \mathcal{D}(x)$ since $|e(P) = 2|$ and hence there
is nothing to prove. In the rest of the proof we therefore only need
to consider instances $I = \tuple{X,D,A,\cpt}$ with at least three
variables. We will prove the result for each of the three patterns
one by one.

We consider first $\exists$2snake. Suppose that for a variable $x$
and values $a,b \in \mathcal{D}(x)$, the pattern $\exists$2snake does
not occur. Let $I'$ be identical to $I$ except that value $b$ has
been eliminated from $\mathcal{D}(x)$. Suppose that $s$ is a solution
to $I$ with $s(x) = b$. It suffices to show that $I'$ has a solution.
Let $Y$ ($\overline{Y}$) be the set of variables $z \in X \setminus
\{x\}$ such that $\cpt(\tuple{z,s(z)},\tuple{x,a})$ $=$ {\tt TRUE}
({\tt FALSE}). By arc consistency, there are assignments
$\tuple{z,t(z)}$ for all $z \in \overline{Y}$ which are compatible
with $\tuple{x,a}$. Let $z \in \overline{Y}$ and $y \in X \setminus
\{x,z\}$. Since $s$ is a solution with $s(x) = b$,
$\cpt(\tuple{y,s(y)}, \tuple{z,s(z)})=$ $\cpt(\tuple{x,b},
\tuple{z,s(z)})=$ {\tt TRUE}. Since 
$\exists$2snake does not occur on $\{\tuple{x,a}, \tuple{x,b},
\tuple{z,s(z)}, \tuple{z,t(z)}, \tuple{y,s(y)}\}$, we can deduce that
$\cpt(\tuple{z,t(z)}, \tuple{y,s(y)})=$ {\tt TRUE}. In particular, we
have $\cpt(\tuple{y,s(y)}, \tuple{z,t(z)})=$ {\tt TRUE} for all $y
\neq z \in \overline{Y}$. Then, since $\exists$2snake does not occur
on the assignments $\{\tuple{x,a}, \tuple{x,b}, \tuple{y,s(y)},
\tuple{y,t(y)}, \tuple{z,t(z)}\}$, we can deduce
$\cpt(\tuple{z,t(z)}, \tuple{y,t(y)})=$ {\tt TRUE}. Hence the
assignments $\tuple{z,t(z)}$ ($z \in \overline{Y}$) are compatible
between themselves, are all compatible with all $\tuple{y,s(y)}$ ($y
\in Y$) and with $\tuple{x,a}$. Thus, $s'$ is a solution to $I'$,
where
\begin{equation*}
s'(v)=
\begin{cases}
a & \text{if $v = x$,}\\
s(v) & \text{if $v \in Y$,}\\
t(v) & \text{otherwise.}
\end{cases}
\end{equation*}

We now consider $\exists$2invsubBTP. Suppose that for a variable $x$
and values $a,b \in \mathcal{D}(x)$ in an instance $I$, the pattern
$\exists$2invsubBTP does not occur. Let $I'$ be identical to $I$
except that value $b$ has been eliminated from $\mathcal{D}(x)$.
Suppose that $s$ is a solution to $I$ with $s(x) = b$ and again let
$Y$ ($\overline{Y}$) be the set of variables $z \in X \setminus
\{x\}$ such that $\cpt(\tuple{z,s(z)},\tuple{x,a})$ $=$ {\tt TRUE}
({\tt FALSE}). By arc consistency, for each $z \in \overline{Y}$,
there is an assignment $\tuple{z,t(z)}$ which is compatible with
$\tuple{x,a}$. Let $s'$ be defined as above. Consider $v \in X \setminus \{x\}$.
We know that $\cpt(\tuple{x,a}, \tuple{v,s'(v)})=$ {\tt TRUE}. Let $z
\in \overline{Y}$. Since the pattern $\exists$2invsubBTP does not
occur on $\{\tuple{x,a}, \tuple{x,b}, \tuple{z,s(z)}, \tuple{z,t(z)},
\tuple{v,s'(v)}\}$, we can deduce that $\cpt(\tuple{z,t(z)},
\tuple{v,s'(v)})=$ {\tt TRUE}. It follows that $s'$ is a solution to
$I'$.

Finally, we consider $\exists$2triangle. Suppose that in an instance
$I$, for values $a,b \in \mathcal{D}(x)$, the pattern
$\exists$2triangle does not occur. Let $I'$ be identical to $I$
except that value $b$ has been eliminated from $\mathcal{D}(x)$.
Suppose that $s$ is a solution to $I$ with $s(x) = b$. Then
$\tuple{x,a}$ must be compatible with all assignments
$\tuple{y,s(y)}$ ($y \in X \setminus \{x\}$), otherwise the pattern
$\exists$2triangle would occur on $\{\tuple{x,a}, \tuple{x,b},
\tuple{y,s(y)}, \tuple{z,s(z)}\}$ for all $z \in X \setminus
\{x,y\}$. It follows that $s''$ is a solution to $I'$, where
\begin{equation*}
s''(v)=
\begin{cases}
a & \text{if $v = x$,}\\
s(v) & \text{otherwise.}
\end{cases}
\end{equation*}
\end{proof}

\begin{example}
Consider a CSP instance corresponding to a problem of colouring a
complete graph on four vertices. The colours assigned to the four
vertices are represented by variables $x_1,x_2,x_3,x_4$ whose domains
are, respectively, $\{0,1,2,3\}$, $\{0,1\}$, $\{0,2\}$, $\{0,3\}$.
Notice that the instance is arc consistent and no eliminations are
possible by neighbourhood substitution. However, the value 1 can be
eliminated from the domain of $x_1$ since for the mapping $a \mapsto
0$, $b \mapsto 1$, the pattern $\exists$2snake does not occur on
$x_1$. The values 2 and 3 can also be eliminated from the domain of
$x_1$ for the same reason. After applying arc consistency to the
resulting instance, all domains are singletons.
\end{example}

\begin{example}
Consider the arc-consistent instance on three Boolean variables
$x,y,z$ and with the constraints $z \vee \neg x$, $z \vee y$, $\neg y
\vee \neg x$. In this instance we can eliminate the assignment
$\tuple{x,0}$ since $\exists$2invsubBTP does not occur on variable
$x$ for the mapping $a \mapsto 1$, $b \mapsto 0$. The assignments
$\tuple{y,1}$ and $\tuple{z,0}$ then have no support at $x$ and hence
can be eliminated by arc consistency, leaving an instance in which
all domains are singletons.
\end{example}

\begin{example}
Consider the arc-consistent CSP instance corresponding to a graph
colouring problem on a complete graph on three vertices in which the
domains of variables $x_1,x_2,x_3$ are each $\{0,1\}$. Again, no
eliminations are possible by neighbourhood substitution. However, the
value 1 can be eliminated from the domain of $x_1$ since for the
mapping $a \mapsto 0$, $b \mapsto 1$, the pattern $\exists$2triangle
does not occur on $x_1$. Applying arc consistency then leads to an
empty domain from which we can deduce that the original instance was
unsatisfiable.
\end{example}

Neighbourhood substitution cannot destroy arc
consistency~\cite{Cooper97:neighbourhood}, but eliminating a value by
a val-elim pattern can provoke new eliminations by arc consistency,
as we have seen in the above examples.

The result of applying a sequence of neighbourhood substitution
eliminations until convergence is unique modulo
isomorphism~\cite{Cooper97:neighbourhood}. This is not true for the
result of eliminating domain elements by val-elim patterns, as the
following example demonstrates.

\begin{example}
Consider the CSP instance on three variables $x_1,x_2,x_3$, each with
domain $\{0,1,2\}$, and with the following constraints: $(x_1 \neq 2)
\vee (x_2 \neq 2)$, $(x_1,x_3) \in R$, $(x_2,x_3) \in R$, where $R$
is the relation $\{(0,0), (0,2), (1,1), (2,1)$, $(2,2)\}$. We can
eliminate the assignment $\tuple{x_3,0}$ since $\exists$2snake does
not occur on $x_3$ with the value $a$ mapping to $2$ and $b$ to $0$.
But then in the resulting arc-consistent instance, no more
eliminations are possible by any of the val-elim patterns shown in
Figure~\ref{fig:value-elimination-patterns}. However, in the original
instance we could have eliminated the assignment $\tuple{x_3,1}$
since $\exists$2snake does not occur on $x_3$ with the value $a$
mapping to $0$ and $b$ to $1$. Then we can successively eliminate
$\tuple{x_1,1}$, $\tuple{x_2,1}$ by arc consistency and then
$\tuple{x_1,2}$, $\tuple{x_2,2}$, $\tuple{x_3,0}$ by $\exists$2snake.
In the resulting instance all domains are singletons. Thus, for this
instance there are two convergent sequences of value eliminations
which produce non-isomorphic instances.
\end{example}

It is clear that variable elimination by our var-elim rules can
provoke new value eliminations by our val-elim rules. Value
elimination may provoke new variable eliminations, but may also
invalidate a variable elimination if the value eliminated (or one of
the values eliminated by subsequent arc consistency operations) is
the only value on which an existential var-elim pattern does not
occur. Thus, to maximize reductions, variable eliminations should
always be performed before value eliminations.

\section{Characterisation of value elimination patterns}

As with existential variable-elimination patterns, we can give a
dichotomy for irreducible existential val-elim patterns. We first
require the following lemma which shows that many patterns, including
those illustrated in Figure~\ref{fig:not-value-elimination} (along
with the patterns Z and Diamond shown in
Figure~\ref{fig:nonVEpatterns}), cannot be contained in val-elim
patterns. In Figure~\ref{fig:not-value-elimination}, each of the
patterns I($-$), L($+-$), triangle1, triangle2, $\exists$Kite,
$\exists$Kite(asym) and $\exists$Kite1 has a distinguished value $b =
\overline{val}(P)$ which is highlighted in the figure by placing the
value in a small box.

\setlength{\unitlength}{1pt} 
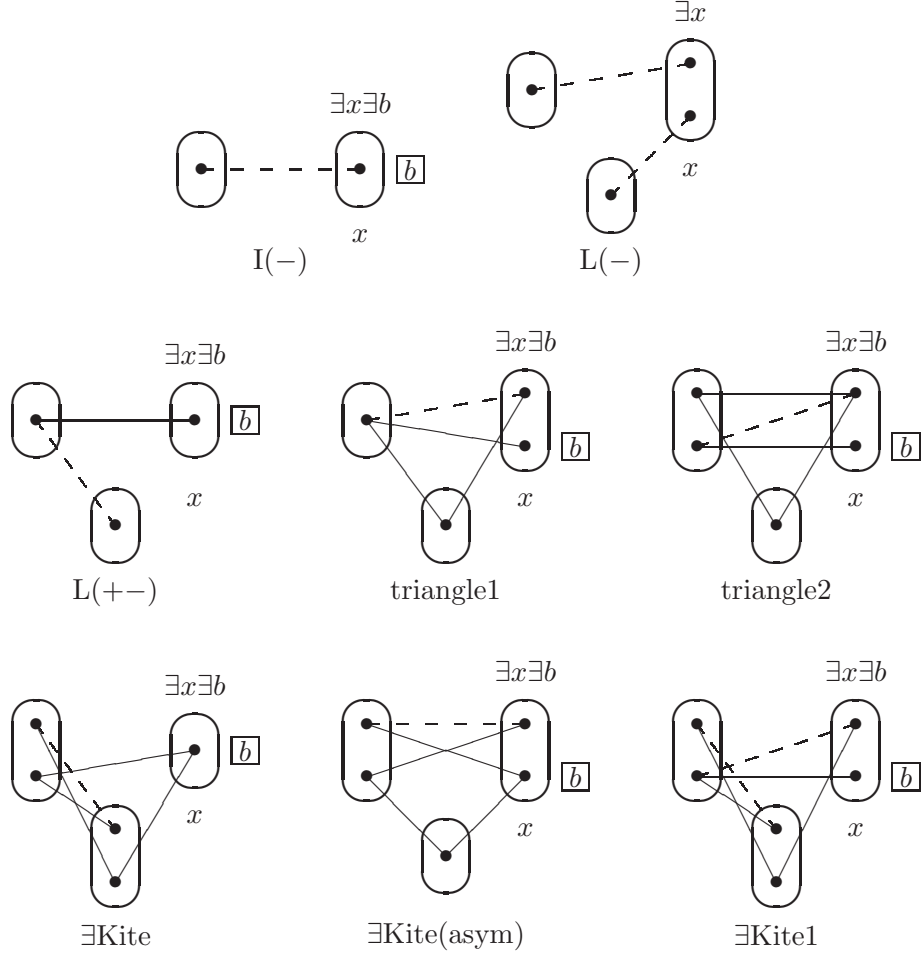
\begin{figure}
\centering

\begin{picture}(354,360)(10,0)

\put(62.5,250){
\begin{picture}(104,90)(-10,-20)
\put(0,10){\usebox{\varone}} \put(60,10){\usebox{\varone}}
\dashline{5}(10,30)(70,30) \put(70,5){\makebox(0,0){$x$}}
\put(84,25){\framebox(10,10){$b$}}
\put(70,55){\makebox(0,0){$\exists x \exists b$}}
\put(40,-5){\makebox(0,0){I($-$)}}
\end{picture}}


\put(187.5,250){
\begin{picture}(104,110)(-10,-20)
\put(0,40){\usebox{\varone}} \put(60,40){\usebox{\vartwo}}
\put(30,0){\usebox{\varone}} \dashline{5}(10,60)(70,70)
\dashline{5}(40,20)(70,50) \put(70,30){\makebox(0,0){$x$}}
\put(70,90){\makebox(0,0){$\exists x$}}
\put(40,-5){\makebox(0,0){L($-$)}}
\end{picture}}


\put(0,125){
\begin{picture}(104,110)(-10,-20)
\put(0,40){\usebox{\varone}} \put(60,40){\usebox{\varone}}
\put(30,0){\usebox{\varone}} \put(10,60){\line(1,0){60}}
\dashline{5}(10,60)(40,20) \put(70,30){\makebox(0,0){$x$}}
\put(84,55){\framebox(10,10){$b$}} \put(70,85){\makebox(0,0){$\exists
x \exists b$}} \put(40,-5){\makebox(0,0){L($+-$)}}
\end{picture}}

\put(125,125){
\begin{picture}(104,110)(-10,-20)
\put(0,40){\usebox{\varone}} \put(30,0){\usebox{\varone}}
\put(60,40){\usebox{\vartwo}} \put(10,60){\line(6,-1){60}}
\dashline{5}(10,60)(70,70) \put(10,60){\line(3,-4){30}}
\put(40,20){\line(3,5){30}} \put(70,30){\makebox(0,0){$x$}}
\put(84,45){\framebox(10,10){$b$}} \put(70,90){\makebox(0,0){$\exists
x \exists b$}} \put(40,-5){\makebox(0,0){triangle1}}
\end{picture}}

\put(250,125){
\begin{picture}(104,110)(-10,-20)
\put(0,40){\usebox{\vartwo}} \put(30,0){\usebox{\varone}}
\put(60,40){\usebox{\vartwo}} \put(10,70){\line(1,0){60}}
\put(10,50){\line(1,0){60}} \dashline{5}(10,50)(70,70)
\put(10,70){\line(3,-5){30}} \put(40,20){\line(3,5){30}}
\put(70,30){\makebox(0,0){$x$}} \put(84,45){\framebox(10,10){$b$}}
\put(70,90){\makebox(0,0){$\exists x \exists b$}}
\put(40,-5){\makebox(0,0){triangle2}}
\end{picture}}

\put(0,0){
\begin{picture}(104,110)(-10,-20)
\put(0,40){\usebox{\vartwo}} \put(60,40){\usebox{\varone}}
\put(30,0){\usebox{\vartwo}} \put(10,50){\line(6,1){60}}
\dashline{5}(10,70)(40,30) \put(10,70){\line(1,-2){30}}
\put(10,50){\line(3,-2){30}} \put(40,10){\line(3,5){30}}
\put(70,35){\makebox(0,0){$x$}} \put(84,55){\framebox(10,10){$b$}}
\put(70,85){\makebox(0,0){$\exists x \exists b$}}
\put(40,-10){\makebox(0,0){$\exists$Kite}}
\end{picture}}

\put(125,0){
\begin{picture}(104,110)(-10,-20)
\put(0,40){\usebox{\vartwo}} \put(60,40){\usebox{\vartwo}}
\put(30,0){\usebox{\varone}} \put(10,50){\line(3,1){60}}
\dashline{5}(10,70)(70,70) \put(10,70){\line(3,-1){60}}
\put(10,50){\line(1,-1){30}} \put(40,20){\line(1,1){30}}
\put(70,30){\makebox(0,0){$x$}} \put(84,45){\framebox(10,10){$b$}}
\put(70,90){\makebox(0,0){$\exists x \exists b$}}
\put(40,-10){\makebox(0,0){$\exists$Kite(asym)}}
\end{picture}}

\put(250,0){
\begin{picture}(104,110)(-10,-20)
\put(0,40){\usebox{\vartwo}} \put(30,0){\usebox{\vartwo}}
\put(60,40){\usebox{\vartwo}} \put(10,50){\line(1,0){60}}
\put(10,50){\line(3,-2){30}} \dashline{5}(10,50)(70,70)
\dashline{5}(10,70)(40,30) \put(10,70){\line(1,-2){30}}
\put(40,10){\line(1,2){30}} \put(70,30){\makebox(0,0){$x$}}
\put(84,45){\framebox(10,10){$b$}} \put(70,90){\makebox(0,0){$\exists
x \exists b$}} \put(40,-10){\makebox(0,0){$\exists$Kite1}}
\end{picture}}

\end{picture}

\caption{Patterns which do not allow value elimination.}

\label{fig:not-value-elimination}

\end{figure}
\setlength{\unitlength}{1pt}

\begin{lemma} \label{lem:not-value-elimination}
None of the following existential patterns $P$ (with a distinguished
value $\overline{val}(P)$) allow value elimination in arc-consistent
binary CSP instances: any pattern on strictly more than three
variables, any pattern with three non-mergeable values for the same
variable $v \neq \overline{v}(P)$, any pattern with two non-mergeable
incompatibility edges in the same constraint, any pattern containing
any of Z, Diamond, I($-$), L($-$), L($+-$), triangle1, triangle2,
$\exists$Kite, $\exists$Kite(asym) or $\exists$Kite1.
\end{lemma}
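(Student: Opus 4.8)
The plan is to follow the template of the proof of Lemma~\ref{lem:nonVEpatterns}, with the roles of ``satisfiable'' and ``unsatisfiable'' interchanged. For each pattern $P$ in the list (with distinguished value $\overline{val}(P)$; and, when $P$ is non-quantified, for some chosen distinguished value) I will exhibit an arc-consistent binary CSP instance $I$, a variable $x$, a value $b_0\in\mathcal{D}(x)$, and an injective mapping $m:e(P)\to\mathcal{D}(x)$ with $m(\overline{val}(P))=b_0$, such that: (i) $I$ is satisfiable; (ii) every solution of $I$ assigns $b_0$ to $x$, so that the instance $I'$ obtained by deleting $b_0$ from $\mathcal{D}(x)$ is unsatisfiable, whence $b_0$ cannot be eliminated; and (iii) $P$ does not occur at $x$ in $I$ for the mapping $m$. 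By Definition~\ref{def:value-elimination} any such $I$ certifies that $P$ is not a val-elim pattern. For $m$ to exist one needs $|\mathcal{D}(x)|\ge|e(P)|$; for the named patterns $|e(P)|=1$ so even a singleton domain suffices, whereas for the class of patterns on more than three variables I will pad $\mathcal{D}(x)$ with extra values that are arc-consistent but appear in no solution (as in the instance $I_4^k$ of Lemma~\ref{lem:e(P)>1}), so that $e(P)$ can be mapped injectively with $\overline{val}(P)\mapsto b_0$.

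First I will dispose of the ``containment'' clauses. By transitivity of the occurrence relation (as used in Lemma~\ref{lemma:trans}), if $P$ occurs in $Q$ with the distinguished value of $P$ carried onto that of $Q$, then $Q$ occurring at $x$ in $I$ forces $P$ to occur at $x$ via the induced mapping; taking contrapositives, any instance witnessing that $P$ is not a val-elim pattern also witnesses this for $Q$. Hence it suffices to build witnesses for the finitely many base patterns Z, Diamond, I($-$), L($-$), L($+-$), triangle1, triangle2, $\exists$Kite, $\exists$Kite(asym) and $\exists$Kite1, together with witnesses for the three structural classes.

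For the structural classes I will use cheap uniform gadgets that possess a forced value. For ``strictly more than three variables'', any satisfiable arc-consistent instance on at most three variables with a forced value works, e.g.\ three Boolean variables $x,y,z$ with $R_{x,y}=R_{x,z}=\{\tuple{0,0},\tuple{0,1},\tuple{1,1}\}$ and $R_{y,z}=\{\tuple{0,1},\tuple{1,0}\}$, which forces $x=0$; no pattern on more than three variables (nor any of its reductions on more than three variables) can be embedded into three variables. For ``three non-mergeable values for a variable $v\neq\overline{v}(P)$'', any instance in which every variable other than $x$ has domain size at most two serves, since a reduction of such a $P$ retains three pairwise non-mergeable values on the image of $v$. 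For ``two non-mergeable incompatibility edges in one constraint'', I will use an instance in which every constraint forbids at most one pair of assignments — a 2SAT-style instance — that is satisfiable with a forced literal, enlarging $\mathcal{D}(x)$ with inert values if $|e(P)|>2$. For each base pattern I will then write down a tailored arc-consistent satisfiable instance with a forced value at some $x$ where the pattern does not occur at $x$ for a mapping sending $\overline{val}(P)$ to that value; several of the patterns I($-$), L($-$), L($+-$), triangle1, triangle2 and the $\exists$Kite family require either an incompatibility edge in a prescribed place or a second value on a non-distinguished variable, so a three-variable instance with very few constraints already suffices, while for the remainder — and for Z and Diamond — I will adapt the ideas behind $I^{2COL}_3$, $I_4$, $I^{ZOA}_4$ and $I^{SAT}_6$ from Lemma~\ref{lem:nonVEpatterns}, adjoining a forced value to $\mathcal{D}(x)$ and checking that this preserves arc consistency and introduces no occurrence of the pattern.

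The step I expect to be the real obstacle is exactly this last per-pattern verification. By Definition~\ref{def:occurs}, checking that an instance contains no occurrence of a pattern is a statement about \emph{all} reductions of the pattern — one may merge mergeable values and strip dangling assignments before embedding — so the check is a careful, though elementary, enumeration of the possible homomorphic images; the delicate point is controlling the interaction between the ``forced value'' gadget and the incompatibility edges of the patterns (it is easy for a forced value that is compatible with much of the instance to create a stray Z or Kite), and this is where most of the bookkeeping lies. By contrast, the containment reductions and the structural-class gadgets are routine.
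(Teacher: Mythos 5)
Your overall schema is the same as the paper's: for each pattern exhibit an arc-consistent instance $I$ with a value $b_0\in\mathcal{D}(x)$ through which every solution passes, so that deleting $b_0$ flips satisfiability, together with an injective mapping for which the pattern does not occur at $x$; and reduce the containment clauses to the ten base patterns by transitivity of occurrence. That framework is sound, and your gadget for the ``more than three variables'' class checks out. The genuine gap is that the substance of the lemma --- the explicit witness instances for I($-$), L($-$), L($+-$), triangle1, triangle2, $\exists$Kite, $\exists$Kite(asym), $\exists$Kite1, Z and Diamond, and the verification that each pattern fails to occur at $x$ --- is only promised, not delivered, and you yourself flag it as ``the real obstacle.'' A proof that defers exactly the hard part is not yet a proof. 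Moreover, your proposed starting points are off-target: the unsatisfiable instances $I^{2COL}_3$, $I_4$, $I^{ZOA}_4$, $I^{SAT}_6$ from Lemma~\ref{lem:nonVEpatterns} do not ``adapt'' to this setting by adjoining a forced value, since here one needs \emph{satisfiable} instances with a forced assignment at $x$; the paper instead builds a different family of equality-chain gadgets (three variables linked by $x_1=x$, $x_2=x$ or $x_1=x_3$, $x_2=x_3$, $x_3=x$, plus a single clause $(x_1=0)\vee(x_2=0)$ or $x_1=2k-x_2$) whose structure is dictated by which incompatibility edges each pattern carries, and the per-pattern non-occurrence checks rest on that specific structure.

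A second, smaller inaccuracy: your remark that ``for the named patterns $|e(P)|=1$ so even a singleton domain suffices'' does not cover the lemma as stated. The clauses forbid any pattern \emph{containing} a base pattern, and such a pattern $P$ may have $e(P)$ of arbitrary size; to refute val-elim-ness of $P$ you must exhibit an injective mapping $m:e(P)\to\mathcal{D}(x)$ with $m(\overline{val}(P))=b_0$, so \emph{every} witness instance (not just those for the structural classes) must come in a family with $|\mathcal{D}(x)|$ unbounded, and the padded values must neither destroy the forcing of $b_0$ nor create an occurrence of the pattern. The paper handles this by parameterising all five witness families by $k$ and making the extra values of $x$ behave uniformly; your proposal mentions padding only for two of the classes. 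Neither of these issues is a wrong turn, but both must be repaired --- chiefly by actually constructing and checking the witnesses --- before the argument is complete.
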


\begin{proof}
For each pattern we exhibit a binary arc-consistent CSP instance $I$
and a value $b$ for a variable $x$ in $I$ such that
\begin{itemize}
\item $I$ has a solution which includes the assignment $\tuple{x,b}$;
\item $I$ has no solution if the value $b$ is deleted from $\mathcal{D}(x)$;
\item $I$ does not contain the given pattern $P$ on $x$ with
$\overline{val}(P)$ mapping to $b$.
\end{itemize}
By definition, any such instance is sufficient to prove that the
pattern $P$ is not a val-elim pattern. Since in existential patterns
$P$, the set $e(P)$ may be of arbitrary size, we have to give generic
instances in which the size of the domain of $x$ is arbitrarily
large.

\begin{itemize}
\item For any pattern $P$ which either contains I($-$) or has strictly more than three
variables or with three non-mergeable values for the same variable $v
\neq \overline{v}(P)$.

Let $I^{SAT}_{3}$ be the arc-consistent instance on three variables
$x_1, x_2, x$ with domains $\mathcal{D}(x_1) = \mathcal{D}(x_2) =
\{0,1\}$, $\mathcal{D}(x) = \{0,\ldots,k\}$ and with the following
constraints: $\overline{x_1} \vee \overline{x_2}$, $x_1 \vee (x=0)$,
$x_2 \vee (x=0)$. $I^{SAT}_{3}$ has a solution $\tuple{0,0,0}$ which
includes the assignment $\tuple{x,0}$, has no solution if this
assignment is eliminated, and does not contain $P$ on $x$ with
$\overline{val}(P)$ mapping to 0.

\item For any pattern $P$ which has two non-mergeable incompatibility edges in
the same constraint.

Let $I^{SAT}_{2k+1}$ be the arc-consistent instance on $2k+1$
variables $x_1, \ldots, x_{2k}, x$ with domains $\mathcal{D}(x_1) =
\ldots = \mathcal{D}(x_{2k}) = \{0,1\}$, $\mathcal{D}(x) =
\{0,\ldots,k\}$ and with the following constraints for each
$i=1,\ldots,k$: $\overline{x_{2i-1}} \vee \overline{x_{2i}}$,
$x_{2i-1} \vee (x \neq i)$, $x_{2i} \vee (x \neq i)$.
$I^{SAT}_{2k+1}$ has a solution $\tuple{0,\ldots,0}$ which includes
the assignment $\tuple{x,0}$, has no solution if this assignment is
eliminated, and does not contain $P$ on $x$ with $\overline{val}(P)$
mapping to 0.

\item For any pattern $P$ which contains L($+-$), triangle2 or $\exists$Kite1.

Let $I_3$ be the arc-consistent instance on three variables $x_1,
x_2, x$ with domains $\mathcal{D}(x_1) = \mathcal{D}(x_2) =
\mathcal{D}(x) = \{0,\ldots,k\}$ and with the following constraints:
$(x_1 = 0) \vee (x_2 = 0)$, $x_1 = x$, $x_2 = x$. $I_3$ has a
solution $\tuple{0,0,0}$ which includes the assignment $\tuple{x,0}$,
has no solution if this assignment is eliminated, and does not
contain $P$ on $x$ with $\overline{val}(P)$ mapping to 0.

\item For any pattern $P$ which contains L($-$).

Let $I_{3+}$ be the arc-consistent instance on four variables $x_1,
x_2, x_3, x$ with domains $\mathcal{D}(x_1) = \mathcal{D}(x_2) =
\mathcal{D}(x_3) = \mathcal{D}(x) = \{0,\ldots,k\}$ and with the
following constraints: $(x_1 = 0) \vee (x_2 = 0)$, $x_1 = x_3$, $x_2
= x_3$, $x_3 = x$. $I_{3+}$ has a solution $\tuple{0,0,0,0}$ which
includes the assignment $\tuple{x,0}$, has no solution if this
assignment is eliminated, and does not contain $P$ on $x$.

\item For any pattern $P$ which contains triangle1,
$\exists$Kite, $\exists$Kite(asym), Diamond or Z.

Let $I_{3}^{2k}$ be the arc-consistent instance on three variables
$x_1, x_2, x$ each with domain $\{0,\ldots,2k\}$ and with the
following constraints: $x_1 = 2k - x_2$, $x_1 = x$, $x_2 = x$.
$I_{3}^{2k}$ has a solution $\tuple{k,k,k}$ which includes the
assignment $\tuple{x,k}$, has no solution if this assignment is
eliminated, and does not contain $P$ on $x$ with $\overline{val}(P)$
mapping to $k$.
\end{itemize}
\end{proof}

We can now characterise those irreducible existential patterns which
allow value elimination and hence generalise neighbourhood
substitution.

\begin{theorem}
The only irreducible existential patterns which allow value
elimination in arc-consistent binary CSP instances are
$\exists$2snake, $\exists$2invsubBTP and $\exists$2triangle (and
their irreducible sub-patterns).
\end{theorem}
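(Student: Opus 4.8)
The plan is to establish this dichotomy exactly as for variable elimination (Theorems~\ref{thm:flat} and~\ref{thm:existential}). The positive direction is Theorem~\ref{thm:all-val-elim -patterns}, so it remains to prove that every irreducible existential val-elim pattern $P$ is contained in $\exists$2snake, $\exists$2invsubBTP or $\exists$2triangle; since $P$ is irreducible, being contained in such a $Q$ means $P$ is equivalent to an irreducible sub-pattern of $Q$, which is exactly the desired conclusion. The argument needs three ingredients. The first is a transitivity lemma analogous to Lemma~\ref{lemma:trans}: if an existential pattern $P$ with a distinguished value occurs in a val-elim pattern $Q$ via a value mapping carrying $\overline{val}(P)$ to $\overline{val}(Q)$ and $|e(Q)\setminus\{\overline{val}(Q)\}|\le 1$, then $P$ is itself a val-elim pattern. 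This follows at once from transitivity of occurrence: if $P$ does not occur at a variable $x$ for an injective mapping $m$ with $m(\overline{val}(P))=v$, then $Q$ does not occur at $x$ for some injective mapping $m'$ with $m'(\overline{val}(Q))=v$ (the bound on $e(Q)$ guarantees $m'$ exists), so $v$ can be eliminated from $\mathcal{D}(x)$; all three target patterns satisfy $|e(Q)\setminus\{\overline{val}(Q)\}|=1$, so this lemma mops up the borderline cases later.

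The second ingredient bounds $|e(P)|$, mirroring Lemma~\ref{lem:e(P)>1}: no irreducible existential val-elim pattern has $|e(P)|>2$. Indeed, pick $a_1,a_2\in e(P)\setminus\{\overline{val}(P)\}$; by irreducibility these are non-mergeable, so $P$ contains an assignment $w$ compatible with one of them and incompatible with the other. Test $P$ against the instance $I^{SAT}_3$ of Lemma~\ref{lem:not-value-elimination} with $\mathcal{D}(x)=\{0,\dots,k\}$ and $k=|e(P)|$, via the injective mapping sending $\overline{val}(P)$ to $0$ and the remaining existential values to distinct elements of $\{1,\dots,k\}$. Since the values $1,\dots,k$ of $x$ are mutually interchangeable in $I^{SAT}_3$, the assignment $w$ cannot realise the differing compatibilities of $a_1$ and $a_2$, so $P$ does not occur for this mapping; yet $0$ cannot be eliminated from $\mathcal{D}(x)$, so $P$ is not a val-elim pattern. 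Hence $e(P)=\{\overline{val}(P)\}$ or $e(P)=\{a,\overline{val}(P)\}$ for some value $a$.

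The third ingredient — the bulk of the work — is a containment lemma analogous to Lemma~\ref{lem:contained}: the only irreducible existential patterns $P$ with a distinguished value, $|e(P)|\le 2$, that avoid every pattern listed in Lemma~\ref{lem:not-value-elimination} are contained in $\exists$2snake, $\exists$2invsubBTP or $\exists$2triangle. By Lemma~\ref{lem:not-value-elimination} such a $P$ has at most three variables, every variable other than $\overline{v}(P)$ has domain at most two, no constraint carries two non-mergeable incompatibility edges, and $P$ avoids Z, Diamond, I($-$), L($-$), L($+-$), triangle1, triangle2, $\exists$Kite, $\exists$Kite(asym) and $\exists$Kite1. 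I would dispose of the one- and two-variable cases by exhaustion (irreducibility kills mergeable values and dangling assignments, absence of I($-$) forbids an incompatibility edge at $\overline{val}(P)$ in those cases, and absence of Z bounds the lone constraint), leaving $P$ a sub-pattern of one of the targets. For three-variable $P$ I would classify the negative sub-pattern $P^-$ in the style of Lemma~\ref{lem:contained}: triangle1 and triangle2 being forbidden forces an incompatibility edge incident to $\overline{val}(P)$, and the ban on two non-mergeable incompatibility edges per constraint, together with the $\exists$Kite-family, reduces the possible negative skeletons to a short list. For each skeleton one adds the compatibility edges back one at a time, discharging every offending configuration via the absence of Z, Diamond, the triangle-patterns or the $\exists$Kite-family and by collapsing dangling assignments — exactly the bookkeeping performed in Lemma~\ref{lem:contained}. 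Irreducibility, in particular the requirement that $a$ and $b=\overline{val}(P)$ be separated by some edge, then pins the remaining freedom down, forcing $P$ to be a sub-pattern of $\exists$2snake, $\exists$2invsubBTP or $\exists$2triangle according to which value plays $\overline{val}(P)$ and which variable plays $\overline{v}(P)$. Putting the pieces together: Theorem~\ref{thm:all-val-elim -patterns} gives the easy direction; the second and third ingredients, applied to an arbitrary irreducible existential val-elim pattern (which by Lemma~\ref{lem:not-value-elimination} avoids all the listed patterns and obeys all the listed restrictions), show it is contained in one of the three; and the transitivity lemma handles any remaining borderline patterns by routing them through one of the targets, all of which have $|e(Q)\setminus\{\overline{val}(Q)\}|=1$.

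The step I expect to be the real obstacle is the three-variable case analysis in the third ingredient. Unlike Lemma~\ref{lem:contained} it cannot be reduced to a flattened pattern, since a val-elim pattern must carry a distinguished value; moreover Lemma~\ref{lem:not-value-elimination}, being phrased in terms of the distinguished value rather than of a distinguished variable, does not by itself bound $|\mathcal{D}(\overline{v}(P))|$, so the argument must separately prove that $\overline{v}(P)$ has domain exactly $\{a,b\}$ (ruling out a third value of the distinguished variable), and this has to be extracted from irreducibility and the forbidden $\exists$Kite-family rather than read off the explicit list.
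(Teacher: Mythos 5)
Your overall skeleton matches the paper's (positive direction from Theorem~\ref{thm:all-val-elim -patterns}, negative direction via the forbidden configurations of Lemma~\ref{lem:not-value-elimination}), and your second ingredient is a correct alternative to the paper's argument for $|e(P)|>2$: the paper instead argues structurally that three pairwise non-mergeable existential values force Diamond (or too many variables, or two non-mergeable incompatibility edges in one constraint), whereas you reuse $I^{SAT}_3$ with interchangeable domain values in the style of Lemma~\ref{lem:e(P)>1}; both work. However, there are two genuine problems.

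First, your transitivity lemma is false as stated. Take $P_0$ with $\overline{v}(P_0)=x$, $\mathcal{D}(x)=\{b,d\}$, $e(P_0)=\{b\}=\{\overline{val}(P_0)\}$, a second variable $y$ with a single value $c$, and $\cpt(b,c)=\mathtt{TRUE}$, $\cpt(d,c)=\mathtt{FALSE}$. This is an irreducible sub-pattern of $\exists$2triangle (send $d$ to $a$), occurring with $\overline{val}(P_0)\mapsto\overline{val}(\exists 2\mathrm{triangle})$ and $|e(Q)\setminus\{\overline{val}(Q)\}|=1$, yet $P_0$ is not a val-elim pattern: in the two-variable instance with singleton domains $\{0\}$ and the constraint $x_1=x_2$, $P_0$ does not occur at $x_1$ for the (unique) injective mapping $b\mapsto 0$, but $0$ cannot be eliminated. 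The obstruction is exactly the one the paper flags after Lemma~\ref{lemma:trans}: when $|\mathcal{D}(x)|<|e(Q)|$ there is no injective $m'$, so non-occurrence of $P$ does not yield non-occurrence of $Q$ for any licensed mapping. Since irreducible sub-patterns of the targets with $|e|=1$ and an incompatibility edge are genuinely \emph{not} val-elim, no repair of the lemma can do what you want it to do at the end of your argument.

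Second, and more importantly, the case $|e(P)|=1$ is not actually covered by your third ingredient. Your sketch of the containment analysis explicitly hinges on ``the requirement that $a$ and $b=\overline{val}(P)$ be separated by some edge'', i.e.\ on the presence of a second existential value forcing the neighbourhood-substitution sub-pattern of Figure~\ref{fig:ns} as an anchor; when $e(P)=\{\overline{val}(P)\}$ there is no such $a$ and the analysis has no starting point. The paper disposes of this case with a separate, dedicated counterexample instance (two variables with singleton domains and one compatibility edge): any $|e(P)|=1$ pattern with more than two variables or any incompatibility edge fails to occur there while the value cannot be eliminated, so only the edgeless trivial pattern survives. This instance is not among those of Lemma~\ref{lem:not-value-elimination} and cannot be extracted from it -- all of those instances have large domains at $x$, and the whole point of the $|e(P)|=1$ case is that singleton domains become relevant. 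Your proposal is missing this ingredient, and without it the $|e(P)|=1$ case is open (you would have to verify, by a case analysis of a different character from the one you describe, that every irreducible $|e(P)|=1$ pattern avoiding the list of Lemma~\ref{lem:not-value-elimination} embeds into one of the three targets). The remaining issue -- that the three-variable analysis for $|e(P)|=2$ is only sketched -- is a matter of carrying out the bookkeeping rather than a wrong idea.
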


\begin{proof}
Let $P$ be an irreducible existential pattern which allows value
elimination in arc-consistent binary CSP instances. Since
$\overline{val}(P)$ is necessarily defined in a val-elim pattern and
belongs to $e(P)$, we need to consider three different cases:
\begin{enumerate}
\item $|e(P)| = 1$, and hence $e(P) = \{\overline{val}(P)\}$,
\item $|e(P)| = 2$,
\item $|e(P)| > 2$.
\end{enumerate}
\paragraph{Case $|e(P)| = 1$:} \ Consider the CSP instance $I_2$ consisting of
only two variables $x_1,x_2$, each with a singleton domain $\{0\}$
together with the constraint $x_1 = x_2$. Trivially, the value 0
cannot be eliminated from the domain of $x_1$ without changing the
satisfiability of the instance. Any existential pattern $P$ with
$|e(P)| = 1$ containing more than two variables or at least one
incompatibility edge does not occur in $I_2$ on $x_1$ with
$\overline{val}(P)$ mapping to 0, and hence cannot be a val-elim
pattern. There is no 2-variable irreducible existential pattern which
contains only compatibility edges. Therefore, the only irreducible
val-elim pattern $P$ with $|e(P)| = 1$ is the trivial pattern with no
edges (which is a sub-pattern of $\exists$2snake, for example).

\paragraph{Case $|e(P)| = 2$:} \ Let $e(P) = \{a,b\}$ where
$b = \overline{val}(P)$. We know from
Lemma~\ref{lem:not-value-elimination} that I($-$) cannot be contained
in a val-elim pattern. We can therefore deduce that the assignment
$\tuple{\overline{v}(P),b}$ can only belong to compatibility edges in
$P$. Since $P$ is irreducible, we can deduce that $P$ must contain
the neighbourhood substitution pattern shown in Figure~\ref{fig:ns},
otherwise $a$ and $b$ could be merged. By
Lemma~\ref{lem:not-value-elimination}, $P$ does not contain more than
three variables, does not contain L($-$) or L($+-$) and does not contain more
than one incompatibility edge per constraint. It follows that $P$
contains at most two incompatibility edges. The only extension of the
neighbourhood substitution pattern (shown in Figure~\ref{fig:ns})
containing only one incompatibility edge and containing none of
I($-$), Z, Diamond, triangle1, triangle2 or $\exists$Kite(asym) is
the val-elim pattern $\exists$2triangle. The only extensions of the
neighbourhood substitution pattern containing exactly two
incompatibility edges and containing none of I($-$), L($-$), L($+-$),
Z, Diamond, triangle1, $\exists$Kite or $\exists$Kite1 are the
val-elim patterns $\exists$2snake and $\exists$2invsubBTP. Hence, the
only irreducible val-elim patterns $P$ with $|e(P)| = 2$ are
$\exists$2snake, $\exists$2invsubBTP and $\exists$2triangle (and
their irreducible sub-patterns).

\paragraph{Case $|e(P)| > 2$:} \ Let $a_1,a_2,a_3$ be three distinct
values in $e(P)$ and, for $i=1,2,3$, let $q_i$ denote the assignment
$\tuple{\overline{v}(P),a_i}$ . Since $P$ is irreducible, for all
$i,j$ such that $1 \leq i < j \leq 3$, $a_i$ and $a_j$ are not
mergeable; so there is an assignment $p_{ij}$ such that
$\tuple{p_{ij},q_i}$ is a compatibility edge and $\tuple{p_{ij},q_j}$
is an incompatibility edge (or vice versa) in $P$. By
Lemma~\ref{lem:not-value-elimination}, $P$ has at most three
variables, including $\overline{v}(P)$. So two of $p_{12}, p_{13},
p_{23}$ are assignments to the same variable. Without loss of
generality, suppose that $p_{12}, p_{13}$ are assignments to the same
variable $y \neq \overline{v}(P)$. By
Lemma~\ref{lem:not-value-elimination}, $P$ has at most one
incompatibility edge in each constraint. It follows that $p_{12} =
p_{13}$, with $\tuple{p_{12},q_1}$ an incompatibility edge and
$\tuple{p_{12},q_2}$, $\tuple{p_{12},q_3}$ compatibility edges. It
also follows that $p_{23}$ must be an assignment to a distinct
variable $z \notin \{y,\overline{v}(P)\}$. But then $P$ contains
Diamond on $p_{12}$, $q_2$, $q_3$, $p_{23}$ and so, by
Lemma~\ref{lem:not-value-elimination}, cannot be a val-elim pattern.
Therefore there are no irreducible val-elim patterns $P$ with $|e(P)|
> 2$.
\end{proof}

\section{Recovering one or all solutions after eliminations}

The binary CSP has diverse applications. In some applications it is
only the satisfiability of the instance which is of interest. For
example, in optimal planning, to determine whether an action $a$
among a set of available actions $A$ is indispensable (i.e. that it
is present in all solution-plans) we need to determine the
satisfiability of a binary CSP representing the same planning problem
using the set of actions $A \setminus
\{a\}$~\cite{cooper:planning2011}. The variable and value elimination
rules presented in this paper are directly applicable to such
problems.

Nonetheless, in most applications, the final aim is to find one or
all solutions. In many planning, scheduling and configuration
problems, the aim is often to find just one solution which satisfies
all the constraints. In other application areas, such as
fault-diagnosis~\cite{DBLP:journals/dam/WilliamsR07} or the
interpretation of ambiguous pictures~\cite{cooper:prl1988}, it is
important to find all solutions or a representation of all solutions
from which it is possible to extract in polynomial time a solution
satisfying certain criteria. For example, the on-line configuration
of a product (such as a car) by a user can be rendered tractable by
the off-line compilation of all solutions into some appropriate
compact
form~\cite{DBLP:journals/ai/AmilhastreFM02,DBLP:conf/ictai/AmilhastreFNP12}.
We will therefore study in this section whether it is possible to
efficiently recover one or all solutions to a binary CSP instance
after elimination of variables and/or values by our rules. We will
show that the efficient recovery of one solution is always possible,
but that only some of our rules allow the efficient recovery of all
solutions.

The elimination of a variable cannot destroy arc consistency, but the
elimination of a value may do so. Throughout this section we assume
that the elimination of an assignment by applying a value-elimination
rule is necessarily immediately followed by the re-establishment of
arc consistency.

\begin{proposition} \label{prop:one-solution}
Let $I$ be an arc-consistent binary CSP instance and let $s$ be a
solution to the instance obtained after applying a sequence $\sigma$
of variable and value elimination operations (defined by irreducible
quantified var-elim or val-elim patterns in the sense of
Definition~\ref{def:VE} and Definition~\ref{def:value-elimination}).
Then a solution to $I$ can be found from $(s,\sigma)$ in $O(cd)$
time, where $c$ is the number of non-trivial constraints and $d$ the
maximum domain size in $I$.
\end{proposition}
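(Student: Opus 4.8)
The plan is to process the sequence $\sigma$ in reverse, maintaining a partial solution and extending or correcting it one elimination step at a time. Write $\sigma = \sigma_1, \ldots, \sigma_k$, let $I_0 = I$, and let $I_j$ be the instance obtained after the first $j$ operations (with arc consistency re-established after each value elimination, as assumed). We are given a solution $s = s_k$ to $I_k$, and we want $s_0$, a solution to $I$. It suffices to show that from a solution $s_j$ to $I_j$ we can compute a solution $s_{j-1}$ to $I_{j-1}$ in time proportional to the cost of inspecting the constraints incident to a single variable, i.e.\ $O(cd)$ worst case per step, but in fact we will argue the \emph{total} cost over all $k$ steps is $O(cd)$ because each reconstruction step only touches the constraints on one variable, and $k \le n + nd$ (at most one variable elimination per variable and at most $d-1$ value eliminations per variable), so a naive bound would give $O(ncd^2)$; to get the claimed $O(cd)$ one observes that after a variable or value is eliminated it never reappears, so the reconstruction work can be charged against the constraints of each variable exactly once — I would make this amortization explicit.

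The core of the argument is the per-step case analysis, which splits according to whether $\sigma_j$ is a variable elimination or a value elimination.

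\emph{Variable elimination.} Suppose $\sigma_j$ eliminated a variable $x$ using one of the var-elim patterns BTP, $\exists$subBTP, $\exists$invsubBTP, or $\exists$snake (Theorem~\ref{thm:allVEpatterns} and Theorem~\ref{thm:dichotomy}). Here $s_j$ is a solution to $I_{j-1}$ restricted to $X \setminus \{x\}$, in particular a partial solution on $X \setminus \{x\}$ in $I_{j-1}$. The key point is that the proof of Theorem~\ref{thm:allVEpatterns} is \emph{constructive}: given the partial solution $s_j$ and the assignment $\tuple{x,d}$ witnessing the value mapping for which the pattern does not occur, it exhibits an explicit value for $x$ (either $b$ obtained by arc consistency from a single neighbour $y$, in the $\exists$subBTP case, or $d$ itself together with the $t(\cdot)$-repair of the "opposite" side, in the $\exists$invsubBTP and $\exists$snake cases) and shows the result is a solution. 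So $s_{j-1}$ is obtained from $s_j$ by choosing $s_{j-1}(x)$ according to that construction and leaving all other values unchanged; computing it requires examining only the constraints incident to $x$, which costs $O(\deg(x) \cdot d)$ where the degree is measured in non-trivial constraints. BTP is handled identically using the known constructive argument of~\cite{Cooper10:BTP}.

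\emph{Value elimination.} Suppose $\sigma_j$ deleted value $b$ from $\mathcal{D}(x)$ (using $\exists$2snake, $\exists$2invsubBTP, or $\exists$2triangle), followed by arc-consistency re-establishment. Then $s_j$ is a solution to $I_j$; it is also a solution to the instance $I_j'$ obtained from $I_{j-1}$ by only deleting $b$ (arc consistency removes values that are in no solution, so it cannot remove anything $s_j$ uses). Now $s_j$ is a solution to $I_{j-1}$ as well, since $s_j(x) \ne b$ — so here the "reconstruction" is trivial: $s_{j-1} = s_j$, costing $O(1)$. (The asymmetry with variable elimination — where we genuinely had to recompute a value — is why value-elimination steps are essentially free.) One subtlety worth a sentence: value eliminations may have removed other assignments via arc-consistency propagation, but a solution never uses a non-arc-consistent value relative to \emph{any} superset instance, so $s_j$ survives verbatim.

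\emph{Assembling the bound.} Chaining the steps, $s_0$ is a solution to $I$. For the running time: each variable-elimination step for $x$ costs $O(\deg_I(x)\, d)$ and there is at most one such step per variable, so these sum to $O(cd)$; each value-elimination step costs $O(1)$ and there are $O(nd)$ of them, contributing $O(nd) = O(cd)$ (assuming, as is standard, every variable participates in at least one non-trivial constraint; otherwise such isolated variables are handled trivially). The one place I would be careful is the amortization claim — making sure "touch the constraints of $x$ once" is literally correct requires noting that when $x$ is eliminated it is gone for good, and that the witness data (the value mapping $d$, and where needed the repair assignments $t(z)$ obtained by arc consistency) can be recorded at elimination time as part of $\sigma$, so no re-search is needed during reconstruction. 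With that bookkeeping in place the $O(cd)$ bound follows, and this is the step I expect to be the main obstacle to state cleanly rather than the mathematics, which is a direct unwinding of the constructive proofs already given.
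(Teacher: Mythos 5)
Your proposal is correct and follows essentially the same route as the paper: value-elimination steps are undone for free since deleting domain values cannot create new solutions, while each variable-elimination step is undone by re-running the constructive argument of Theorem~\ref{thm:allVEpatterns} (or the known BTP extension) at cost $O(c_x d)$, which sums to $O(cd)$ because each variable is eliminated at most once. One small slip worth fixing: for $\exists$invsubBTP and $\exists$snake the reconstruction does \emph{not} leave all other values unchanged --- it reassigns $t(v)$ to every $v \in \overline{Y}$, as your own parenthetical about the $t(\cdot)$-repair acknowledges --- though this does not affect the stated cost analysis.
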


\begin{proof}
Since value elimination does not modify constraints, any solution for
an instance obtained from $I$ by value eliminations is also a
solution to $I$. We therefore only need to consider the case of
var-elim rules. We identified the irreducible quantified var-elim
patterns in Theorem~\ref{thm:dichotomy} as (irreducible sub-patterns
of) $BTP$, $\exists$subBTP, $\exists$invsubBTP or $\exists$snake. We
only need to prove the proposition for these four patterns since
absence of any sub-pattern implies absence of the pattern itself.

Consider a single variable elimination operation consisting in
eliminating variable $x$. Let $c_x$ denote the number of constraints
whose scope includes $x$. If $x$ has been eliminated due to the
absence of BTP, then it is known that any solution $s$ of the reduced
instance can be extended to a solution of $I$~\cite{Cooper10:BTP}.
The proof of Theorem~\ref{thm:allVEpatterns} showed that this is also
true in the case that $x$ has been eliminated due to the absence of
$\exists$subBTP. In order to extend $s$ to a solution of $I$, it
suffices to test each of the elements of $\mathcal{D}(x)$ in turn
against each of the $c_x$ constraints. This can be achieved in $O(c_x
d)$ time.

For the two other patterns ($\exists$invsubBTP and $\exists$snake),
the proof of Theorem~\ref{thm:allVEpatterns} actually provides an
algorithm to recover a solution $s'$ of $I$ from the solution $s$ of
the reduced instance, via the calculation of the variable set
$\overline{Y}$ and the assignments $t(z)$ ($z \in \overline{Y}$).
Again this can be achieved in $O(c_x d)$ time.

Summing the $O(c_x d)$ complexity of recovering a solution to the
instance in which a variable $x$ is reinstated, over all eliminated
variables $x$, gives a total complexity of $O(cd)$, as claimed.
\end{proof}

\begin{proposition} \label{prop:all-solutions}
Let $I$ be an arc-consistent binary CSP instance and let $S$ be the
set of all solutions to the instance obtained after applying a
sequence $\sigma$ of operations given by the var-elim patterns BTP,
$\exists$subBTP and the val-elim pattern $\exists$2triangle. Then the
set of all solutions to $I$ can be found from $(S,\sigma)$ in
$O(|S_I|cd+1)$ time, where $S_I$ is the set of solutions to $I$.
\end{proposition}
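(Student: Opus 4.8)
The plan is to show that each of the three elimination rules used in the sequence~$\sigma$ (the var-elim patterns BTP and $\exists$subBTP, and the val-elim pattern $\exists$2triangle) can be \emph{inverted} so that from the full solution set of the reduced instance we recover the full solution set of the instance before that single step, at a cost linear in the size of the larger solution set plus a polynomial overhead. Since each step can only increase the number of solutions by at most a multiplicative factor bounded by~$d$ (and the relevant overhead is bounded by $O(cd)$ per step), composing these inversions over the whole sequence $\sigma$ will give the claimed $O(|S_I|cd+1)$ bound, where the $+1$ absorbs the degenerate case $|S_I|=0$ in which we still pay the recognition cost.

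First I would handle the value-elimination rule $\exists$2triangle. Recall from the proof of Theorem~\ref{thm:all-val-elim -patterns} that when $\exists$2triangle does not occur at $x$ for the mapping $a\mapsto d$, $b\mapsto d'$, the only effect on solutions is that any solution with $s(x)=b$ can be converted to one with $s(x)=a$ by \emph{changing only the value of $x$}; moreover in that argument $\tuple{x,a}$ is compatible with \emph{every} assignment $\tuple{y,s(y)}$ appearing in such a solution $s$. The key observation for recovering \emph{all} solutions is therefore that a partial assignment to $X\setminus\{x\}$ extends to a solution using $s(x)=b$ if and only if it extends using $s(x)=a$; so to reconstruct the solutions of $I$ from the solutions $S$ of $I'$, for each $s\in S$ we check whether the partial assignment $s\restriction_{X\setminus\{x\}}$ is compatible with $\tuple{x,b}$, and if so we add the solution $s$ with $s(x)$ overwritten by $b$. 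This is a single pass over $S$ with an $O(c_x d)$-per-solution compatibility check; re-establishing arc consistency played no role in the solution set, so nothing else changes.

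Next I would handle the two var-elim rules. For a variable $x$ eliminated by absence of BTP, it is already known~\cite{Cooper10:BTP} that the set of solutions of $I$ is obtained from the set of solutions $S'$ of the reduced instance by, for each $s'\in S'$, ranging over all $b\in\mathcal D(x)$ compatible with every $\tuple{y,s'(y)}$ ($y$ in the neighbourhood of $x$) and outputting the extended assignment; this is the classical ``join back'' of an eliminated variable and costs $O(c_x d)$ per solution of $S'$, while multiplying the solution count by at most~$d$. For $\exists$subBTP, I would argue that the same join-back formula is correct: the proof of Theorem~\ref{thm:allVEpatterns} showed that if $s'$ is a partial solution on $X\setminus\{x\}$ and $y\in\overline Y$ (i.e.\ $\tuple{y,s'(y)}$ is incompatible with the chosen value $d$), then for the value $b\in\mathcal D(x)$ supporting $\tuple{y,s'(y)}$, absence of $\exists$subBTP forces $b$ to be compatible with \emph{all} $\tuple{z,s'(z)}$, $z\neq x,y$; hence every partial solution on $X\setminus\{x\}$ extends to \emph{some} value of $x$, and conversely any restriction of a solution of $I$ is such a partial solution, so the extended assignments over all $b\in\mathcal D(x)$ compatible with the neighbourhood of $x$ are exactly the solutions of $I$. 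Summing $O(c_x d)$ over eliminated variables again gives $O(cd)$ overhead and at most a factor $d$ blow-up per step.

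The main obstacle, and the step I would be most careful about, is verifying that the per-step solution-count blow-up is genuinely bounded by a factor of $d$ \emph{uniformly along the sequence} so that the telescoping product collapses to the single factor $|S_I|\cdot\text{poly}$ rather than an exponential product of $d$'s. The point is that the total running time — not the solution count — telescopes: at step $i$ we spend $O(c\,d)$ time per solution of the instance \emph{after} step $i$ (which is no larger than $|S_I|$ when processed in reverse, reinstating eliminated variables/values), so the total is $O(|S_I|cd)$ over the at most $n+\text{(number of value eliminations)}$ steps, all of which is dominated by $O(|S_I|cd)$ after absorbing constants; the explicit $+1$ handles $|S_I|=0$. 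I would also need to note that $\exists$invsubBTP and $\exists$snake are \emph{excluded} from this proposition precisely because their reconstruction (as in the proof of Theorem~\ref{thm:allVEpatterns}) replaces the assignments of a whole set $\overline Y$ of variables and hence does not yield \emph{all} solutions of $I$ from those of the reduced instance — this is why the statement is restricted to BTP, $\exists$subBTP and $\exists$2triangle, and the restriction should be flagged rather than silently assumed.
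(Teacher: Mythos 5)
Your argument is essentially the paper's own proof: process $\sigma$ in reverse, reinstate a variable eliminated by BTP or $\exists$subBTP by joining back every value of $\mathcal{D}(x)$ compatible with the neighbourhood (using the fact that every solution of the reduced instance extends to one of $I$, so intermediate solution counts never exceed $|S_I|$ and the per-step cost $O(c_x d)$ per solution sums over eliminated variables to $O(|S_I|cd)$), and reinstate a value eliminated by $\exists$2triangle via an explicit compatibility check of each stored solution against $\tuple{x,b}$. The one blemish is your claimed biconditional for $\exists$2triangle --- that a partial assignment on $X\setminus\{x\}$ extends with $b$ if and only if it extends with $a$ --- whose ``if'' direction is false in general (nothing forces $b$ to be compatible where $a$ is); this is harmless because your algorithm only relies on the true ``only if'' direction together with the explicit compatibility test, but the sentence should be corrected.
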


\begin{proof}
In the trivial case in which $|S_I|=0$, we necessarily have as input
$S = \emptyset$ which can clearly be tested for in $O(1)$ time. In
another simple case, in which $I$ has at most two variables, the
result follows from arc consistency. We therefore only need to
consider satisfiable instances with at least three variables.

We now consider the elimination of a single variable $x$ from an
instance $I$ due to absence of one of the var-elim patterns BTP or
$\exists$subBTP. As observed in the proof of
Proposition~\ref{prop:one-solution}, each solution of the reduced
instance can be extended to a solution of $I$. This implies that the
number of solutions cannot decrease when we reinstate the variable
$x$. Clearly each solution of $I$ is an extension of a solution of
the reduced instance. So the algorithm given in the proof of
Proposition~\ref{prop:one-solution}, applied in turn to each solution
of the reduced instance will find all solutions of $I$ in time
$O(|S_I|c_x d)$.

Now consider the elimination of a value $b$ from the domain of a
variable $x$ by absence of $\exists$2triangle on values $a,b \in
\mathcal{D}(x)$ in an instance $I$. As observed in the proof of
Theorem~\ref{thm:all-val-elim -patterns}, $s$ is a solution to $I$
with $s(x) = b$ implies that $s''$ defined by $s''(x) = a$, $s''(v) =
s(v)$ for $v \neq x$ is a solution to the reduced instance. To
determine all solutions of $I$ from the set of all solutions of the
reduced instance thus requires only $O(|S_I|c_x)$ time.

Summing over all variables $x$ (and, in the case of
value-eliminations, over all assignments to $x$), we obtain a total
time complexity of $O(|S_I|cd+1)$, as claimed.
\end{proof}

An essential element of the proof of
Proposition~\ref{prop:all-solutions} in the case of var-elim patterns
is that the number of solutions does not decrease when a variable $x$
is reinstated. Unfortunately, in the case of the var-elim patterns
$\exists$invsubBTP and $\exists$snake, it is easy to construct an
example in which this is not true. Indeed, consider a binary CSP
instance $I$ corresponding to the 2-colouring of a star graph (a
graph composed of one central node with edges to $n-1$ other nodes).
Let $x$ be the variable corresponding to the central node of the
graph. Since neither $\exists$invsubBTP nor $\exists$snake occur on
$x$, both rules allow us to eliminate $x$, leaving an instance on
$n-1$ variables and no constraints. Whereas $I$ has only two
solutions (corresponding to the two possible 2-colourings of a star
graph), the reduced instance has $2^{n-1}$ solutions. In this
example, reinstating a single variable decreased the number of
solutions by an exponential factor.

The elimination of values can, on the other hand, dramatically
simplify an instance to the extent that finding all solutions to the
original instance remains intractable, as illustrated by the
following proposition.

\begin{proposition}
Let $I$ be an arc-consistent binary CSP instance and suppose that we
are given the set of all solutions to the instance obtained after
applying a single value-elimination operation due to the absence of
one of the patterns $\exists$2snake or $\exists$2invsubBTP.
Determining whether $I$ has more than one solution is NP-complete.
\end{proposition}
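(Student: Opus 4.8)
The plan is to establish membership in NP first and then give a polynomial-time reduction from graph $3$-colourability. Membership is immediate: to certify that $I$ has more than one solution it suffices to exhibit two distinct solutions and verify them against the constraints of $I$ in polynomial time. For hardness, given a graph $G$ I would view its $3$-colouring problem as a binary CSP $J$ (one variable per vertex, domain $\{1,2,3\}$, a disequality constraint on each edge), making $J$ arc consistent. I then build an arc-consistent instance $I$ by keeping the variables and constraints of $J$, adding a fresh variable $x$ with $\mathcal{D}(x)=\{0,1\}$, and adding a fresh value $\star_v$ to the domain of every original variable $v$. The compatibilities are: $\tuple{x,0}$ is compatible with exactly the original values of each $v$ (hence incompatible with every $\star_v$); $\tuple{x,1}$ is compatible with exactly the value $\star_v$ in each $v$ (hence incompatible with every original value); and $\star_v$ is compatible with every value of every other original variable. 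A routine check shows that $I$ is arc consistent and that its solutions are exactly the all-$\star$ assignment (taking $x=1$) together with, for each $3$-colouring of $G$, its extension by $x=0$. Hence $I$ has at least two solutions if and only if $G$ is $3$-colourable.

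Next I would check that value $0$ can be eliminated from $\mathcal{D}(x)$ because $\exists$2snake (and equally $\exists$2invsubBTP) does not occur at $x$ for the injective mapping $\overline{val}(P)=b\mapsto 0$, $a\mapsto 1$. Both patterns are irreducible on three distinct variables with distinguished variable $\overline{v}(P)=x$; since $\mathcal{D}(x)=\{0,1\}$, any occurrence forces a second variable $v_0$ (necessarily one of $J$) with values $p,q$ such that $q$ is compatible with $\tuple{x,1}$, $p$ is compatible with $\tuple{x,0}$, and $p$ is incompatible with $\tuple{x,1}$; by construction this forces $q=\star_{v_0}$ and $p$ an original value of $v_0$. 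But both patterns also require a third variable $y\notin\{x,v_0\}$ and a value $r$ of $y$ with $\cpt(\tuple{y,r},\tuple{v_0,\star_{v_0}})={\tt FALSE}$, which is impossible since $\star_{v_0}$ is compatible with every value of every other original variable. Hence neither pattern occurs at $x$ for that mapping, so value $0$ may be eliminated; re-establishing arc consistency (as the elimination procedure prescribes) then removes every original value for lack of support at $x$, so the reduced instance $I'$ is the trivial one whose unique solution is the all-$\star$ assignment.

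The reduction is then complete: on the input consisting of $I$, the elimination operation, and the (singleton) set of all solutions of $I'$ — all computable from $G$ in polynomial time — deciding whether $I$ has more than one solution is exactly deciding whether $G$ is $3$-colourable, and the trivial solution set of $I'$ supplies no useful information. The step I expect to be the main obstacle is the non-occurrence argument in the second paragraph: it must be carried out by exhausting the ways in which three assignments of $I$ could realise $\exists$2snake or $\exists$2invsubBTP at $x$, and it depends crucially on having made each $\star_v$ compatible with all values of the other original variables, which is precisely what rules out the incompatibility edge that both patterns demand between their third variable and the ``$\star$'' value.
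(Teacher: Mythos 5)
Your proof is correct and follows essentially the same route as the paper: a fresh two-valued variable $x$ whose one value selects a trivial ``all fresh values'' solution and whose other value selects exactly the solutions of a hard instance, after which the latter value is eliminable because the fresh values' universal compatibility blocks the incompatibility edge that both patterns require at their third variable, and arc consistency then collapses the instance to the single trivial solution. The only differences are cosmetic (reduction from $3$-colourability rather than from generic binary CSP satisfiability, and a swap of the roles of the two values of $x$), and your explicit verification of the non-occurrence of the patterns is a welcome elaboration of a step the paper leaves largely to the reader.
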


\begin{proof}
The problem is clearly in NP. It therefore suffices to give a
polynomial reduction from the known NP-complete problem binary CSP.
Let $J = \tuple{X, D, A, \cpt}$ be an arbitrary instance of binary
CSP. We will build an instance $I_J$ such that, after elimination of
one variable $x$ from $I_J$ by either $\exists$2invsubBTP or
$\exists$2snake and re-establishing arc consistency, we obtain a
trivially-solvable instance with exactly one solution, but
determining the existence of a second solution to $I_J$ is equivalent
to solving the instance $J$.

The variable-set of instance $I_J$ is $X \cup \{x\}$ (where $x$ is a
variable not in $X$). For each variable $y \in X$, the domain of $y$
in $I_J$ is $\mathcal{D}(y) \cup \{0\}$, where without loss of
generality we assume that $0$ does not belong to the domain
$\mathcal{D}(y)$ of variable $y$ in $J$. The domain of variable $x$
in $I_J$ is $\{0,1\}$. The compatibility function of $I_J$ is an
extension of the compatibility function of $J$: for each $y \in X$,
the assignment $\tuple{x,0}$ is compatible only with the assignment
$\tuple{y,0}$, whereas the assignment $\tuple{x,1}$ is compatible
with all the assignments $\tuple{y,a}$ for $a \neq 0$; furthermore
for each $y,z \in X$, the assignment $\tuple{y,0}$ is compatible with
all assignments to $z$.

Neither $\exists$2invsubBTP nor $\exists$2snake occur on variable $x$
in $I_J$ with $a,b$ mapping respectively to $0,1$. We can therefore
eliminate the value $1$ from the domain of $x$. After establishing
arc consistency, all domains are reduced to the singleton $\{0\}$.
Hence the reduced instance has exactly one solution. In the instance
$I_J$, the assignment $\tuple{x,0}$ only belongs to the solution
assigning 0 to each variable, whereas the assignment $\tuple{x,1}$ is
compatible with exactly the set of solutions to the instance $J$.
Therefore, determining the existence of a second solution to $I_J$ is
equivalent to determining the satisfiability of $J$.
\end{proof}

\section{Conclusion}

This paper has introduced the notion of
variable and value elimination rules in binary CSPs based on the
absence of quantified patterns. We have identified all irreducible
quantified patterns whose absence allows variable or value
elimination. As a consequence, we have also identified novel
tractable classes of binary CSPs. From a practical point of view, our
rules can be incorporated into generic constraint solvers to prune
the search tree.

There are several interesting directions for further research.
Can we generalise the variable or value elimination patterns
described in this paper to arbitrary-arity CSP instances
(perhaps using one of the possible definitions of microstructure for constraints
of arbitrary arity~\cite{DBLP:conf/sara/MouelhiJT13})?
A partial positive answer to this question has recently been
provided by arbitrary-arity versions of BTP~\cite{DBLP:conf/cp/CooperMTZ14}.
Do the variable and value elimination patterns introduced in this
paper generalise to other versions of constraint satisfaction, such
as the QCSP (as is the case for the tractable class defined by
BTP~\cite{GaoYZ11}) or the Weighted CSP (as is the case for tractable
class defined by the so-called joint-winner
pattern~\cite{Cooper12:tractable})? The research reported in the present paper
has recently led to the discovery of sound variable and value
elimination rules defined by local properties which strictly generalise the absence
of patterns~\cite{DBLP:conf/cp/Cooper14}. The characterisation of all
such generalised variable or value elimination rules is a challenging open problem.


\begin{thebibliography}{10}

\bibitem{DBLP:journals/ai/AmilhastreFM02}
J{\'e}r{\^o}me Amilhastre, H{\'e}l{\`e}ne Fargier, and Pierre Marquis.
\newblock Consistency restoration and explanations in dynamic {CSP}s
  application to configuration.
\newblock {\em Artif. Intell.}, 135(1-2):199--234, 2002.

\bibitem{DBLP:conf/ictai/AmilhastreFNP12}
J{\'e}r{\^o}me Amilhastre, H{\'e}l{\`e}ne Fargier, Alexandre Niveau, and
  C{\'e}dric Pralet.
\newblock Compiling {CSP}s: A complexity map of (non-deterministic) multivalued
  decision diagrams.
\newblock In {\em ICTAI}, pages 1--8. IEEE, 2012.

\bibitem{DBLP:conf/flairs/AngelsmarkT05}
Ola Angelsmark and Johan Thapper.
\newblock A microstructure based approach to constraint satisfaction
  optimisation problems.
\newblock In Ingrid Russell and Zdravko Markov, editors, {\em Proceedings of
  the Eighteenth International Florida Artificial Intelligence Research Society
  Conference, Clearwater Beach, Florida, {USA}}, pages 155--160. {AAAI} Press,
  2005.

\bibitem{DBLP:conf/focs/BeigelE95}
Richard Beigel and David Eppstein.
\newblock 3-coloring in time o(1.3446\({}^{\mbox{n}}\)): {A} no-{MIS}
  algorithm.
\newblock In {\em 36th Annual Symposium on Foundations of Computer Science,
  Milwaukee, Wisconsin, 23-25 October 1995}, pages 444--452. {IEEE} Computer
  Society, 1995.

\bibitem{Bessiere:AC}
Christian Bessi{\`e}re, Jean-Charles R{\'e}gin, Roland H.~C. Yap, and Yuanlin
  Zhang.
\newblock An optimal coarse-grained arc consistency algorithm.
\newblock {\em Artificial Intelligence}, 165(2):165--185, 2005.

\bibitem{DBLP:journals/jair/ChenB01}
Xinguang Chen and Peter van Beek.
\newblock Conflict-directed backjumping revisited.
\newblock {\em J. Artif. Intell. Res. (JAIR)}, 14:53--81, 2001.

\bibitem{Cohen12:pivot}
David~A. Cohen, Martin~C. Cooper, P{\'a}id\'{\i} Creed, D{\'a}niel Marx, and
  Andr{\'a}s~Z. Salamon.
\newblock The tractability of {C}{S}{P} classes defined by forbidden patterns.
\newblock {\em J. Artif. Intell. Res. (JAIR)}, 45:47--78, 2012.

\bibitem{cooper:prl1988}
Martin~C. Cooper.
\newblock Efficient systematic analysis of occlusion.
\newblock {\em Pattern Recognition Letters}, 7:259--264, 1988.

\bibitem{Cooper97:neighbourhood}
Martin~C. Cooper.
\newblock Fundamental properties of neighbourhood substitution in constraint
  satisfaction problems.
\newblock {\em Artif. Intell.}, 90(1-2):1--24, 1997.

\bibitem{DBLP:conf/cp/Cooper14}
Martin~C. Cooper.
\newblock Beyond consistency and substitutability.
\newblock In O'Sullivan \cite{DBLP:conf/cp/2014}, pages 256--271.

\bibitem{cooper:planning2011}
Martin~C. Cooper, Marie de~Roquemaurel, and Pierre R\a'{e}gnier.
\newblock A weighted {CSP} approach to cost-optimal planning.
\newblock {\em Artificial Intelligence Communications}, 24(1):1--29, 2001.

\bibitem{Cooper12:patterns}
Martin~C. Cooper and Guillaume Escamocher.
\newblock A dichotomy for 2-constraint forbidden {CSP} patterns.
\newblock In J{\"o}rg Hoffmann and Bart Selman, editors, {\em AAAI}. AAAI
  Press, 2012.

\bibitem{Cooper10:BTP}
Martin~C. Cooper, Peter~G. Jeavons, and Andr{\'a}s~Z. Salamon.
\newblock Generalizing constraint satisfaction on trees: Hybrid tractability
  and variable elimination.
\newblock {\em Artif. Intell.}, 174(9-10):570--584, 2010.

\bibitem{DBLP:conf/cp/CooperMTZ14}
Martin~C. Cooper, Achref~El Mouelhi, Cyril Terrioux, and Bruno Zanuttini.
\newblock On broken triangles.
\newblock In O'Sullivan \cite{DBLP:conf/cp/2014}, pages 9--24.

\bibitem{Cooper12:tractable}
Martin~C. Cooper and Stanislav \v{Z}ivn\'y.
\newblock Tractable triangles and cross-free convexity in discrete
  optimisation.
\newblock {\em J. Artif. Intell. Res. (JAIR)}, 44:455--490, 2012.

\bibitem{Cornuejols03:perfect}
G{\'e}rard Cornu{\'e}jols, Xinming Liu, and Kristina Vuskovic.
\newblock A polynomial algorithm for recognizing perfect graphs.
\newblock In {\em FOCS}, pages 20--27. IEEE Computer Society, 2003.

\bibitem{Dechter03:book}
Rina Dechter.
\newblock {\em {Constraint Processing}}.
\newblock Morgan Kaufmann Publishers, 340 Pine Street, Sixth Floor, San
  Francisco, CA 94104-3205, 2003.

\bibitem{Freuder91:interchangeable}
Eugene~C. Freuder.
\newblock Eliminating interchangeable values in constraint satisfaction
  problems.
\newblock In {\em Proceedings of {AAAI-91}}, pages 227--233, 1991.

\bibitem{GaoYZ11}
Jian Gao, Minghao Yin, and Junping Zhou.
\newblock Hybrid tractable classes of binary quantified constraint satisfaction
  problems.
\newblock In {\em AAAI}, 2011.

\bibitem{Gent06:watched}
Ian~P. Gent, Christopher Jefferson, and Ian Miguel.
\newblock Watched literals for constraint propagation in minion.
\newblock In Fr{\'e}d{\'e}ric Benhamou, editor, {\em CP}, volume 4204 of {\em
  Lecture Notes in Computer Science}, pages 182--197. Springer, 2006.

\bibitem{Grotschel1981}
M.~Gr\"otschel, L.~Lovasz, and A.~Schrijver.
\newblock The ellipsoid method and its consequences in combinatorial
  optimization.
\newblock {\em Combinatorica}, 1:169--198, 1981.

\bibitem{Jegou93:microstructure}
P.~Jegou.
\newblock Decomposition of domains based on the micro-structure of finite
  constraint-satisfaction problems.
\newblock In {\em Proceedings of the 11th National Conference on Artificial
  Intelligence}, pages 731--736, Menlo Park, CA, USA, jul 1993. AAAI Press.

\bibitem{Larrosa03:boosting}
Javier Larrosa and Rina Dechter.
\newblock Boosting search with variable elimination in constraint optimization
  and constraint satisfaction problems.
\newblock {\em Constraints}, 8(3):303--326, 2003.

\bibitem{Lecoutre2009}
Christophe Lecoutre.
\newblock {\em Constraint Networks: Techniques and Algorithms}.
\newblock ISTE/Wiley, 2009.

\bibitem{DBLP:conf/sara/MouelhiJT13}
Achref~El Mouelhi, Philippe J{\'{e}}gou, and Cyril Terrioux.
\newblock Microstructures for {CSP}s with constraints of arbitrary arity.
\newblock In Alan~M. Frisch and Peter Gregory, editors, {\em Proceedings of the
  Tenth Symposium on Abstraction, Reformulation, and Approximation, {SARA}
  2013, 11-12 July 2013, Leavenworth, Washington, {USA.}} {AAAI}, 2013.

\bibitem{DBLP:conf/cpaior/MouelhiJTZ13}
Achref~El Mouelhi, Philippe J{\'{e}}gou, Cyril Terrioux, and Bruno Zanuttini.
\newblock Some new tractable classes of {CSP}s and their relations with
  backtracking algorithms.
\newblock In Carla~P. Gomes and Meinolf Sellmann, editors, {\em Integration of
  {AI} and {OR} Techniques in Constraint Programming for Combinatorial
  Optimization Problems, 10th International Conference, {CPAIOR} 2013, Yorktown
  Heights, NY, USA, May 18-22, 2013. Proceedings}, volume 7874 of {\em Lecture
  Notes in Computer Science}, pages 61--76. Springer, 2013.

\bibitem{DBLP:conf/cp/2014}
Barry O'Sullivan, editor.
\newblock {\em Principles and Practice of Constraint Programming - 20th
  International Conference, {CP} 2014, Lyon, France, September 8-12, 2014.
  Proceedings}, volume 8656 of {\em Lecture Notes in Computer Science}.
  Springer, 2014.

\bibitem{Prosser93:hybrid}
Patrick Prosser.
\newblock Hybrid algorithms for the constraint satisfaction problem.
\newblock {\em Computational Intelligence}, 9(3):268--299, November 1993.

\bibitem{Rossi06:handbook}
Francesca Rossi, Peter van Beek, and Toby Walsh, editors.
\newblock {\em The Handbook of Constraint Programming}.
\newblock Elsevier, 2006.

\bibitem{Salamon2008:perfect}
Andr{\'a}s~Z. Salamon and Peter~G. Jeavons.
\newblock Perfect constraints are tractable.
\newblock In {\em Proceedings of the 14th International Conference on
  Principles and Practice of Constraint Programming, {CP} 2008, Sydney,
  Australia, 14--18 September}, volume 5202 of {\em Lecture Notes in Computer
  Science}, pages 524--528. Springer, 2008.

\bibitem{Takhanov10:dichotomy}
Rustem Takhanov.
\newblock A dichotomy theorem for the general minimum cost homomorphism
  problem.
\newblock In Jean-Yves Marion and Thomas Schwentick, editors, {\em STACS},
  volume~5 of {\em LIPIcs}, pages 657--668. Schloss Dagstuhl - Leibniz-Zentrum
  fuer Informatik, 2010.

\bibitem{DBLP:journals/dam/WilliamsR07}
Brian~C. Williams and Robert~J. Ragno.
\newblock Conflict-directed {A}$^{\mbox{*}}$ and its role in model-based
  embedded systems.
\newblock {\em Discrete Applied Mathematics}, 155(12):1562--1595, 2007.

\end{thebibliography}

\end{document}